\let\endproof\relax
\newtheorem{theorem}{Theorem}
\newtheorem{lemma}{Lemma}
\newtheorem{proposition}{Proposition}
\renewcommand{\section}{\@startsection{section}{1}{0mm}%
                                   {2ex plus -1ex minus -.2ex}%
                                   {1.3ex plus .2ex}%
                                   {\normalfont\Large\bfseries}}%
 \renewcommand{\subsection}{\@startsection{subsection}{2}{0mm}%
                                     {1ex plus -1ex minus -.2ex}%
                                     {1ex plus .2ex }%
                                     {\normalfont\large\bfseries}}%
 \renewcommand{\subsubsection}{\@startsection{subsubsection}{3}{0mm}%
                                     {1ex plus -1ex minus -.2ex}%
                                     {1ex plus .2ex }%
                                     {\normalfont\normalsize\bfseries}}
 \renewcommand\paragraph{\@startsection{paragraph}{4}{0mm}%
                                    {1ex \@plus1ex \@minus.2ex}%
                                    {-1em}%
                                    {\normalfont\normalsize\bfseries}}
\titlespacing{\section}{2pt}{*0}{*0}
\titlespacing{\subsection}{1pt}{*0}{*0}
\titlespacing{\subsubsection}{0pt}{*0}{*0}
\begin{document}
	\title{Approximation Schemes for Multiperiod Binary Knapsack Problems} 
	\author{Zuguang Gao, John R. Birge, and Varun Gupta\footnote{All authors are with the University of Chicago. Emails: \{zuguang.gao, john.birge, varun.gupta\}@chicagobooth.edu.}}
	\date{}

	\maketitle
	
	\begin{abstract}
		\begin{onehalfspace} 
			An instance of the multiperiod binary knapsack problem (MPBKP) is given by a horizon length $T$, a non-decreasing vector of knapsack sizes $(c_1, \ldots, c_T)$ where $c_t$ denotes the cumulative size for periods $1,\ldots,t$, and a list of $n$ items. Each item is a triple $(r, q, d)$ where $r$ denotes the reward or value of the item, $q$ its size, and $d$ denotes its time index (or, deadline). The goal is to choose, for each deadline $t$, which items to include to maximize the total reward, subject to the constraints that for all $t=1,\ldots,T$, the total size of selected items with deadlines at most $t$ does not exceed the cumulative capacity of the knapsack up to time $t$. We also consider the multiperiod binary knapsack problem with soft capacity constraints (MPBKP-S) where the capacity constraints are allowed to be violated by paying a penalty that is linear in the violation. The goal of MPBKP-S is to maximize the total profit, which is the total reward of the selected items less the total penalty. Finally, we consider the multiperiod binary knapsack problem with soft stochastic capacity constraints (MPBKP-SS), where the non-decreasing vector of knapsack sizes $(c_1, \ldots, c_T)$ follow some arbitrary joint distribution but we are given access to the profit as an oracle, and we must choose a subset of items to maximize the total expected profit, which is the total reward less the total expected penalty.


For MPBKP, we exhibit a fully polynomial-time approximation scheme that achieves $(1+\epsilon)$ approximation with runtime $\tilde{\mathcal{O}}\left(\min\left\{n+\frac{T^{3.25}}{\epsilon^{2.25}},n+\frac{T^{2}}{\epsilon^{3}},\frac{nT}{\epsilon^2},\frac{n^2}{\epsilon}\right\}\right)$; for MPBKP-S, the $(1+\epsilon)$ approximation can be achieved in $\mathcal{O}\left(\frac{n\log n}{\epsilon}\cdot\min\left\{\frac{T}{\epsilon},n\right\}\right)$. To the best of our knowledge, our algorithms are the first FPTAS for any multiperiod version of the Knapsack problem since its study began in 1980s. For MPBKP-SS, we prove that a natural greedy algorithm is a $2$-approximation when items have the same size. Our algorithms also provide insights on how other multiperiod versions of the knapsack problem may be approximated.

		\end{onehalfspace}
	\end{abstract}
	
	\thispagestyle{empty}
	\setlength{\parskip}{.1in}
	\maketitle

	\clearpage
	\setcounter{page}{1}

	\section{Introduction}
	
	\label{sec:typesetting-summary}
Knapsack problems are a classical category of combinatorial optimization problems, and have been studied for more than a century~\citep{mathews1896partition}. They have found wide applications in various fields~\citep{strusevich2005knapsack}, such as selection of investments and portfolios, selection of assets, finding the least wasteful way to cut raw materials, etc.
One of the most commonly studied problem is the so-called \emph{0-1 knapsack problem}, where a set of $n$ items are given, each with a reward and a size, and the goal is to select a subset of these items to maximize the total reward, subject to the constraint that the total size may not exceed some knapsack capacity. It is well-known that the 0-1 knapsack problem is NP-complete. However, the problem was shown to possess \emph{fully polynomial-time approximation schemes (FPTASs)}, i.e., there are algorithms that achieve $(1+\epsilon)$ factor of the optimal value for any $\epsilon\in (0,1)$, and take polynomial time in $n$ and $1/\epsilon$.

The first published FPTAS for the 0-1 knapsack problem was due to~\cite{ibarra1975fast} where they achieve a time complexity $\widetilde{\Ocal}\left(n+(1/\epsilon^4)\right)$ by dividing the items into a class of ``large'' items and a class of ``small'' items. The problem is first solved for large items only, using the dynamic program approach, with rewards rounded down using some discretization quantum (chosen in advance), and the small items are added later. \cite{lawler1979fast} proposed a more nuanced discretization method to improve the polylogarithmic factors. Since then, improvements have been made on the dynamic program for large items~\citep{kellerer2004improved,rhee2015faster}. Most recently, the FPTAS has been improved to $\widetilde{\Ocal}\left(n + (1/\epsilon)^{9/4}\right)$ in~\cite{jin:LIPIcs:2019:10652}.

In this paper, we study {  three} extensions of the 0-1 knapsack problem. First, we consider a multiperiod version of the 0-1 knapsack problem, which we call the \emph{multiperiod binary knapsack problem (MPBKP)}. 
There is a horizon length $T$ and a vector of knapsack sizes $(c_1,\ldots,c_T)$, where $c_t$ is the cumulative size for periods $1,\ldots,t$ and is non-decreasing in $t$. We are also given a list of $n$ items, each associated with a triple $(r, q, d)$ where $r$ denotes the reward or value of the item, $q$ its size, and $d$ denotes its time index (or, deadline).
The goal is to choose a reward maximizing set of items to include such that for any $t=1,\ldots,T$, the total size of selected items with deadlines at most $t$ does not exceed the cumulative capacity of the knapsack up to time $t$. The application that motivates this problem is a seller who produces $(c_t - c_{t-1})$ units of a good in time period $t$, and can store unsold goods for selling later. The seller is offered a set of bids, where each bid includes a price ($r$), a quantity demanded ($q$), and a time at which this quantity is needed. The problem of deciding the revenue maximizing subset of bids to accept is exactly MPBKP.

The second extension we consider is the \emph{multiperiod binary knapsack problem with soft capacity constraints (MPBKP-S)} where {  at each period the capacity constraint is allowed to be violated by paying a penalty that is linear in the violation.} The goal of MPBKP-S is then to maximize the total profit, which is the total reward of the selected items less the total penalty. In this case, the seller can procure goods from outside at a certain rate if his supply is not enough to fulfill the bids he accepts, and wants to maximize his profit.

{ 
The third extension we consider is the \emph{multiperiod binary knapsack problem with soft stochastic capacity constraints (MPBKP-SS)} where the non-decreasing vector of knapsack sizes $(c_1, \ldots, c_T)$ follows some arbitrary joint distribution given as the set of sample paths of the possible realizations and their probabilities. We select the items \emph{before} realizations of any of these random incremental capacities to maximize the total \emph{expected} profit, which is the total reward of selected items less the total expected penalty. In this case, the production of the seller at each time is random, but he has to select a subset of bids before realizing his supply. Again, the seller can procure capacity from outside at a certain rate if his realized supply is not enough to fulfill the bids he accepts, and wants to maximize his expected profit.
}

We note that MPBKP is also related to a number of other multiperiod versions of the knapsack problem in literature. The multiperiod knapsack problem (MPKP) proposed by~\cite{faaland1981multiperiod} has the same structure as  MPBKP, except that in~\cite{faaland1981multiperiod}, each item can be repeated multiple times, i.e., the decision variables for each item is not binary, but any nonnegative integer (in the single-period case, this is called the unbounded knapsack problem~\citep{andonov2000unbounded}). To the best of our knowledge, there has been no further studies on MPKP since~\cite{faaland1981multiperiod}. In the multiple knapsack problem (MKP), there are $m$ knapsacks, each with a different capacity, and items can be inserted to any knapsacks (subject to its capacity constraints). It has been shown in~\cite{chekuri2005polynomial} that MKP does not admit an FPTAS, but an efficient polynomial time approximation scheme (EPTAS) has been found in~\cite{10.1007/978-3-642-27660-6_26}, with runtime depending polynomially on $n$ but exponentially on $1/\epsilon$. The incremental knapsack problem (IKP) is another multiperiod version of the knapsack problem~\citep{10.1007/11764298_4}, where the knapsack capacity increases over time, and each selected item generates a reward on every period after its insertion, but this reward is discounted over time. Unlike MPBKP, items do not have deadlines and can be selected anytime throughout the $T$ periods. A PTAS for the IKP when the discount factor is~$1$ (time invariant, referred to as IIKP) and $T=\Ocal\left(\sqrt{\log n}\right)$ has been found in~\cite{bienstock2013approximation}, and it has been shown that IIKP is strongly NP-hard. Later,~\cite{faenza2018ptas} proposed the first PTAS for IIKP regardless of $T$, and~\cite{della2019approximating} proposed an PTAS for IKP when $T$ is a constant. Most recent developments of IKP include~\cite{aouad2020approximate,faenza2020approximation}. Other similar problems and/or further extensions include the multiple-choice multiperiod knapsack problem~\citep{randeniya1994multiple,lin2004multiple,lin2010dynamic}, the multiperiod multi-dimensional knapsack problem~\citep{lau2004multi}, the multiperiod precedence-constrained knapsack problem~\citep{moreno2010large,samavati2017methodology}, to name a few.

Our main contributions of this paper are two-fold. First, from the perspective of model formulation, we propose the MPBKP and its generalized versions MPBKP-S and MPBKP-SS. {  Despite the fact that there are a number of multiperiod/multiple versions of knapsack problems, including those mentioned above (many of which are strongly NP-hard), the MPBKP and MPBKP-S we proposed here are the first to admit an FPTAS among any multiperiod versions of the classical knapsack problem since their initiation back in 1980s.} With these results, it is thus interesting to see where the boundary lies between these multiperiod problems that admit an FPTAS and those problems that do not admit an FPTAS. { Second, the algorithms we propose for both MPBKP and MPBKP-S are generalized from the ideas of solving 0-1 knapsack problems, but with nontrivial modifications as we will address in the following sections. For MPBKP-SS, we propose a greedy algorithm that achieves $2$-approximation for the special case when all items have the same size.}

The rest of this paper is organized as follows. In Section~\ref{sec:form} we formally write the three problems in mathematical programming form. The FPTAS for MPBKP is proposed in Section~\ref{sec:MPBKP} and the FPTAS for MPBKP-S is proposed in Section~\ref{sec:approx2}. Alternative algorithms for both problems are also provided in Apendix. A greedy algorithm for a special case of MPBKP-SS is proposed in Section~\ref{sec:unit-MPBKP-SS}. All proofs are left to Appendix but we provide proof ideas in the main body.

	
	
%
	
\section{Problem Formulation and Main Results}\label{sec:form}
In this section, we formally introduce the Multiperiod Binary Knapsack Problem (MPBKP), as well as the generalized versions: the Multiperiod Binary Knapsack Problem with Soft capacity constraints (MPBKP-S), {  and Multiperiod Binary Knapsack Problem with Soft Stochastic Capacity constraints (MPBKP-SS)}.


\subsection{Multiperiod binary Knapsack problem (MPBKP)}

An instance of MPBKP is given by a set of $n$ items, each associated with a triple $(r_i,q_i,d_i)$, and a sequence of knapsack capacities $\{c_1,\ldots,c_T\}$. For each item $i$, we get reward $r_i$ if and only if $i$ is included in the knapsack by time $d_i$. We assume that $r_i\in \NN$, $q_i\in\NN$ and $d_i\in [T]:=\{1,\ldots, T\}$. The knapsack capacity at time $t$ is $c_t$, and by convention $c_0=0$. The MPBKP can be written in the integer program (IP) form:
\begin{subequations}\label{MPBKP}
	\begin{align}
	&\max_x z = \sum_{i=1}^{n} r_ix_i\\
	&\text{ s.t. } \sum_{j: d_j\le t} q_jx_j\le c_{t},\quad \forall t=1,\ldots, T\\
	&\qquad x_i\in\{0,1\},\quad \forall i=1,\ldots,n
	\end{align}
\end{subequations}
where $x_i$'s are binary decision variables, i.e., $x_i$ is~$1$ if item $i$ is included in the knapsack and is~$0$ otherwise. In~\eqref{MPBKP}, we aim to pick a subset of items to maximize the objective function, which is the total reward of picked items, subject to the constraints that by each time $t$, the total size of picked items with deadlines up to $t$ does not exceed the knapsack capacity at time $t$, which is $c_{t}$. 
For each $t\in [T]$, let $\mathcal{I}(t):=\{i\in [n]\mid d_i=t\}$ denote the set of items with deadline $t$. Note that without loss of generality, we may assume that $\mathcal{I}(t)\ne\emptyset, \forall t$ and $c_t>0$.
We further note that the decision variables $x_i$'s in~\eqref{MPBKP} are binary, but if we relax this to any nonnegative integers, the problem becomes the so-called multiperiod knapsack problem (MPKP) as in~\cite{faaland1981multiperiod}. 
Our first main result is the following theorem on MPBKP.
\begin{theorem}\label{mainthm1}
	An FPTAS exists for MPBKP. Specifically, there exists a deterministic algorithm that achieves $(1+\epsilon)$-approximation in $\tilde{\mathcal{O}}\left(\min\left\{n+\frac{T^{3.25}}{\epsilon^{2.25}},n+\frac{T^{2}}{\epsilon^{3}},\frac{nT}{\epsilon^2},\frac{n^2}{\epsilon}\right\}\right)$.
\end{theorem}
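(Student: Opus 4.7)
The plan is to generalize the functional-approximation paradigm of \cite{chan:OASIcs:2018:8299,jin:LIPIcs:2019:10652} to the multiperiod setting. For each $t\in[T]$, I define $f_t(c)$ to be the maximum reward obtainable using only items with deadline at most $t$ subject to obeying the first $t-1$ cumulative capacity constraints and using cumulative size at most $c$ by period $t$. Letting $g_t$ denote the single-period 0-1 knapsack reward function built from $\mathcal{I}(t)$, a natural recursion reads $f_t(c)=\max_{0\le c'\le\min(c,c_{t-1})}\{f_{t-1}(c')+g_t(c-c')\}$, i.e., $f_t$ is the truncated $(\max,+)$-convolution of $f_{t-1}$ and $g_t$, restricted to the domain $[0,c_t]$. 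The desired optimum is $f_T(c_T)$.

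Since exact $(\max,+)$-convolution has no known truly subquadratic algorithm, I would represent each $f_t$ and $g_t$ approximately as a monotone non-decreasing step function whose values lie on the geometric grid $\{0\}\cup\{\lfloor(1+\delta)^k\rfloor:k\ge 0\}$ for $\delta=\Theta(\epsilon/T)$, storing only its breakpoints. Each convolution followed by re-rounding introduces a multiplicative loss of $(1+\delta)$, so after $T$ stages the accumulated error is $(1+\delta)^T\le 1+\epsilon$, yielding the FPTAS guarantee. A standard preprocessing pass (greedy lower bound on $\mathrm{OPT}$, scaling rewards, and discarding items whose reward is negligible compared to $\epsilon\cdot\mathrm{OPT}/n$) reduces the relevant reward range to a polynomial in $n/\epsilon$, so the step functions carry only $\tilde{O}(1/\delta)=\tilde{O}(T/\epsilon)$ distinct values.

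The four terms inside the $\min$ correspond to four different ways to execute the convolution and represent the intermediate functions. The $\tilde{\mathcal{O}}(n+T^{3.25}/\epsilon^{2.25})$ and $\tilde{\mathcal{O}}(n+T^2/\epsilon^3)$ bounds follow by plugging the rounded step-function representation into an approximate $(\max,+)$-convolution subroutine in the style of Chan--Lewenstein / Jin, which exploits the fact that the rounded function takes only $\tilde{O}(T/\epsilon)$ distinct values; the two different exponents arise from different convolution subroutines. The $\tilde{\mathcal{O}}(nT/\epsilon^2)$ bound comes from a direct dynamic program in which the cumulative capacity axis is discretized into $\tilde{O}(T/\epsilon)$ levels and all $n$ items are processed once per relevant period. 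The $\tilde{\mathcal{O}}(n^2/\epsilon)$ bound comes instead from discretizing the reward axis (Lawler-style), keeping a table of minimum total size required to achieve each rounded reward at each deadline, and transitioning across periods via monotone merging.

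The main obstacle will be simultaneously controlling (i) the approximation error that accumulates over $T$ truncated $(\max,+)$-convolutions and (ii) the representation size at each stage, while (iii) respecting the per-period feasibility cut at $c_{t-1}$ (absent in single-period FPTASs). The re-rounding step must be shown to be idempotent on the grid so breakpoints do not proliferate, and the truncation at $c_{t-1}$ inside the convolution must be implemented without destroying monotonicity or inflating the breakpoint count; together these give the $\tilde{O}(T/\epsilon)$ invariant needed for the runtime analysis. Once these invariants are in place, verifying $f_T(c_T)\ge(1+\epsilon)^{-1}\mathrm{OPT}$ is a routine induction on $t$ propagating the $(1+\delta)$ per-stage loss.
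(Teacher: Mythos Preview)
Your high-level plan for the first two terms matches the paper: define $f_t$ via truncated $(\max,+)$-convolution with the single-period knapsack functions $g_t=f_{\mathcal{I}(t)}$, geometrically round after each stage to keep the step-function complexity at $\tilde{O}(1/\delta)$, and substitute $\delta=\Theta(\epsilon/T)$ at the end. One clarification: in the $\tilde{O}(n+T^{3.25}/\epsilon^{2.25})$ algorithm, the $2.25$ exponent does \emph{not} come from a clever convolution subroutine. Rather, Jin's $\tilde{O}(|\mathcal{I}(t)|+(1/\delta)^{2.25})$ single-period FPTAS is applied to approximate each $g_t$ directly from the items in $\mathcal{I}(t)$; the resulting step function already has complexity $\tilde{O}(1/\delta)$, and the cross-period convolution $\tilde{f}_{t-1}\oplus\tilde{g}_t$ is then done \emph{naively} in $\tilde{O}((1/\delta)^2)$ time. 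Summing over $t$ and setting $\delta=\epsilon/T$ gives the bound. The $\tilde{O}(n+T^2/\epsilon^3)$ term uses a different decomposition: each $g_t$ is split into $\tilde{O}(1/\epsilon)$ pieces, one per rounded reward value, each of which is $r$-uniform and pseudo-concave and can therefore be convolved in near-linear time via Chan's lemma.

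There is a genuine gap in your explanation of the $\tilde{O}(nT/\epsilon^2)$ term. You propose a DP ``in which the cumulative capacity axis is discretized into $\tilde{O}(T/\epsilon)$ levels.'' But capacities and item sizes are additive, and rounding them to a geometric grid does not preserve feasibility: the sum of rounded sizes bears no controlled multiplicative relationship to the true sum, so you cannot bound the error. The paper obtains both the $\tilde{O}(nT/\epsilon^2)$ and $\tilde{O}(n^2/\epsilon)$ bounds by observing that MPBKP is a special case of MPBKP-S and invoking the MPBKP-S FPTASs, which run a DP on the \emph{profit} axis (your description of the fourth term is essentially this): with quantum $\kappa$, one tracks the maximum leftover capacity achievable at each rounded profit level $p$. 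The $n^2/\epsilon$ algorithm takes $\kappa=\Theta(\epsilon P_0/n)$, giving $O(n/\epsilon)$ profit levels and $n$ items. The $nT/\epsilon^2$ algorithm instead partitions items into ``large'' (profit $\ge\epsilon P_0/(2T)$) and ``small''; it runs the profit-axis DP only on large items over $O(T/\epsilon^2)$ profit levels, and at each period greedily packs small items by density into the remaining capacity. The key lemma is that greedy on small items loses at most one small item's profit per period, hence at most $T\cdot\epsilon P_0/(2T)=\epsilon P_0/2$ in total. This large/small split with per-period greedy filling is the idea missing from your sketch of the third bound.
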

As we will see shortly, MPBKP can be viewed as a special case of MPBKP-S. In Section~\ref{sec:MPBKP}, we will provide an approximation algorithm for MPBKP with runtime $\tilde{\mathcal{O}}\left(n+\frac{T^{3.25}}{\epsilon^{2.25}}\right)$. An alternative algorithm with runtime $\tilde{\Ocal}\left(n+\frac{T^{2}}{\epsilon^{3}}\right)$ is provided in Appendix~\ref{appT2}. In Section~\ref{sec:approx2}, we will provide an approximation algorithm for MPBKP-S with runtime $\tilde{O}\left(\frac{nT}{\epsilon^2}\right)$, which is also applicable to MPBKP.

\subsection{Multiperiod binary Knapsack problem with soft capacity constraints (MPBKP-S)}
In MPBKP-S, the capacity constraints in~\eqref{MPBKP} no longer exist, i.e., the total size of selected items at each time step is allowed to be greater than the total capacity up to that time, however, there is a penalty rate $B_t\in\NN$ for each unit of overflow at period $t$. We assume that $B_t>\max_{i\in[n]:d_i\le t}\frac{r_i}{q_i}$ to avoid trivial cases (any item with $\frac{r_i}{q_i}\ge B_t$ and $d_i\le t$ will always be added to generate more profit). In the IP form, MPBKP-S can be written as 
\begin{subequations}\label{MPBKP-S3}
	\begin{align}
	&\max_{x,y} \sum_{i\in[n]}r_ix_i - \sum_{t=1}^TB_ty_t\\
	&\text{s.t. } \sum_{i\in\Ical(1)\cup\cdots\cup\Ical(t)}q_ix_i - \sum_{s=1}^ty_s \le c_t,\quad \forall t: 1\le t\le T\\
	&\qquad x_i\in\{0,1\},\quad y_t\ge 0,
	\end{align}
\end{subequations}
where the  decision variables $y_t, t=1,\ldots, T$ represent the units of overflow at time~$t$, and {  $c_t-c_{t-1}$} is the incremental capacity at time~$t$. The objective is to choose a subset of the $n$ items to maximize the total profit, which is the sum of the rewards of the selected items  minus the sum of penalty paid at each period, and the constraints enforce that the total size of accepted items by the end of each period must not exceed the sum of the cumulative capacity and the units of overflow. Our second main result is the following theorem on MPBKP-S.
\begin{theorem}\label{mainthm2}
	An FPTAS exists for MPBKP-S. Specifically, there exists an algorithm which achieves $(1+\epsilon)$-approximation in  ${\mathcal{O}}\left(\frac{n\log n}{\epsilon}\cdot \min\left\{\frac{T}{\epsilon}, n\right\}\right)$.
\end{theorem}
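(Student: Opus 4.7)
The plan is to adapt the classical large/small item partition of Ibarra--Kim~\citep{ibarra1975fast} to the multiperiod soft-capacity setting, propagating a compact table of partial solutions across periods. First, I would compute a constant-factor lower bound $\widehat{Z}$ on the optimal profit $Z^*$ via a straightforward $T$-period greedy. Setting the discretization scale $\delta = \Theta(\epsilon\widehat{Z}/n)$, I would declare an item \emph{large} if its reward is at least $\delta$ and \emph{small} otherwise. Since any feasible solution contains at most $n$ large items, rounding large rewards down to multiples of $\delta$ produces $\mathcal{O}(n/\epsilon)$ distinct cumulative profit values to track.

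The main algorithm then sweeps $t=1,\ldots,T$ and maintains a list $L_t$ of Pareto-optimal states $(\tilde r, q)$, where $\tilde r$ is an attainable rounded profit using items with deadline at most $t$ and $q$ is the minimum total selected size achieving $\tilde r$. The transition from $L_t$ to $L_{t+1}$ does three things per state: (i) a standard $0/1$ knapsack DP step over large items of deadline $t+1$, extending $(\tilde r, q)$ to new candidates; (ii) a greedy fill using small items of deadline $t+1$, sorted once upfront by density $r_i/q_i$ and admitted in order so long as density exceeds the effective penalty rate $B_{t+1}$; and (iii) debiting the resulting overflow penalty $B_{t+1}\cdot[q-c_{t+1}]^+$ from $\tilde r$. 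States with equal rounded profit are consolidated by minimum $q$, and the greedy step (ii) is evaluated in $\mathcal{O}(\log n)$ per state via prefix sums plus binary search over the pre-sorted small-item list.

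For correctness, the error splits into two independent pieces. Rounding large rewards to multiples of $\delta$ contributes at most $n\delta \leq \tfrac{1}{2}\epsilon Z^*$ of loss since at most $n$ large items are selected. The greedy fill of small items at each period loses at most one item's reward; with $\delta$ scaled so that any small item's reward is at most $\tfrac{\epsilon Z^*}{2T}$, the cumulative $T$-period small-item error is at most $\tfrac{1}{2}\epsilon Z^*$, giving an overall $(1+\epsilon)$-approximation. For the runtime, I would argue $|L_t| = \mathcal{O}(\min\{T/\epsilon,\, n\})$: the $n$ bound is trivial from the size axis of the Pareto frontier, while the $T/\epsilon$ bound holds because each period can add only $\mathcal{O}(1/\epsilon)$ genuinely non-dominated rounded profit levels relative to the incoming states. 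Combined with $\mathcal{O}(n\log n/\epsilon)$ per-period update work (pre-sorting plus $\mathcal{O}(\log n)$ greedy evaluations on $\mathcal{O}(n/\epsilon)$ candidate extensions), this yields the claimed $\mathcal{O}\!\left(\frac{n\log n}{\epsilon}\cdot\min\{T/\epsilon,\, n\}\right)$ bound.

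The hardest part will be verifying the single-item loss bound for the small-item greedy in the presence of the soft penalty and carry-over residual capacity. Unlike the plain $0/1$ knapsack, where the greedy's loss is simply the next skipped item, here a small item of deadline $t$ may either consume residual capacity (benefitting later periods) or trigger immediate overflow, so the argument must show that the LP relaxation of the per-state small-item subproblem differs from its integer optimum by at most one item's reward even after accounting for cross-period effects. A secondary subtlety is the $\mathcal{O}(T/\epsilon)$ state-count bound, which requires a careful domination argument that prunes duplicates by rounded profit while preserving an optimal trajectory; I would handle this by showing that any two states with identical rounded profit can be collapsed to the one with smaller $q$ without worsening the eventual profit by more than the already-accounted-for rounding error.
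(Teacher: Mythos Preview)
Your high-level plan---partition into large and small items, run a profit-indexed DP on large items period by period, and patch in small items greedily---is indeed the paper's approach for the $\mathcal{O}(nT\log n/\epsilon^2)$ branch. However, there are two genuine gaps that would prevent your argument from going through as written.

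\textbf{The state-count bound is not justified, and in fact the paper needs two separate algorithms.} You assert $|L_t|=\mathcal{O}(\min\{T/\epsilon,n\})$ for a single choice of discretization $\delta=\Theta(\epsilon\widehat{Z}/n)$. Neither bound is substantiated. The claim ``the $n$ bound is trivial from the size axis of the Pareto frontier'' fails because item sizes are arbitrary integers, so there is no a priori bound of $n$ on the number of distinct total sizes. The claim that ``each period can add only $\mathcal{O}(1/\epsilon)$ genuinely non-dominated rounded profit levels'' also fails: with quantum $\delta=\Theta(\epsilon\widehat{Z}/n)$ there are $\Theta(n/\epsilon)$ profit levels in total, and nothing prevents a single period from populating most of them. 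The paper does \emph{not} obtain both runtimes from one algorithm. It runs two distinct DPs with different quanta: one with $\kappa=\Theta(\epsilon^2 P_0/T)$ and a large/small threshold of $\Theta(\epsilon P_0/T)$, giving $\Theta(T/\epsilon^2)$ profit levels and runtime $\mathcal{O}(nT\log n/\epsilon^2)$; and a second (Appendix~\ref{simple-MPBKP-S}) with $\kappa=\Theta(\epsilon P_0/n)$ and no small items, giving $\Theta(n/\epsilon)$ levels and runtime $\mathcal{O}(n^2\log n/\epsilon)$. The $\min$ in the theorem comes from taking the better of the two. Relatedly, your proposal uses the same $\delta$ both as the rounding quantum and as the large/small threshold, but then asserts small rewards are at most $\epsilon Z^*/(2T)$; with $\delta=\Theta(\epsilon\widehat{Z}/n)$ that simply does not hold.

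\textbf{The DP state is wrong for the soft-capacity objective.} You track $(\tilde{r},q)$ with $q$ the cumulative selected size and keep the minimum $q$ per rounded profit. But in MPBKP-S the relevant quantity for future decisions is the \emph{leftover capacity} $c_t+Y_t-q$, where $Y_t$ is the overflow already purchased; two states with identical $(\tilde{r},q)$ can have different $Y_t$ (hence different leftover capacity), and the one with smaller $q$ need not dominate. The paper's DP therefore indexes by rounded profit and stores the \emph{maximum leftover capacity} $\hat{A}_t(p)$, not minimum cumulative size; correctness (Lemma~\ref{hat2} and Lemma~\ref{tilde2}) hinges on this. Your step ``debit $B_{t+1}[q-c_{t+1}]^+$'' double-counts penalty already paid in earlier periods. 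Finally, the paper does not rely on a greedy constant-factor estimate $\widehat{Z}$ (none is established for MPBKP-S); it uses a doubling search for $P_0$, which is where the $\log n$ factor comes from.
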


In section~\ref{sec:approx2} we will present an approximation algorithm for solving MPBKP-S with time complexity $\mathcal{O}\left(\frac{nT\log n}{\epsilon^2}\right)$. An alternative FPTAS with runtime $\Ocal\left(\frac{n^2}{\epsilon}\right)$ is provided in Appendix~\ref{simple-MPBKP-S}. 
For the ease of presentation, our algorithms and analysis are presented for the case $B_t=B$, but they can be generalized to the heterogeneous $\{B_1,\ldots,B_T\}$ in a straightforward manner. It is worth noting that the algorithm for MPBKP that we introduce in section~\ref{sec:MPBKP} does not extend to MPBKP-S, and we will make this clear in the beginning of section~\ref{sec:approx2}.

{ 
\subsection{Multiperiod Binary Knapsack Problem with Soft Stochastic Capacity Constraints (MPBKP-SS)}
The MPBKP-SS formulation is similar to~\eqref{MPBKP-S3}, except that the vector of knapsack sizes $(c_1, \ldots, c_T)$ follows some arbitrary joint distribution given to the algorithm as the set of possible sample path (realization) of knapsack sizes and the probability of each sample path. We use $\omega$ to index sample paths which we denote by $\{c_t(\omega)\}$, $p(\omega)$ as the probability of sample path $\omega$, and $\Omega$ as the set of possible sample paths. The goal is to pick a subset of items before the realization of $\omega$ so as to maximize the expected total profit, which is the sum of the rewards of the selected items deducted by the total (expected) penalty. For a sample $\omega\in\Omega$ let $y_t(\omega)$ be the overflow at time $t$. Then, we can write the problem in IP form as:
\begin{subequations}\label{MPBKP-SS}
\begin{align}
&\max_{x,y} \sum_{i\in[n]}r_ix_i - \mathbb{E}_\omega\left[B_t\cdot \sum_{t=1}^Ty_t(\omega)\right]\\
&\text{s.t. } \sum_{i\in\Ical(1)\cup\cdots\cup\Ical(t)}q_ix_i - \sum_{s=1}^ty_s(\omega) \le   c_t(\omega),\quad \forall \omega\in\Omega, 1\le t\le T\\
&\qquad x_i\in\{0,1\},\quad y_t\ge 0
\end{align}
\end{subequations}

Our third main result is the following theorem on MPBKP-SS, which asserts a greedy algorithm for the special case when all items are of the same size. Details will be provided in Section~\ref{sec:unit-MPBKP-SS}.
\begin{theorem}\label{mainthm3}
	If $q_i=q$ for all $i\in[n]$, then there exists a greedy algorithm that achieves $2$-approximation for MPBKP-SS in $\Ocal\left(n^2T|\Omega|\right)$.
\end{theorem}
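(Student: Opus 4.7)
The plan is to analyze a natural add-greedy algorithm, exploiting the submodular structure of the objective that arises when all items share the same size. Since every item has size $q$, for any target vector of deadline counts $(s_1,\ldots,s_T)$ with $s_t=|S\cap\Ical(t)|$, the reward is maximized by picking the top-$s_t$ items of $\Ical(t)$ by reward. The total reward then collapses to a separable concave function $R(s)=\sum_t R_t(s_t)$, and the expected penalty is $P(s)=B\cdot\mathbb{E}_\omega[\max_t(qS_t-c_t(\omega))^+]$ with $S_t=\sum_{t'\le t}s_{t'}$. For each sample path $\omega$ the integrand is a convex non-decreasing function of the modular cumulative-count vector, so $P(s)$ is convex in $s$; equivalently, as a set function on $[n]$, $P$ is non-negative, monotone and supermodular with $P(\emptyset)=0$. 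Hence $f(S)=\sum_{i\in S}r_i-P(S)$ is (non-monotone) submodular with $f(\emptyset)=0$.

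The algorithm is the natural one: initialize $S=\emptyset$; at each iteration compute the marginal profit $\Delta_i=r_i-[P(S\cup\{i\})-P(S)]$ for every unselected item~$i$, add the one attaining $\max_i\Delta_i$ if that maximum is positive, and terminate otherwise. Evaluating one marginal takes $\Ocal(T|\Omega|)$ (one pass over each sample path, each processed in $\Ocal(T)$); with at most $n$ outer iterations scanning at most $n$ items each, the total runtime is $\Ocal(n^2T|\Omega|)$, matching the claim.

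For the $2$-approximation, let $S_g$ denote greedy's output and $S^*$ any optimum. The termination condition yields local optimality: $r_i\le P(S_g\cup\{i\})-P(S_g)$ for every $i\notin S_g$. Supermodularity of $P$ implies $\sum_{i\in A}[P(S_g\cup\{i\})-P(S_g)]\le P(S_g\cup A)-P(S_g)$ for any $A$ disjoint from $S_g$; taking $A=S^*\setminus S_g$ produces the key inequality $\sum_{i\in S^*\setminus S_g}r_i\le P(S_g\cup S^*)-P(S_g)$, which rearranges to $f(S_g\cup S^*)\le f(S_g)$. Using superadditivity of $P$ (a consequence of supermodularity together with $P(\emptyset)=0$) one obtains $P(S^*)\ge P(S^*\cap S_g)+P(S^*\setminus S_g)$ and thus $f(S^*)\le f(S^*\cap S_g)+f(S^*\setminus S_g)$. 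The second term is then bounded using the local-optimality inequality above, while the first is bounded by the profit accumulated by greedy on its trajectory, the two bounds together giving $f(S^*)\le 2f(S_g)$.

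The main obstacle is the bound on $f(S^*\cap S_g)$: submodularity alone is insufficient, since single-pass greedy is known to be arbitrarily bad for general non-monotone submodular maximization. The improvement to factor~$2$ must exploit the cumulative-interval form of $P$, namely that the penalty depends only on the cumulative counts $S_t$. The plan is to run an exchange argument in count space, pairing items in $S_g\setminus S^*$ with those in $S^*\setminus S_g$ within each deadline class and telescoping their marginal contributions along the cumulative axis; the uniform size assumption is essential here because it allows swapping items across the cumulative accounting without changing the shape of the $Q_t$'s, making the comparison clean.
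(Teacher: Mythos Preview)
Your runtime analysis and the structural observations are correct: the penalty collapses to $P(S)=B\cdot\mathbb{E}_\omega\max_t(qS_t-c_t(\omega))^+$, this set function is supermodular, and local optimality at termination combined with supermodularity indeed yields $f(S_g\cup S^*)\le f(S_g)$. The decomposition $f(S^*)\le f(S^*\cap S_g)+f(S^*\setminus S_g)$ via superadditivity is also valid.

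The gap is exactly where you flag it. Neither $f(S^*\cap S_g)\le f(S_g)$ nor $f(S^*\setminus S_g)\le f(S_g)$ follows from what you have: $S^*\cap S_g$ need not be a prefix of the greedy trajectory (so monotonicity along the trajectory does not help), and for $f(S^*\setminus S_g)$ the chain of inequalities you set up collapses to a triviality once you apply superadditivity again. Your fallback plan, ``pair items of $S_g\setminus S^*$ with items of $S^*\setminus S_g$ within each deadline class and telescope along the cumulative axis,'' is not the right pairing and is not developed enough to be evaluated; in particular, there is no reason the two sets should be matchable class-by-class.

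The paper's argument is an exchange, but the pairing is \emph{across} deadline classes, not within them. One maintains the invariant $\Pcal(S^*)\le 2\Pcal(G_k)+\Delta\Pcal(G_k,S_k^*)$, where $G_k$ is the first $k$ greedy picks and $S_k^*$ is a shrinking copy of the optimum. At step $k{+}1$, if $g_{k+1}\notin S_k^*$, one removes from $S_k^*$ two items $o',o''$ chosen so that $d_{o'}\le d_{g_{k+1}}\le d_{o''}$ and $o',o''$ are deadline-extremal within the two halves of $S_k^*$. The greedy choice plus submodularity gives $\Delta\Pcal(G_k,\{o',o''\})\le 2\,\Delta\Pcal(G_k,\{g_{k+1}\})$. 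The crucial technical lemma, which is where uniform size is actually used, is that for any disjoint $S_1,S_2$ and items $j,i,k$ with $d_j\le d_i\le d_k$ and $S_2$ split by deadlines relative to $d_i$, one has
\[
\Delta\Pcal\bigl(S_1\cup\{j,k\},\,S_2\bigr)\;\le\;\Delta\Pcal\bigl(S_1\cup\{i\},\,S_2\bigr).
\]
In words: one item with a ``middle'' deadline leaves at least as much usable capacity for the remaining optimal items as two items whose deadlines bracket theirs. This lemma is proved sample-path by sample-path via an overflow-assignment argument that matches items to the latest available capacity; equal sizes are essential because the swap must not change the shape of the residual capacity profile. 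With this lemma, the invariant propagates, and at termination the residual term $\Delta\Pcal(G_\ell,S_\ell^*)$ is nonpositive (otherwise greedy would not have stopped), giving $\Pcal(S^*)\le 2\Pcal(S_g)$. Your plan needs this deadline-bracketing lemma (or an equivalent) to close; the count-space telescoping you sketch does not supply it.
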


We further note that both MPBKP-S and MPBKP-SS are special cases of non-monotone submodular maximization which is \emph{not} non-negative, for which not many general approximations are known. In that sense, studying these problems would be an interesting direction to develop techniques for it.
}

\section{FPTAS for MPBKP}\label{sec:MPBKP}
In this section, we provide an FPTAS for the MPBKP with time complexity $\tilde{\mathcal{O}}\left(n+\frac{T^{3.25}}{\epsilon^{2.25}}\right)$. We will apply the ``functional approach'' as used in~\cite{chan:OASIcs:2018:8299}. The main idea is to use the results on function approximations~\citep{chan:OASIcs:2018:8299,jin:LIPIcs:2019:10652} as building blocks -- for each period we approximate one function that gives, for every choice of available capacity, the maximum reward obtainable by selecting items in that period. We then combine ``truncated'' version of these functions using $(\max,+)$-convolution. This idea, despite its simplicity, allows us to obtain an FPTAS for MPBKP. Such a result should not be taken as granted -- as we will see in the next section, this method does not apply for MPBKP-S, even though it is just a slight generalization of MPBKP. 

We begin with some preliminary definitions and notations. For a given set of item rewards and sizes, $\Ical = \{(r_1,q_1), \ldots, (r_{n'}, q_{n'})\}$, define the function
\begin{align}\label{eqn:func}
f_\Ical(c) := \max_{x_1,\ldots,x_{n'}}\left\{\sum_{i\in\Ical}r_ix_i\ :\ \sum_{i\in\Ical}q_ix_i\le c, \ x_1,\ldots,x_{n'}\in\{0,1\}\right\}
\end{align}
for all $c\ge 0$, and $f_\Ical(c) := -\infty$ for $c<0$. The function $f_\Ical$ is a nondecreasing step function, and the number of steps is called the \emph{complexity} of that function. Further, for any $\Ical = \Ical_1\sqcup \Ical_2$, i.e., $\Ical$ being a disjoint union of $\Ical_1$ and $\Ical_2$, we have that $f_\Ical = f_{\Ical_1}\oplus f_{\Ical_2}$, where $\oplus$ denotes the $(\max,+)$-\emph{convolution}: $(f\oplus g)(c) = \max_{c'\in\mathbb{R}}\left(f(c')+g(c-c')\right)$.

We define the \emph{truncated function} $f_\Ical^{c'}$ as follows:
\begin{align}
f_\Ical^{c'}(c) = \begin{cases}
f_\Ical(c) &  c\le c',\\
-\infty & c>c'.
\end{cases}
\end{align}
Recall that we denote the set of items with deadline $t$ by $\Ical(t)$. We next define the function $f_t$ as follows: 
\begin{align}\label{eqn:ft}
f_t := \begin{cases}
f_{\Ical(1)}^{c_1} & t=1,\\
\left(f_{t-1}\oplus f_{\Ical(t)}\right)^{c_t} & t\ge 2.
\end{cases}
\end{align}
In words, each function value of $f_t(c)$ corresponds to a feasible, in fact an optimal, solution $x$ for items with deadline at most $t$ as the next proposition shows.
\begin{proposition}\label{prop:optimalfunc}
Let $x^*$ be the optimal solution for MPBKP~\eqref{MPBKP}. We have that
the optimal value of~\eqref{MPBKP}, $\sum_{i\in[n]}r_ix_i^*$, satisfies
$
\sum_{i\in[n]}r_ix_i^*  = f_T(c_T).
$
\end{proposition}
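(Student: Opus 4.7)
The plan is to prove, by induction on $t$, a strengthening of the claim: for every $t\in[T]$ and every $c$ with $0\le c\le c_t$,
\[
 f_t(c) \;=\; \max\Bigl\{\sum_{i:\,d_i\le t} r_i x_i \;:\; x\in\{0,1\}^n,\ \sum_{j:\,d_j\le s} q_j x_j\le c_s\ \forall s<t,\ \sum_{j:\,d_j\le t} q_j x_j\le c\Bigr\},
\]
i.e.\ $f_t(c)$ is the optimal value of the restricted MPBKP over items with deadline at most $t$ where only the last cumulative capacity has been replaced by $c$. Plugging in $t=T$ and $c=c_T$ then yields the proposition.

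For the base case $t=1$, the truncated function $f_1=f_{\Ical(1)}^{c_1}$ is by definition $f_{\Ical(1)}(c)$ for $c\le c_1$, which is exactly the optimal reward from selecting items of $\Ical(1)$ subject to $\sum q_jx_j\le c$.

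For the inductive step, assume the claim for $t-1$. Fix $c\le c_t$; then
\[
 f_t(c)=(f_{t-1}\oplus f_{\Ical(t)})(c)=\max_{c'}\bigl(f_{t-1}(c')+f_{\Ical(t)}(c-c')\bigr),
\]
where only $0\le c'\le c_{t-1}$ and $c'\le c$ contribute finite values. For the $(\ge)$ direction, given any $c'$ in this range, take an optimal solution for $f_{t-1}(c')$ on the items with deadline $<t$ (feasible for cumulative caps $c_1,\ldots,c_{t-2},c'$ by the inductive hypothesis) and an optimal packing of $\Ical(t)$ of size at most $c-c'$; their union is feasible for caps $c_1,\ldots,c_{t-1},c$ and has reward $f_{t-1}(c')+f_{\Ical(t)}(c-c')$. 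For the $(\le)$ direction, take any feasible $x$ for caps $c_1,\ldots,c_{t-1},c$, and set $c':=\sum_{j:\,d_j\le t-1}q_jx_j\le c_{t-1}$; by the inductive hypothesis the deadline-$<t$ part of $x$ is bounded by $f_{t-1}(c')$, and the deadline-$t$ part fits in residual capacity $c-c'$ so is bounded by $f_{\Ical(t)}(c-c')$, giving the required inequality.

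The only subtle points are (i) ensuring the truncations match the inductive bookkeeping — specifically, that enforcing $c\le c_t$ in the truncation of $f_t$ exactly corresponds to the cumulative-capacity constraint at time $t$ in the restricted MPBKP; and (ii) verifying that the decomposition at the optimal $c'$ respects the earlier cumulative constraints $c_1,\ldots,c_{t-2}$, which is handled automatically by invoking the inductive hypothesis on $f_{t-1}$ at $c'$. Neither step is difficult, so I do not expect any serious obstacle; the argument is essentially a clean bookkeeping exercise that the $(\max,+)$-convolution together with truncation at $c_t$ implements exactly the splitting of a feasible MPBKP solution into its pre-$t$ and deadline-$t$ parts.
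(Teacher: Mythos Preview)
Your proposal is correct and follows essentially the same approach as the paper's own proof: both argue by induction on the time horizon, strengthening the statement to ``$f_t(c)$ equals the optimum of the $t$-period MPBKP with the last cumulative capacity replaced by $c$,'' then split any feasible solution at the last period and use the truncation at $c_{t-1}$ to guarantee feasibility of the pre-$t$ part. Your write-up is in fact a bit more explicit than the paper's about the two directions and about why only $0\le c'\le \min\{c,c_{t-1}\}$ contributes to the $(\max,+)$-convolution, but the underlying argument is identical.
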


Proposition~\ref{prop:optimalfunc} implies that, to obtain an approximately optimal solution for MPBKP~\eqref{MPBKP}, it is sufficient to have a good approximation for the function
\begin{align}
f_T = \left(\cdots\left(\left(f_{\Ical(1)}^{c_1}\oplus f_{\Ical(2)}\right)^{c_2}\oplus f_{\Ical(3)}\right)^{c_3}\cdots\oplus f_{\Ical(T)}\right)^{c_{T}}.
\end{align}

We say that a function $\tilde{f}$ approximates the nonnegative function $f$ with factor $1+\epsilon$ if $\tilde{f}(c)\le f(c)\le (1+\epsilon)\tilde{f}(c)$ for all $c\in\mathbb{R}$. It should be clear that if $\tilde{f}$ approximates $f$ with factor $1+\epsilon$ and $\tilde{g}$ approximates $g$ with factor $1+\epsilon$, then $\tilde{f}\oplus\tilde{g}$ approximates $f\oplus g$ with factor $1+\epsilon$. We then introduce the following result from~\cite{jin:LIPIcs:2019:10652} for 0-1 Knapsack problem. 
\begin{lemma}[\cite{jin:LIPIcs:2019:10652}]\label{lem:01}
Given a set $\Ical=\{(r_1,q_1),\ldots,(r_n,q_n)\}$, we can obtain $\tilde{f}_{\Ical}$ that approximates $f_\Ical$ (defined in~\eqref{eqn:func}) with factor $1+\epsilon$ and complexity $\tilde{O}\left(\frac{1}{\epsilon}\right)$ in $\tilde{O}\left(n+\left(1/\epsilon\right)^{2.25}\right)$.
\end{lemma}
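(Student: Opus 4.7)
The plan is to invoke the approach of Jin (2019), reducing the approximation of $f_\Ical$ to a short sequence of $(\max,+)$-convolutions on monotone step functions of low complexity. I would begin by computing in $O(n)$ time a $2$-approximation $R$ of $\max_c f_\Ical(c)$ via the standard greedy rule for $0/1$ knapsack (sort by $r_i/q_i$, take the prefix, and compare to the single heaviest-reward item). Items with reward below $\epsilon R/n$ contribute at most $O(\epsilon R)$ in total and may be discarded; items with reward above $R$ are individually large and can be handled in an auxiliary pass. After this normalization every surviving item has reward in $[\epsilon R/n,\, R]$.

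Next I would partition the items into $L = O(\epsilon^{-1}\log(n/\epsilon))$ reward classes, with class $\Ical_k$ containing items whose reward lies in $[(1+\epsilon)^k \epsilon R/n,\, (1+\epsilon)^{k+1} \epsilon R/n)$. Rounding rewards in $\Ical_k$ down to the lower endpoint costs only a $(1+\epsilon)$ factor; since all items of a class then share a common reward, $f_{\Ical_k}$ is determined by sorting items of $\Ical_k$ by $q_i$ and choosing a prefix. Because the total reward from class $k$ cannot profitably exceed $O(R)$, only $\tilde O((1+\epsilon)^{-k}\cdot \epsilon^{-1})$ choices of ``how many items from $\Ical_k$'' are relevant; after a further rounding of the capacity axis to geometrically spaced breakpoints, $f_{\Ical_k}$ is approximated by a step function $\tilde f_k$ of complexity $\tilde O(1/\epsilon)$.

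Finally, I would combine $\{\tilde f_k\}_{k=1}^L$ using a balanced binary $(\max,+)$-convolution tree. At every internal node, merging two step functions of complexity $s$ and re-rounding rewards to powers of $1+\epsilon/L$ yields a step function of complexity $\tilde O(1/\epsilon)$ again, so intermediate sizes do not blow up as we move up the tree. The key technical ingredient is Jin's subquadratic $(\max,+)$-convolution for monotone step functions, which convolves two such functions of complexity $s$ in time $\tilde O(s^{9/4})$; applied at $O(L)$ nodes across the tree this gives total running time $\tilde O((1/\epsilon)^{9/4})$, plus the $O(n)$ preprocessing, matching the claimed $\tilde O(n + (1/\epsilon)^{2.25})$ bound.

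The main obstacle is the subquadratic $(\max,+)$-convolution routine: a naive convolution of two step functions of complexity $\tilde O(1/\epsilon)$ costs $\tilde O(1/\epsilon^2)$, short of the target by an exponent of $1/4$. Achieving the sharper $\tilde O(s^{9/4})$ bound relies on a delicate combinatorial argument combining rectangle-stacking with Chan's additive-combinatorics-based decomposition of $(\max,+)$-convolution, together with an amortized accounting that ensures the complexities of intermediate step functions remain $\tilde O(1/\epsilon)$ throughout the tree. The remaining ingredients---classifying by rounded reward, prefix selection within a class, and capacity-axis rounding---are standard and merely need careful bookkeeping so that the cumulative distortion from all rounding steps remains a single $(1+\epsilon)$ factor.
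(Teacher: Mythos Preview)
The paper does not prove this lemma; it is stated as a black-box citation of Jin (2019) and never reproved, so there is no ``paper's own proof'' to compare against.  That said, your sketch is worth commenting on.

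Your high-level outline---normalize via a greedy $2$-approximation, bucket items into $L=\tilde O(1/\epsilon)$ reward classes, observe that each class yields an $r$-uniform pseudo-concave step function, and merge along a balanced binary tree with re-rounding at every internal node---is exactly the Chan (2018) skeleton, and that part is fine.

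There is a genuine gap in how you obtain the $9/4$ exponent.  First, the arithmetic: you say each convolution costs $\tilde O(s^{9/4})$ with $s=\tilde O(1/\epsilon)$ and that there are $O(L)=\tilde O(1/\epsilon)$ nodes, then conclude the total is $\tilde O((1/\epsilon)^{9/4})$.  But $\tilde O(1/\epsilon)\cdot\tilde O((1/\epsilon)^{9/4})=\tilde O((1/\epsilon)^{13/4})$, not $\tilde O((1/\epsilon)^{9/4})$.  Second, and more fundamentally, there is no generic ``subquadratic $(\max,+)$-convolution for monotone step functions in time $\tilde O(s^{9/4})$''; such a routine would violate widely believed fine-grained hardness conjectures for $(\min,+)$-convolution.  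Jin's improvement over Chan does not come from a faster black-box convolution.  It comes from exploiting the special structure at each merge---one operand is $r$-uniform and pseudo-concave (cf.\ Lemma~\ref{lem:02} in Appendix~\ref{appT2} of the present paper, which quotes Chan's near-linear merge for this special case)---together with a number-theoretic / additive-combinatorics decomposition and a global amortization across levels that balances parameters to land on the $9/4$ exponent.  If you actually want to reproduce the bound rather than cite it, that specialized analysis is what you must reconstruct; the black-box convolution you invoke does not exist.
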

With the above lemma, we present Algorithm~\ref{alg:MPBKP} for MPBKP.
\begin{algorithm}[ht]
\footnotesize
\caption{FPTAS for MPBKP}
\label{alg:MPBKP}
\algsetblock[Name]{Parameters}{}{0}{}
\algsetblock[Name]{Initialize}{}{0}{}
\algsetblock[Name]{Define}{}{0}{}
\begin{algorithmic}[1]
	\Statex \textbf{Input:} $\epsilon, [n], c_1,\ldots, c_T$  \Comment {Set of items to be packed, cumulative capacities}
	\Statex \textbf{Output:} $\tilde{f}_t$ \Comment Approximation of function $f_t$
	\State Discard all items with $r_i\le \frac{\epsilon}{n}\max_jr_j$ and relabel the items 
	\State $r_0\gets \min_ir_i$ \Comment Lower bound of solution value
	\State $m\gets \left\lceil\log_{1+\epsilon}\frac{n^2}{\epsilon}\right\rceil$ \Comment number of distinct rewards to be considered, each in the form $r_0\cdot(1+\epsilon)^k$
	\State Obtain $\tilde{f}_{\Ical(1)}$ that approximates $f_{\Ical(1)}$ with factor $(1+\epsilon)$ using Lemma~\ref{lem:01}
	\State $\tilde{f}_1:= \tilde{f}_{\Ical(1)}^{c_1}$ \Comment $\tilde{f}_1$ has complexity at most $m=\tilde{\mathcal{O}}\left(\frac{1}{\epsilon}\right)$
	\For {$t=2,\ldots, T$}
	\State Obtain $\tilde{f}_{\Ical(t)}$ that approximates $f_{\Ical(t)}$ with factor $(1+\epsilon)$ using Lemma~\ref{lem:01}
	\State $l\gets$ complexity of $\tilde{f}_{\Ical(t)}$ \Comment $l=\tilde{\mathcal{O}}\left(\frac{1}{\epsilon}\right)$
	\State Compute (all breakpoints and their values of) $\hat{f}_{t}:= \left(\tilde{f}_{t-1}\oplus \tilde{f}_{\Ical(t)}\right)^{c_t}$, taking $m\cdot l$ time
\Comment $\hat{f}_t$ has complexity $\tilde{\mathcal{O}}\left(\frac{1}{\epsilon^2}\right)$
	\State $\tilde{f}_t := r_0\cdot (1+\epsilon)^{\left\lfloor\log_{1+\epsilon}\left(\frac{\hat{f}_t}{r_0}\right)\right\rfloor}$ \Comment Round $\hat{f}_t$ down to the nearest $r_0\cdot (1+\epsilon)^k$. Now $\tilde{f}_t$ has complexity at most $m=\tilde{\mathcal{O}}\left(\frac{1}{\epsilon}\right)$
	\EndFor
\end{algorithmic}
\end{algorithm}

We now describe the intuition behind Algorithm~\ref{alg:MPBKP}. We first discard all items with reward $r_i\le \frac{\epsilon}{n}\max_jr_j$. The maximum we could lose is $n\cdot \frac{\epsilon}{n}\max_jr_j = \epsilon\max_jr_j$, which is at most $\epsilon$ fraction of the optimal value. We next obtain all $\tilde{f}_{\Ical(t)}$, for all $t=1,\ldots, T$, that approximate $f_{\Ical(t)}$ (as defined in~\eqref{eqn:func}) within a $(1+\epsilon)$ factor. These functions $\tilde{f}_{\Ical(t)}$ have complexity $\tilde{\Ocal}\left(\frac{1}{\epsilon}\right)$. We start with combining the functions of period~$1$ and period~$2$ using $(\max,+)$-convolution. To enforce the constraint that the total size of selected items in period~$1$ does not exceed the capacity of period~$1$, we truncate $\tilde{f}_{\Ical(1)}$ by $c_1$ (so that any solution using more capacity in period~$1$ results in $-\infty$ reward) and do the convolution on the truncated function $\tilde{f}_1$. Since both functions are step functions with complexity $\tilde{\Ocal}\left(\frac{1}{\epsilon}\right)$, the $(\max,+)$ convolution can be done in time $\Ocal\left(\frac{1}{\epsilon^2}\right)$. The resulting function $\hat{f}_2$ would have complexity $\Ocal\left(\frac{1}{\epsilon^2}\right)$. To avoid inflating the complexity throughout different periods (which increases computation complexity), the function $\hat{f}_2$ is rounded down to the nearest $r_0\cdot (1+\epsilon)^k$, where $r_0:=\min_jr_j$ and $k$ is some nonnegative integer. Note that $r_0$ is a lower bound of any solution value. After discarding small-reward items, we have that $\frac{\max_jr_j}{r_0}\le \frac{n}{\epsilon}$, which implies that $n\max_jr_j = \frac{n^2}{\epsilon}r_0$ is an upper bound for the optimal solution value. Therefore, after rounding down the function values of $\hat{f}_2$ and obtaining $\tilde{f}_2$, there are at most $\log_{1+\epsilon}\frac{n^2}{\epsilon}\approx \frac{1}{\epsilon}\log\frac{n^2}{\epsilon}$ different values on $\tilde{f}_2$. Now we have brought down the complexity of $\tilde{f}_2$ again to $\tilde{\Ocal}\left(\frac{1}{\epsilon}\right)$, at an additional $(1+\epsilon)$ factor loss in the approximation error. We then move to period~$3$ and continue this pattern of $(\max,+)$-convolution, truncation, and rounding down. In the end when we reach period $T$, $\tilde{f}_T$ will only contain feasible solutions to~\eqref{MPBKP}, and approximate $f_T$ with total approximation factor of $(1+\epsilon)^T\approx (1+T\epsilon)$. Formally, we have the following lemma which shows the approximation factor of $\tilde{f}_t$ for $f_t$.
\begin{lemma}\label{lem:fapprox}
Let $\tilde{f}_t$ be the functions obtained from Algorithm~\ref{alg:MPBKP}, and let $f_t$ be defined as in~\eqref{eqn:ft}. Then, $\tilde{f}_t$ approximates $f_t$ with factor $(1+\epsilon)^t$, i.e., $\tilde{f}_t(c)\le f_t(c)\le (1+\epsilon)^t\tilde{f}_t(c)$ for all $0\le c\le c_t$.
\end{lemma}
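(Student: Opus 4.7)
The plan is a straightforward induction on $t$, relying on three easily verifiable closure properties of the approximation relation: (i) truncation preserves the approximation factor, since both $f^{c'}$ and $\tilde f^{c'}$ agree with $f$ and $\tilde f$ on $c\le c'$ and take value $-\infty$ for $c>c'$; (ii) if $\tilde f$ approximates $f$ with factor $\alpha$ and $\tilde g$ approximates $g$ with factor $\beta$, then $\tilde f\oplus\tilde g$ approximates $f\oplus g$ with factor $\max(\alpha,\beta)$ (a mild generalization of the same-factor remark already stated right before Lemma~\ref{lem:01}); and (iii) the rounding rule used to define $\tilde f_t$ from $\hat f_t$ incurs at most an additional multiplicative factor of $(1+\epsilon)$, because $r_0\cdot(1+\epsilon)^{\lfloor\log_{1+\epsilon}(\hat f_t/r_0)\rfloor}$ lies in the interval $\bigl[\hat f_t/(1+\epsilon),\,\hat f_t\bigr]$ whenever $\hat f_t\ge r_0$.

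For the base case $t=1$, Lemma~\ref{lem:01} guarantees that $\tilde f_{\Ical(1)}$ approximates $f_{\Ical(1)}$ with factor $1+\epsilon$, and applying (i) with $c'=c_1$ gives the bound on $\tilde f_1$ versus $f_1$.

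For the inductive step, assume $\tilde f_{t-1}\le f_{t-1}\le(1+\epsilon)^{t-1}\tilde f_{t-1}$. Lemma~\ref{lem:01} gives $\tilde f_{\Ical(t)}\le f_{\Ical(t)}\le(1+\epsilon)\tilde f_{\Ical(t)}$. Invoking (ii) with $\alpha=(1+\epsilon)^{t-1}$, $\beta=(1+\epsilon)$ yields that $\tilde f_{t-1}\oplus\tilde f_{\Ical(t)}$ approximates $f_{t-1}\oplus f_{\Ical(t)}$ with factor $(1+\epsilon)^{t-1}$, and applying (i) to truncate at $c_t$ gives $\hat f_t\le f_t\le(1+\epsilon)^{t-1}\hat f_t$ on $[0,c_t]$. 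Composing with (iii) then produces
\[
\tilde f_t(c)\;\le\;\hat f_t(c)\;\le\;f_t(c)\;\le\;(1+\epsilon)^{t-1}\hat f_t(c)\;\le\;(1+\epsilon)^{t}\tilde f_t(c),
\]
which is the desired bound.

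The only point that needs a line of care, and which I would call out as the mildly delicate step, is the verification of (ii) for unequal factors, which follows from $\tilde f(c')+\tilde g(c-c')\le f(c')+g(c-c')\le\alpha\tilde f(c')+\beta\tilde g(c-c')\le\max(\alpha,\beta)\bigl(\tilde f(c')+\tilde g(c-c')\bigr)$, then taking the maximum over $c'$. A second minor technicality is that the rounding bound in (iii) implicitly assumes $\hat f_t(c)\ge r_0$; this is handled by observing that any nonempty selection has value at least $r_0=\min_j r_j$, and the empty selection (value $0$) is approximated exactly, so the bound $\hat f_t\le(1+\epsilon)\tilde f_t$ holds for every $c\in[0,c_t]$ with both functions nonnegative.
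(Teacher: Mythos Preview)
Your proof is correct and follows essentially the same inductive route as the paper's own argument: both establish $\tilde f_t\le f_t$ by construction, then push the factor $(1+\epsilon)^{t-1}$ through the $(\max,+)$-convolution and truncation to bound $\hat f_t$, and finally absorb one more $(1+\epsilon)$ for the rounding. Your explicit statement of the unequal-factor closure property~(ii) and your remark on the $\hat f_t(c)\ge r_0$ technicality are in fact slightly more careful than the paper, which simply writes $(1+\epsilon)^{t-1}\tilde f_{\Ical(t)}\ge(1+\epsilon)\tilde f_{\Ical(t)}\ge f_{\Ical(t)}$ and does not comment on the domain of the rounding map.
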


Lemma~\ref{lem:fapprox} and Proposition~\ref{prop:optimalfunc} together imply that $\tilde{f}_T(c_T)$, obtained from Algorithm~\ref{alg:MPBKP}, approximates the optimal value of MPBKP~\eqref{MPBKP} by a factor of $(1+\epsilon)^T \approx (1+T\epsilon)$. In Algorithm~\ref{alg:MPBKP}, obtaining $\tilde{f}_{\Ical(t)}$ for all $t=1,\ldots,T$ takes time $\tilde{O}\left(n+{T}/{\epsilon^{2.25}}\right)$; computing the $(\max,+)$-convolution on $\tilde{f}_{t-1}\oplus \tilde{f}_{\Ical(t)}$ for all $t$ take time $T\cdot m\cdot l = \tilde{O}\left(T/\epsilon^2\right)$. Therefore, Algorithm~\ref{alg:MPBKP} has runtime $\tilde{O}\left(n+T/\epsilon^{2.25}\right)$. As a result, we have the following proposition. 
\begin{proposition}
Taking $\epsilon' = T\epsilon$, Algorithm~\ref{alg:MPBKP} achieves $(1+\epsilon')$-approximation for MPBKP in $\tilde{O}\left(n+\frac{T^{3.25}}{{\epsilon'}^{2.25}}\right)$.
\end{proposition}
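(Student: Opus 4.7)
The plan is to assemble the pieces already in place. First, by Proposition~\ref{prop:optimalfunc} the optimal value of~\eqref{MPBKP} equals $f_T(c_T)$, and by Lemma~\ref{lem:fapprox} the output $\tilde{f}_T(c_T)$ of Algorithm~\ref{alg:MPBKP} satisfies $\tilde{f}_T(c_T) \leq f_T(c_T) \leq (1+\epsilon)^T \tilde{f}_T(c_T)$. There is also a preprocessing error from Step~1 to account for: discarding every item with $r_i \leq (\epsilon/n)\max_j r_j$ reduces the optimal value by at most $n\cdot (\epsilon/n)\max_j r_j \leq \epsilon\cdot\mathrm{OPT}$, contributing an extra multiplicative factor of $1/(1-\epsilon)$ between $\tilde{f}_T(c_T)$ and the true $\mathrm{OPT}$ of the original instance.

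Second, to convert $(1+\epsilon)^T/(1-\epsilon)$ into $(1+\epsilon')$, I would use $(1+\epsilon)^T \leq e^{T\epsilon}$ together with $e^x \leq 1 + 2x$ for $x\in(0,\ln 2]$. Choosing $\epsilon = c\epsilon'/T$ for a sufficiently small absolute constant $c>0$ yields $(1+\epsilon)^T/(1-\epsilon) \leq 1+\epsilon'$ whenever $\epsilon'\in(0,1]$. This makes precise the informal identification $\epsilon' = T\epsilon$ in the statement of the proposition, up to a constant factor that is absorbed into the $\tilde{O}$.

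Third, I would tally the runtime of Algorithm~\ref{alg:MPBKP}. Step~1 takes $O(n)$ time. Lines~4 and~7 invoke Lemma~\ref{lem:01} to build the $\tilde{f}_{\Ical(t)}$; since $\sum_{t=1}^T |\Ical(t)| = n$, the cost summed over $t$ is $\tilde{O}\bigl(n + T/\epsilon^{2.25}\bigr)$. Each $(\max,+)$-convolution in Line~9 acts on step functions of complexities $m, l = \tilde{O}(1/\epsilon)$, taking $\tilde{O}(1/\epsilon^2)$ per period; the truncation and rounding in Lines~9--10 are dominated by this, so the loop contributes $\tilde{O}(T/\epsilon^2)$ in total. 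The overall runtime is thus $\tilde{O}(n + T/\epsilon^{2.25})$. Substituting $\epsilon = \Theta(\epsilon'/T)$ gives $\tilde{O}\bigl(n + T\cdot (T/\epsilon')^{2.25}\bigr) = \tilde{O}\bigl(n + T^{3.25}/(\epsilon')^{2.25}\bigr)$, as claimed.

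No single step is genuinely hard, since Lemma~\ref{lem:fapprox} already does the per-period approximation analysis and Lemma~\ref{lem:01} already provides the single-period FPTAS. The mildly subtle part is the bookkeeping of two distinct sources of error, the preprocessing loss in Step~1 and the geometric $(1+\epsilon)^T$ compounding from Lemma~\ref{lem:fapprox}, and verifying that rescaling $\epsilon$ by a $\Theta(1/T)$ factor suffices to fold both into a single $(1+\epsilon')$ factor without disturbing the claimed runtime bound.
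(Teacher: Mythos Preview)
Your proposal is correct and follows essentially the same route as the paper: combine Proposition~\ref{prop:optimalfunc} with Lemma~\ref{lem:fapprox} to get a $(1+\epsilon)^T$ approximation, tally the runtime as $\tilde{O}(n+T/\epsilon^{2.25})$ from the $T$ invocations of Lemma~\ref{lem:01} plus the $T$ convolutions, and then substitute $\epsilon = \Theta(\epsilon'/T)$. You are in fact more careful than the paper, which simply writes $(1+\epsilon)^T \approx 1+T\epsilon$ and takes $\epsilon' = T\epsilon$ literally; your explicit accounting of the Step~1 preprocessing loss and the constant $c$ in $\epsilon = c\epsilon'/T$ fills in details the paper leaves implicit.
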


\section{FPTAS for MPBKP-S}\label{sec:approx2}

In this section, we provide an FPTAS for the MPBKP-S with time complexity $\mathcal{O}\left(\frac{Tn\log n}{\epsilon^2}\right)$. An alternative FPTAS with time complexity $\mathcal{O}\left(\frac{n^2\log n}{\epsilon}\right)$ is provided in Appendix~\ref{simple-MPBKP-S}. Combining the two, we show that our algorithms achieve $(1+\epsilon)$ approximation ratio in time $\mathcal{O}\left(\frac{n\log n}{\epsilon}\cdot\min\left\{\frac{T}{\epsilon},n\right\}\right)$, which proves Theorem~\ref{mainthm2}. 
We should note that the algorithm in the previous section does not apply here: we could similarly define a function which gives the maximum profit ($=$reward$-$penalty) under a given capacity constraint, but the main obstacle is on the $(\max,+)$-convolution because profit does not ``add up''. In other words, the total profit we earn by selecting items in the set $\Scal_1\cup\Scal_2$ is not the sum of the profits we earned by selecting $\Scal_1$ and $\Scal_2$ separately. For this reason, we can no longer rely on the techniques used in function approximation and $(\max,+)$-convolution as in~\cite{chan:OASIcs:2018:8299,jin:LIPIcs:2019:10652}. Instead, our main idea is motivated by the techniques that originated from earlier papers~\citep{ibarra1975fast,lawler1979fast}, but adapting their technique to MPBKP-S requires significant modifications as we show in this section. 
We restrict our presentation to the case $B_t =B$ for readability, but our algorithms and analysis generalize in a straightforward manner when the penalties for buying capacity are heterogeneous $\{B_1, \ldots, B_T\}$ (by replacing $B$ with $\min_{\tau\le t}B_\tau$ in the calculations of profit/penalty at period $t$ on line 7 of Algorithm~\ref{alg:FPTAS_nTlogn_large2}).


\noindent \textbf{Preliminaries:} We first introduce some notation. From now on, let $\Rcal(\Scal):=\sum_{i\in\Scal}r_i$. The optimal solution set to~\eqref{MPBKP-S3} is denoted by $\mathcal{S}^*$. The total profit earned can be expressed as a function of the solution set $\mathcal{S}$:
\begin{align}
\Pcal(\mathcal{S}) = \Rcal(\Scal) - B\cdot \sum_{t=1}^T\left[\sum_{j\in \Scal\cap\mathcal{I}(t)}q_j-\max\left\{c_t-\sum_{j\in \Scal , d_j \leq t }q_j,\ c_t-c_{t-1}\right\}\right]^+.
\end{align}
Let $p_i$ be the profit of item $i$, which is defined as the profit earned if we select only $i$, i.e., $p_i = r_i-B \cdot \left(q_i-c_{d_i}\right)^+$. Without loss of generality, we assume that each item $i$ is by itself profitable, i.e., $p_i\ge 0$, so one profitable solution would be $\{i\}$. 
Let $P:=\max_{i}p_i$ and $\bar{P}:=\sum_{i\in[n]}p_i$. 
The following bounds on $\Pcal(\Scal^*)$  follow: 
\begin{align}\label{upperP}
P \leq \Pcal(\mathcal{S}^*)\le \bar{P} \leq nP.
\end{align}

\noindent \textbf{Partition of items:} We partition the set of items $[n]$ into two sets: a set of ``large'' items $\Ical_L$ and a set of ``small'' items $\Ical_S$ such that we can bound the number of large items in any optimal solution. The main idea is to use dynamic programming to pick the large items in the solution, and a greedy heuristic for `padding' this partial solution with small items.
The criterion for small and large items is based on balancing the permissible error $\epsilon \Pcal(\Scal^*)$ equally in filling large items and filling small items. Instead of first packing all large items and then all small items, we consider items in the order of their deadlines, and for each deadline $t$, the large items are selected first and then the small items are selected greedily in order of their reward densities. As a result, the approximation error due to large items overall will be $\frac{1}{2}\epsilon \Pcal(\Scal^*)$ and the error due to the small items with each deadline will be $\frac{1}{2T}\epsilon \Pcal(\Scal^*)$. This gives a total approximation error of $\frac{1}{2}\epsilon \Pcal(\Scal^*) + T\cdot \frac{1}{2T}\epsilon \Pcal(\Scal^*) = \epsilon \Pcal(\Scal^*)$.

Suppose that we can find some $P_0$ that satisfies~\eqref{P0}. \begin{align}\label{P0}
P_0\le \Pcal(\mathcal{S}^*)\le 2P_0.
\end{align}
Then, the set of items is partitioned as follows.
\begin{equation}\label{div2}
\begin{aligned}
\Ical_L := \left\{i\in[n]\mid p_i\ge \frac{1}{2T}\epsilon P_0\right\}; \qquad 
\Ical_S := \left\{i\in[n]\mid p_i< \frac{1}{2T}\epsilon P_0\right\}.
\end{aligned}
\end{equation}

This partition is computed in $\mathcal{O}(n)$ time and is not the dominant term in time complexity. Let $n_L = |\Ical_L|$ and $n_S=|\Ical_S|$, so that $n_L+n_S=n$. 
Further, let 
\begin{align*}
\mathcal{I}_L(t) := \left\{i\in \Scal_L\mid d_i = t\right\}, \quad  \mbox{and} \quad 
\mathcal{I}_S(t) := \left\{i\in \Scal_S\mid d_i = t\right\}
\end{align*}
denote the set of large and small items, respectively, with deadline $t$. 
We will assume that the items in $\Ical_L$ are indexed in non-decreasing order of their deadlines, i.e., $\forall i,j\in \Ical_L$ such that $j\ge i$, we have that $ d_i\le d_j$. Denote by $I_L(t)$ as the index of the last item with deadline $t$, i.e., $I_L(t):= \max_{i\in \Scal_L\cap \mathcal{I}_L(t)} i$. 
For each time $t$, we will also sort the small items in $\mathcal{I}_S(t)$ according to their reward densities, i.e., $\forall i<j$ and $i,j\in \mathcal{I}_S(t)$, $\frac{r_i}{q_i}\ge \frac{r_j}{q_j}$. This sorting only takes place once for each guess $P_0$, and does not affect our overall time complexity result. \\

\begin{algorithm}[ht]
\footnotesize
\caption{DP on large items for MPBKP-S}
\label{alg:FPTAS_nTlogn_large2}
\algsetblock[Name]{Parameters}{}{0}{}
\algsetblock[Name]{Initialize}{}{0}{}
\algsetblock[Name]{Define}{}{0}{}
\begin{algorithmic}[1]
	\Statex \textbf{Input:} $\ \Ical_L, \Delta c,$  \Comment Set of (large) items to be packed, additional capacity available for packing
	\Statex \hspace{0.35in}	$\widetilde{A}(p)$ for all $p = \left\{ 0, 1, \ldots,\left\lceil\frac{16T}{\epsilon^2}\right\rceil \right\} \cdot \kappa $ \Comment A set of partial solutions 
	\Statex \textbf{Output:} $\hat{A}(I_L,p)$ for all $p = \left\{ 0, 1, \ldots,\left\lceil\frac{16T}{\epsilon^2}\right\rceil \right\} \cdot \kappa $ \Comment Set of partial solutions after packing $\Ical_L$
	\State Initialize $\forall p \ : \ \hat{A}(0,p) := \widetilde{A}(p) + \Delta c$	
	\For {$i=1,\ldots, I_L$}
	\For {$p = \left\{ 0, 1, \ldots,\left\lceil\frac{16T}{\epsilon^2}\right\rceil \right\} \cdot \kappa $}
	\State $\hat{A}(i,p) := \hat{A}(i-1, p)$ 
	\Comment If reject item $i$
	\EndFor
	\For {$\bar{p} = \left\{ 0, 1, \ldots,\left\lceil\frac{16T}{\epsilon^2}\right\rceil \right\} \cdot \kappa $}
	\State ${p} = \bar{p} + \hat{r}_i - \left\lceil B\left(q_i - {\color{black}\max\left\{0, \hat{A}(i-1,\bar{p})\right\}}\right)^+\right\rceil_{\kappa}$
	\State $\hat{A}(i, p) = \max\left\{ \hat{A}(i,p ), \hat{A}(i-1,\bar{p}) - q_i \right\}$		\Comment Accept $i$
	\EndFor
	\For{$p = \left\{\left\lceil\frac{16T}{\epsilon^2}\right\rceil,\left\lceil\frac{16T}{\epsilon^2}\right\rceil-1,\ldots,1  \right\}\cdot \kappa$}
	\vspace{0.1cm}
	\If {$\hat{A}(i,p-\kappa)<\hat{A}(i,p)$}
	\vspace{0.1cm}
	\State $\hat{A}(i,p-\kappa) = \hat{A}(i,p)$
	\EndIf
	\EndFor
	\EndFor
\end{algorithmic}
\end{algorithm}

\begin{algorithm}[h]
\footnotesize
\caption{Greedy on small items for MPBKP-S}
\label{alg:FPTAS_nTlogn_small2}
\algsetblock[Name]{Parameters}{}{0}{}
\algsetblock[Name]{Initialize}{}{0}{}
\algsetblock[Name]{Define}{}{0}{}
\begin{algorithmic}[1]
	\Statex \textbf{Input:} $\ \Ical_S$, $\hat{A}(p)$ for all $p = \left\{ 0, 1, \ldots,\left\lceil\frac{16T}{\epsilon^2}\right\rceil \right\} \cdot \kappa $.  \Comment Set of (small) items to be packed, a set of partial solutions
	\Statex \textbf{Output:} $\widetilde{A}(p)$ for all $p = \left\{ 0, 1, \ldots,\left\lceil\frac{16T}{\epsilon^2}\right\rceil \right\} \cdot \kappa $ \Comment Set of partial solutions after packing $\Ical_S$
	\State Initialize $\forall p \ : \ \widetilde{A}(p) = \hat{A}(p)$ 
	\For {$\bar{p}=\left\{ 0, 1, \ldots,\left\lceil\frac{16T}{\epsilon^2}\right\rceil \right\} \cdot \kappa$}
	\Statex  \texttt{// Filter out small items with size larger than $\hat{A}(p)$ } 
	\State $\widetilde{\mathcal{I}}_S\gets\emptyset$
	\For {$i\in \mathcal{I}_S$}
	\If {$\hat{A}(\bar{p}) \geq q_i$}
	\State $\widetilde{\mathcal{I}}_S\gets \widetilde{\mathcal{I}}_S\cup \{i\}$
	\EndIf
	\EndFor

	\State $\tilde{\Rcal}_{0'} = 0, \tilde{q}_{0'} = 0$, and relabel the items in $\widetilde{\mathcal{I}}_S$ as $\left\{1',\ldots,|\widetilde{\mathcal{I}}_S|'\right\}$ (in decreasing order of reward density)
	\For {$i' = 1',\ldots, |\widetilde{\mathcal{I}}_S|'$}
	\State  $\tilde{\Rcal}_{i'} = \tilde{\Rcal}_{(i-1)'} + r_{i'}$
	\State  $\tilde{q}_{i'} = \tilde{q}_{(i-1)'} + q_{i'}$
	\EndFor
	
	\State {\texttt{// Add small items using Greedy algorithm}}
	\For {$i' = 1',\ldots, |\widetilde{\mathcal{I}}_S|'$}
	\If {$\tilde{q}_{i'} \leq \hat{A}(\bar{p})$}
	\State $p = \left\lfloor \bar{p} + \tilde{\Rcal}_{i'}\right\rfloor_\kappa$
	\State $\widetilde{A}({p}) = \max\left\{\widetilde{A}(p) , \hat{A}(\bar{p}) - \tilde{q}_{i'}  \right\}$
	\EndIf
	\EndFor
	\EndFor 
\end{algorithmic}
\end{algorithm}

\begin{algorithm}[ht]
\footnotesize
\caption{DP on large items and Greedy on small items for MPBKP-S}
\label{alg:FPTAS_nTlogn2}
\algsetblock[Name]{Parameters}{}{0}{}
\algsetblock[Name]{Initialize}{}{0}{}
\algsetblock[Name]{Define}{}{0}{}
\begin{algorithmic}[1]
	\Define \ $\kappa = \frac{\epsilon^2 P_0 }{8T}$
	\Define \ $\hat{r}_i = \floor{ {r_i} }_\kappa$ \Comment Round down reward
	\Statex  \texttt{// $\widetilde{A}_t(p)=$ leftover capacity for the algorithm's partial solution when earning (rounded) profit $p$ using items with deadlines at most $t$ (small and large)} 
	\Statex  \texttt{// $\hat{A}_t(p)=$ capacity left for the algorithm's partial solution when earning (rounded) profit $p$ by selecting large items in $\mathcal{I}_L(t)$ with rounded down rewards $\hat{r}$, given the partial solutions $\widetilde{A}_{t-1}(p)$}
	\State Initialize $\hat{A}(0,p) = \widetilde{A}_0(p) = \begin{cases}
	0 & p = 0,\\
	-\infty & p > 0.
	\end{cases}$
	\For {$t=1,\ldots, T$}
	\State Run Algorithm~\ref{alg:FPTAS_nTlogn_large2} with $\Ical_L = \Ical_L(t), \Delta c = c_t - c_{t-1}$, and $\widetilde{A}(p)=\widetilde{A}_{t-1}(p)$ for all $p = \left\{ 0, 1, \ldots,\left\lceil\frac{16T}{\epsilon^2}\right\rceil \right\} \cdot \kappa $, and obtain $\hat{A}_t(p):= \hat{A}(I_L,p)$ for all $p$.
	\State Run Algorithm~\ref{alg:FPTAS_nTlogn_small2} with $\Ical_S=\Ical_S(t)$ and $\hat{A}(p) = \hat{A}(I_L(t),p)$ for all $p = \left\{ 0, 1, \ldots,\left\lceil\frac{16T}{\epsilon^2}\right\rceil \right\} \cdot \kappa $, and obtain $\widetilde{A}_{t}(p):=\widetilde{A}(p)$ for all $p$.
	\EndFor
\end{algorithmic}
\end{algorithm}

\begin{algorithm}[ht]
\footnotesize
\caption{FPTAS for MPBKP-S in $\mathcal{O}(Tn\log n/\epsilon^2)$}
\label{alg:SC_FPTAS}
\algsetblock[Name]{Parameters}{}{0}{}
\algsetblock[Name]{Initialize}{}{0}{}
\algsetblock[Name]{Define}{}{0}{}
\begin{algorithmic}[1]
	\State $P_0\gets {\bar{P}}$
	\State $p^*\gets 0$
	\While {$p^*<(1-\epsilon)P_0$}
	\vspace{0.1cm}
	\State $P_0\gets \frac{P_0}{2}$
	\vspace{0.1cm}
	\State	Run Algorithm~\ref{alg:FPTAS_nTlogn2} with the current $P_0$.
	\State $p^*\gets \max_{\left\{\substack{p\in \left\{ 0,\ldots,\left\lceil\frac{16T}{\epsilon^2}\right\rceil \right\} \cdot \kappa\\ \widetilde{A}_T(p)> -\infty}\right\}}p$
	\EndWhile
\end{algorithmic}
\end{algorithm}

\noindent \textbf{Algorithm overview:} Our FPTAS algorithm is given in Algorithm~\ref{alg:SC_FPTAS} which uses a doubling trick to guess the value of $P_0$ satisfying \eqref{P0}, and for each guess uses Algorithm~\ref{alg:FPTAS_nTlogn2} as a subroutine. Algorithm~\ref{alg:FPTAS_nTlogn2} is the main  algorithm for MPBKP-S, which first selects the items with deadline~$1$, then the items with deadline~$2$, and so on. 
For each deadline $t$, we maintain two sets of partial solutions: the first, $\widetilde{A}_t(p)$, corresponds to an approximately optimal (in terms of leftover capacity carried forward to time $t+1$) subset of large and small items with deadline at most $t$ and some \emph{rounded profit} $p$ 
; and the second $\hat{A}_t(p)$ corresponds to the optimal appending of large items with deadline $t$ to the approximately optimal set of solutions corresponding to $\widetilde{A}_{t-1}$.

Given $\widetilde{A}_{t-1}$, we first select large items from $\mathcal{I}_L(t)$ using dynamic programming to obtain $\hat{A}_t$, which is done in Algorithm~\ref{alg:FPTAS_nTlogn_large2}. In other words, {\it given} the partial solutions $\widetilde{A}_{t-1}(\bar{p})$ for all $\bar{p} \in \left\{ 0, 1, \ldots,\left\lceil\frac{16T}{\epsilon^2}\right\rceil \right\} \cdot \kappa$, $\hat{A}_t(p)$ is the maximum capacity left when earning \emph{rounded profit} (precise definition given in~\eqref{Ptilde}) $p$ by adding items in $\mathcal{I}_L(t)$. We then use a greedy heuristic to pick small items from $\mathcal{I}_S(t)$ to obtain $\widetilde{A}_t$, which is done in Algorithm~\ref{alg:FPTAS_nTlogn_small2}. Specifically, our goal in Algorithm~\ref{alg:FPTAS_nTlogn_small2} is to obtain the partial solutions $\widetilde{A}_t(\cdot)$ given the partial solutions $\hat{A}_t(\cdot)$ by packing the small items $\Ical_S(t)$. We initialize $\widetilde{A}_t(\bar{p})$ with $\hat{A}_t(\bar{p})$, and for each $\bar{p}$ we try to augment the solution corresponding to $\hat{A}_t(\bar{p})$ using a subset $\widetilde{\mathcal{I}}_S(t) \subseteq \Ical_S(t)$ defined as $$\widetilde{\mathcal{I}}_S(t):= \{i\in \mathcal{I}_S(t)\mid q_i\le \hat{A}_t(\bar{p})\}.$$  The small items in $\widetilde{\mathcal{I}}_S(t)$ are sorted according to their reward densities, and are added to the solution of $\hat{A}_t(\bar{p})$ one by one. After each addition of a small item, if the new total rounded reward is ${p}$, we compare the leftover capacity with current $\widetilde{A}_t({p})$, and update $\widetilde{A}_t({p})$ with the new solution if it has more leftover capacity. We continue this add-and-compare (and possibly update) until we reach the situation where adding the next small item overflows the available capacity.

Intuitively, for any amount of capacity available to be filled by small items, and a minimum increase in profit, the optimal solution either packs a single item from $\Ical_S(t) \setminus \widetilde{\Ical}_S(t)$ in which case the loss by ignoring items in this set is bounded by the maximum reward of any small item, or the optimal solution only contains items from $\widetilde{\Ical}_S(t)$ in which case the space used by this optimal set of items is lower bounded by the a fractional packing of the highest density items in  $\widetilde{I}_S(t)$. During Algorithm~\ref{alg:FPTAS_nTlogn_small2}, one of the solutions we would consider would be the integral items of this fractional solution, and lose at most $\frac{1}{2T}\epsilon P_0$ in profit, and obtain a solution with still smaller space used (more leftover capacity) than the fractional solution. Accumulation of these errors for $t$ periods then will give us the invariant: the partial solution $\widetilde{A}_t(p)$ obtained as above has more leftover capacity than any solution obtained by selecting items from $\cup_{t'=1}^t \mathcal{I}_L(t')$ with rounded rewards and rounded penalties, and items from $\cup_{t'=1}^t \mathcal{I}_S(t')$ with original (unrounded) rewards such that the rounded total profit is at least $p+\frac{1}{2T}\epsilon P_0t+\kappa t$.

Our main theorem for the approximation ratio for MPBKP follows.

\begin{theorem}[Partially restating Theorem~\ref{mainthm2}]\label{main:MPBKP-S}
Algorithm~\ref{alg:SC_FPTAS} is a fully polynomial approximation scheme for the MPBKP-S, which achieves $(1+\epsilon)$ approximation ratio with running time $\mathcal{O}\left(\frac{Tn\log n}{\epsilon^2}\right)$.
\end{theorem}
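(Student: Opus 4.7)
The plan is to prove the theorem in two parts: correctness of the approximation and the runtime bound. For correctness, I would begin by analyzing the outer halving loop in Algorithm~\ref{alg:SC_FPTAS}. The loop starts with $P_0 = \bar{P} \ge \Pcal(\Scal^*)$ and halves $P_0$ each iteration. I would argue that once $P_0$ enters the target range $P_0 \le \Pcal(\Scal^*) \le 2P_0$ (which must occur since $\bar{P} \le nP \le n\Pcal(\Scal^*)$ by~\eqref{upperP}), Algorithm~\ref{alg:FPTAS_nTlogn2} returns a partial-solution array $\widetilde{A}_T(\cdot)$ whose best achievable rounded profit $p^*$ exceeds $(1-\epsilon)P_0$, forcing termination. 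In particular, the recovered set has actual profit at least $(1-\epsilon)\Pcal(\Scal^*)$, equivalently a $(1+\epsilon)$-approximation after rescaling $\epsilon$.

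The heart of the argument is the invariant previewed in the text: for each $t$, the partial solution encoded in $\widetilde{A}_t(p)$ has leftover capacity at least that of any feasible solution restricted to items with deadline $\le t$ whose rounded profit is at least $p + \tfrac{\epsilon P_0}{2T} t + \kappa t$. I would prove this by induction on $t$. The base case $t=0$ is immediate from the initialization. For the inductive step, Algorithm~\ref{alg:FPTAS_nTlogn_large2} first extends each stored partial solution by optimally packing large items from $\Ical_L(t)$ using rewards rounded down to multiples of $\kappa=\tfrac{\epsilon^2 P_0}{8T}$. Since any reasonable solution contains at most $O(T/\epsilon)$ large items (each of profit at least $\tfrac{\epsilon P_0}{2T}$ and total profit at most $2P_0$), the cumulative rounding loss across all periods stays under $\tfrac{\epsilon P_0}{2}$. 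The ``monotonization'' sweep (lines 11--14 of Algorithm~\ref{alg:FPTAS_nTlogn_large2}) ensures that $\hat{A}_t(\cdot)$ is non-increasing in $p$, which is crucial for coupling with the optimal extension at any rounded profit.

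Algorithm~\ref{alg:FPTAS_nTlogn_small2} then greedily augments each $\hat{A}_t(\bar{p})$ with items from $\mathcal{I}_S(t)$, processed in decreasing order of reward density. The key observation is the standard knapsack-greedy fact: compared to the optimal (possibly fractional) selection from $\widetilde{\mathcal{I}}_S(t)$ fitting in capacity $\hat{A}_t(\bar{p})$, the integer greedy prefix forgoes at most the reward of a single small item, strictly less than $\tfrac{\epsilon P_0}{2T}$; meanwhile the greedy prefix fits in no more space than the fractional solution, so leftover capacity is preserved. Items in $\mathcal{I}_S(t)\setminus\widetilde{\mathcal{I}}_S(t)$ individually exceed the available capacity, so the optimal extension either picks one such item (handled by the earlier large-item step applied to the predecessor $\widetilde{A}_{t-1}$) or lies entirely in $\widetilde{\mathcal{I}}_S(t)$. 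Together with the $\kappa$-rounding of cumulative reward, the per-period loss is at most $\tfrac{\epsilon P_0}{2T}+\kappa$; summing over $t=1,\dots,T$ reproduces the invariant. Applying the invariant at $t=T$ with $\Scal=\Scal^*$ yields an algorithmic solution of profit at least $\Pcal(\Scal^*)-\epsilon P_0 \ge (1-\epsilon)\Pcal(\Scal^*)$, completing correctness (after absorbing constants into $\epsilon$).

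For the runtime, the outer halving loop runs $O(\log(\bar{P}/\Pcal(\Scal^*)))=O(\log n)$ times before $P_0$ falls into the target range and forces termination. Within each call to Algorithm~\ref{alg:FPTAS_nTlogn2}, the $T$ periods each process at most $O(T/\epsilon^2)$ rounded-profit buckets: Algorithm~\ref{alg:FPTAS_nTlogn_large2} costs $O(|\Ical_L(t)|\cdot T/\epsilon^2)$, and Algorithm~\ref{alg:FPTAS_nTlogn_small2} costs $O(|\Ical_S(t)|\cdot T/\epsilon^2 + |\Ical_S(t)|\log |\Ical_S(t)|)$. Summing over $t$ gives $\mathcal{O}(nT/\epsilon^2 + n\log n)$ per call, and the $O(\log n)$ outer iterations yield the claimed $\mathcal{O}(nT\log n/\epsilon^2)$ bound. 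The main obstacle is making the invariant airtight across the handoff between the large-item DP and the small-item greedy: one must carefully couple the algorithm's stored partial solution at rounded profit $p$ with an adversarially chosen optimal continuation at rounded profit $p+\tfrac{\epsilon P_0}{2T}+\kappa$, and verify that both the monotonization step and the initialization $\widetilde{A}(p)=\hat{A}(p)$ propagate the coupling forward without capacity slack accumulating in the wrong direction.
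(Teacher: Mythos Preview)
Your proposal follows essentially the same architecture as the paper's proof: an outer doubling loop bounded by $O(\log n)$ iterations (the paper's Lemma~\ref{while4}), the key leftover-capacity invariant on $\widetilde{A}_t(\cdot)$ with per-period slack $\tfrac{\epsilon P_0}{2T}+\kappa$ (the paper's Lemma~\ref{tilde2}), correctness of the large-item DP (Lemma~\ref{hat2}), the greedy bound for small items (Lemma~\ref{singleperiod2}), and the final accounting (Proposition~\ref{MPBKPapprox2} and Lemma~\ref{MPBKPSCfinallemma}).

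There is one concrete error in your sketch. You write that items in $\mathcal{I}_S(t)\setminus\widetilde{\mathcal{I}}_S(t)$ are ``handled by the earlier large-item step applied to the predecessor $\widetilde{A}_{t-1}$.'' This is not correct: the large/small partition~\eqref{div2} is by \emph{profit}, not by \emph{size}, so a small item whose size exceeds $\hat{A}_t(\bar p)$ is still a small item and is never touched by Algorithm~\ref{alg:FPTAS_nTlogn_large2}. The paper's actual argument (Lemma~\ref{singleperiod2}) is that if the optimal small-item extension uses any item from $\mathcal{I}_S(t)\setminus\widetilde{\mathcal{I}}_S(t)$, then that extension consists of a \emph{single} such item (since it already overflows), and because the item is small its profit is at most $p_{\max}<\tfrac{\epsilon P_0}{2T}$; hence the empty set already satisfies $\Pcal(\emptyset)\ge \Pcal(\Scal^*)-p_{\max}$. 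Once you replace your incorrect justification with this one, your invariant goes through and the remainder of the argument matches the paper.
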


\section{A greedy algorithm for a special case of MPBKP-SS}\label{sec:unit-MPBKP-SS}
In this subsection, we consider the special case of MPBKP-SS when all items have the same size, i.e., $q_i=q,\forall i\in[n]$. We again only present for the case $B_t = B,\forall t\in[T]$. We note that in the deterministic problems (MPBKP or MPBKP-S), when items all have the same size, greedily adding items one by one in decreasing order of their rewards leads to the optimal solution. For MPBKP-SS, as the capacities are now stochastic, we wonder if there is any greedy algorithm performs well. We propose Algorithm~\ref{alg:unitq-greedybyprofit}, where we start with an empty set, and greedily insert the item that brings the maximum increment on expected profit, and we stop if adding any of the remaining items does not increase the expected profit.

\begin{algorithm}[h]
	\footnotesize
	\caption{Greedy algorithm according to profit change}
	\label{alg:unitq-greedybyprofit}
	\algsetblock[Name]{Parameters}{}{0}{}
	\algsetblock[Name]{Initialize}{}{0}{}
	\algsetblock[Name]{Define}{}{0}{}
	\begin{algorithmic}[1]
		\State $\Scal\gets \emptyset$
		\State $s\gets 1$
		\While {$s == 1$}
		\State $i^*\gets \argmax_{i \notin\Scal}\left\{\Pcal(\Scal\cup\{i\})-\Pcal(\Scal)\right\}$
		\If {$\Pcal(\Scal\cup\{i^*\})-\Pcal(\Scal)\ge 0$}
		\State $\Scal\gets \Scal\cup\{i^*\}$
		\Else 
		\State $s\gets 0$
		\EndIf
		\EndWhile
		\State $\Scal_{p}\gets \Scal$
		\State {\bf Return} $\Scal_p$
	\end{algorithmic}
\end{algorithm}

Let $\Scal^*$ be an optimal solution, i.e.,
$\Scal^* \in \arg\max_{\Scal\subseteq [n]} \Pcal(\Scal) := \Rcal(\Scal) - B\cdot \Phi(\Scal)$,
where $$\Phi(\Scal) := \mathbb{E}\left\{\sum_{t=1}^T\left[\sum_{j\in \mathcal{I}(t)\cap\Scal}q_j-\max_{0\leq t' < t}\left\{ c_t - c_{t'}-\sum_{j  \in \Scal : t'+1 \leq d_j \leq t-1} q_j\right\}\right]^+ \right\}$$ is the expected quantity of overflow on set $\Scal$, and let $\Scal_p$ be the set output by Algorithm~\ref{alg:unitq-greedybyprofit}.
Then, we have the following theorem. 
\begin{theorem}[Restating Theorem~\ref{mainthm3}]\label{thm:GRprofit}
	Algorithm~\ref{alg:unitq-greedybyprofit} achieves $2$-approximation factor for MPBKP-SS when items have the same size, i.e., $\Pcal(\Scal_p) \ge \frac{1}{2}\Pcal(\Scal^*)$  in $\Ocal\left(n^2T|\Omega|\right)$.
\end{theorem}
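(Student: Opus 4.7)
The plan relies on showing that $\Pcal(\Scal) = \Rcal(\Scal) - B\Phi(\Scal)$ is a submodular set function and then transferring greedy's local-optimality at termination into a $2$-approximation guarantee.

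The first task is to establish supermodularity of $\Phi$ (which then implies submodularity of $\Pcal$). For each sample path $\omega$, the overflow $\phi_\omega(\Scal) = \max_t [q \cdot n_t(\Scal) - c_t(\omega)]^+$ is assembled from the modular maps $n_t(\cdot)$ through the convex map $[\cdot]^+$ and a maximum over $t$. With uniform item size $q$, I would verify supermodularity by a direct case analysis: for an item $i$ with deadline $d$, adding $i$ increases each $g_t := q\cdot n_t - c_t(\omega)$ by $q$ for $t \ge d$ and leaves it unchanged for $t < d$. Splitting into cases according to whether the argmax $\hat{t}$ of $g_t$ before and after adding $i$ lies in $\{t : t \ge d\}$ or in $\{t : t < d\}$ shows that the marginal gain of $\phi_\omega$ from adding $i$ is non-decreasing in $\Scal$, i.e., $\phi_\omega$ is supermodular. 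Since expectation preserves supermodularity, so is $\Phi$; and $\Phi(\emptyset) = 0$ gives superadditivity on disjoint sets: $\Phi(A \sqcup B) \ge \Phi(A) + \Phi(B)$.

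Next, greedy's termination condition gives $\Pcal(\Scal_p \cup \{i\}) \le \Pcal(\Scal_p)$ for every $i \notin \Scal_p$. By submodularity of $\Pcal$, marginal gains only shrink on supersets, so telescoping yields $\Pcal(\Scal_p \cup \Scal) \le \Pcal(\Scal_p)$ for every set $\Scal$. Taking $\Scal = \Scal^*$ and writing $I := \Scal^* \cap \Scal_p$, $T^* := \Scal^* \setminus \Scal_p$, this specializes to the key greedy-stop inequality $\Rcal(T^*) \le B\bigl[\Phi(\Scal_p \cup \Scal^*) - \Phi(\Scal_p)\bigr]$. Meanwhile, superadditivity of $\Phi$ on the disjoint union $\Scal^* = I \sqcup T^*$ combined with modularity of $\Rcal$ gives the decomposition $\Pcal(\Scal^*) \le \Pcal(I) + \Pcal(T^*)$.

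To conclude, I would bound each of $\Pcal(I)$ and $\Pcal(T^*)$ by $\Pcal(\Scal_p)$. For $\Pcal(I) \le \Pcal(\Scal_p)$, exploit that $I \subseteq \Scal_p$ and the items in $T_p = \Scal_p \setminus \Scal^*$ were added by greedy with non-negative marginals; a telescoping argument along the greedy trajectory, combined with submodularity, gives $\Pcal(\Scal_p) - \Pcal(I) \ge 0$. For $\Pcal(T^*) \le \Pcal(\Scal_p)$, combine the greedy-stop bound on $\Rcal(T^*)$ with $\Phi(T^*) \le \Phi(\Scal_p \cup \Scal^*) - \Phi(\Scal_p)$ (from superadditivity of $\Phi$ on the disjoint $\Scal_p$ and $T^*$) to upper-bound $\Pcal(T^*)$. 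Together these give $\Pcal(\Scal^*) \le 2\Pcal(\Scal_p)$. The runtime $\Ocal(n^2 T |\Omega|)$ is immediate from Algorithm~\ref{alg:unitq-greedybyprofit}: at most $n$ iterations, each scanning $\Ocal(n)$ remaining items, with each profit evaluation taking $\Ocal(T\cdot|\Omega|)$ work. The main obstacle is the last step of bounding $\Pcal(T^*)$ by $\Pcal(\Scal_p)$: converting the decomposition and greedy-stop inequality into the explicit bound requires a delicate combined use of supermodularity of $\Phi$ and the additive form of $\Pcal$, and this is where the proof genuinely exploits that $\Pcal = \Rcal - B\Phi$ rather than a generic submodular function.
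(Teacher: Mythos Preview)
Your submodularity claim and the greedy-stop inequality $\Pcal(\Scal_p\cup\Scal)\le\Pcal(\Scal_p)$ for every $\Scal$ are both correct, and your identification $\phi_\omega(\Scal)=\max_t[q\,n_t(\Scal)-c_t(\omega)]^+$ (valid when $B_t\equiv B$) is a clean way to see supermodularity of $\Phi$. The gap is in step 5. Neither $\Pcal(I)\le\Pcal(\Scal_p)$ nor $\Pcal(T^*)\le\Pcal(\Scal_p)$ follows from the ingredients you list. For the first, greedy only guarantees that each item it adds has nonnegative marginal \emph{at the moment of addition}; it says nothing about deletions, so a subset $I\subseteq\Scal_p$ can have strictly larger profit than $\Scal_p$. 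For the second, your chain gives $\Pcal(T^*)\le B\bigl[\Phi(\Scal_p\cup\Scal^*)-\Phi(\Scal_p)\bigr]-B\,\Phi(T^*)$, which is some nonnegative number with no visible relation to $\Pcal(\Scal_p)$. In fact, for a generic submodular $f$ the add-only greedy you analyze can be arbitrarily bad: take ground set $\{a,b_1,\ldots,b_k\}$ with $f(\{a\}\cup S)=1{+}\epsilon$ for every $S\subseteq\{b_1,\ldots,b_k\}$ and $f(S)=|S|$ for $S\subseteq\{b_1,\ldots,b_k\}$; this $f$ is submodular, greedy picks $a$ first and then every $b_i$ (all with zero marginal), ending with value $1{+}\epsilon$, while $\Scal^*=\{b_1,\ldots,b_k\}$ has value $k$. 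Here $I=\{b_1,\ldots,b_k\}$ and $\Pcal(I)=k\gg\Pcal(\Scal_p)$, so your step 5a fails outright. This shows that any proof must exploit the equal-size/deadline structure beyond submodularity; your closing remark acknowledges an obstacle but does not supply the missing mechanism.

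The paper takes a quite different route. Rather than splitting $\Scal^*$ into $I$ and $T^*$, it runs an inductive swapping/charging argument along the greedy trajectory $G_0\subset G_1\subset\cdots\subset G_l=\Scal_p$, maintaining a shrinking ``residual optimum'' $\Scal^*_i$ and the invariant $\Pcal(\Scal^*)\le 2\,\Pcal(G_i)+\Delta\Pcal(G_i,\Scal^*_i)$. At step $i{+}1$, if $g_{i+1}\in\Scal^*_i$ one removes it; otherwise one removes \emph{two} items $o',o''\in\Scal^*_i$ with $d_{o'}\le d_{g_{i+1}}\le d_{o''}$ chosen extremally among the remaining deadlines. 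Two facts close the step: greedy optimality plus submodularity give $\Delta\Pcal(G_i,\{o',o''\})\le 2\,\Delta\Pcal(G_i,\{g_{i+1}\})$, and a deadline-specific lemma (that inserting a single item with an ``intermediate'' deadline leaves at least as much room for the rest as inserting two items straddling it) gives $\Delta\Pcal(G_i\cup\{o',o''\},\Scal^*_{i+1})\le\Delta\Pcal(G_{i+1},\Scal^*_{i+1})$. That second lemma is exactly where the equal-size and deadline structure are used, and it is the ingredient your plan is missing.
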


The proof of the $2$-approximation could be more nontrivial than one may think. The idea is to look at the greedy solution set $\Scal_p$ and the optimal solution set $\Scal^*$,  where we will use the dual to characterize the optimal solution on each sample path. By swapping each item in $\Scal_p$ to $\Scal^*$ in replacement of the same item or two other items, we construct a sequence of partial solutions of the greedy algorithm as well as modified optimal solution set, while maintaining the invariant that the profit of $\Scal^*$ is bounded by the sum of two times the profit of items in $\Scal_p$ swapped into $\Scal^*$ so far and the additional profit of remaining items in the modified optimal solution set. We leave the formal proof of Theorem~\ref{thm:GRprofit} to Appendix~\ref{appc-unit}.

\section{Comments and Future Directions}\label{sec:conc}
The current work represents to the best of our knowledge the first FPTAS  for the two multi-period variants of the classical knapsack problem. For MPBKP, we obtained the runtime $\tilde{\Ocal}\left(n+(T^{3.25}/\epsilon^{2.25})\right)$. This was done via the function approximation approach, where a function is approximated for each period. The runtime increases in $T$ since we conduct $T$ number of rounding downs, one after each $(\max,+)$-convolution. An alternative algorithm with runtime $\tilde{\Ocal}\left(n+\frac{T^{2}}{\epsilon^{3}}\right)$ is also provided in Appendix~\ref{appT2}. Note that the function we approximated is in the same form as used in the 0-1 knapsack problem~\citep{chan:OASIcs:2018:8299}. It is thus interesting to ask if we could instead directly approximate the following function:
$$
f_{\Ical}(c) = \max_{x}\left\{\sum_{i\in\Ical}r_ix_i\ :\ \sum_{i\in\cup_{t'=1}^t\Ical(t')}q_ix_i\le c_t,\forall t\in[T],\ x\in\{0,1\}^n\right\},
$$
where $\Ical = \cup_{t=1}^T\Ical(t)$ and $c=\{c_1,\ldots,c_T\}$ is a $T$-dimensional vector. Here we impose all $T$ constraints in the function. The hope is that, if the above function could be approximated, and if we could properly define the $(\max,+)$-convolution on $T$ dimensional vectors (and have a fairly easy computation of it), then we may get an algorithm that depends more mildly on~$T$.

For MPBKP-S and MPBKP-SS, there seems to be less we can do without further assumptions. One direction to explore is  parameterized approximation schemes: assuming that in the optimal solution, the total (expected) penalty is at most~$\beta$ fraction of the total reward. Then we may just focus on rewards. Our ongoing work suggests that an approximation factor of $\left(1+\frac{\epsilon}{1-\beta}\right)$ may be achieved in $\tilde{\Ocal}\left(n+(T^{3.25}/\epsilon^{2.25})\right)$ for MPBKP-S, and the same approximation factor in $\tilde{\Ocal}\left(n+\frac{1}{\epsilon^{T}}\right)$ for MPBKP-SS. 

We further note that the objective function for the three multiperiod variants are in fact submodular (but not non-negative, or monotone). Whether we can get a constant competitive solution in time $\widetilde{\Ocal}(n)$, using approaches in submodular function maximization, is also an intriguing open problem. 

Finally, motivated by applications, one natural extension that the authors are working on now is when there is a general non-decreasing cost function $\phi_t(\Delta c)$ for procuring capacity $\Delta c$ at time $t$, and the goal is to admit a profit maximizing set of items when the unused capacity can be carried forward. Another extension is when there is a bound on the leftover capacity that can be carried forward.

	{\small
		\begin{spacing}{1.2}
			\bibliographystyle{apalike}
			\bibliography{references}

\begin{thebibliography}{}

\bibitem[Andonov et~al., 2000]{andonov2000unbounded}
Andonov, R., Poirriez, V., and Rajopadhye, S. (2000).
\newblock Unbounded knapsack problem: Dynamic programming revisited.
\newblock {\em European Journal of Operational Research}, 123(2):394--407.

\bibitem[Aouad and Segev, 2020]{aouad2020approximate}
Aouad, A. and Segev, D. (2020).
\newblock An approximate dynamic programming approach to the incremental
  knapsack problem.
\newblock {\em arXiv preprint arXiv:2010.07633}.

\bibitem[Bienstock et~al., 2013]{bienstock2013approximation}
Bienstock, D., Sethuraman, J., and Ye, C. (2013).
\newblock Approximation algorithms for the incremental knapsack problem via
  disjunctive programming.
\newblock {\em arXiv preprint arXiv:1311.4563}.

\bibitem[Chan, 2018]{chan:OASIcs:2018:8299}
Chan, T.~M. (2018).
\newblock {Approximation Schemes for 0-1 Knapsack}.
\newblock In Seidel, R., editor, {\em 1st Symposium on Simplicity in Algorithms
  (SOSA 2018)}, volume~61 of {\em OpenAccess Series in Informatics (OASIcs)},
  pages 5:1--5:12, Dagstuhl, Germany. Schloss Dagstuhl--Leibniz-Zentrum fuer
  Informatik.

\bibitem[Chekuri and Khanna, 2005]{chekuri2005polynomial}
Chekuri, C. and Khanna, S. (2005).
\newblock A polynomial time approximation scheme for the multiple knapsack
  problem.
\newblock {\em SIAM Journal on Computing}, 35(3):713--728.

\bibitem[Della~Croce et~al., 2019]{della2019approximating}
Della~Croce, F., Pferschy, U., and Scatamacchia, R. (2019).
\newblock On approximating the incremental knapsack problem.
\newblock {\em Discrete Applied Mathematics}.

\bibitem[Faaland, 1981]{faaland1981multiperiod}
Faaland, B.~H. (1981).
\newblock The multiperiod knapsack problem.
\newblock {\em Operations Research}, 29(3):612--616.

\bibitem[Faenza and Malinovic, 2018]{faenza2018ptas}
Faenza, Y. and Malinovic, I. (2018).
\newblock A ptas for the time-invariant incremental knapsack problem.
\newblock In {\em International Symposium on Combinatorial Optimization}, pages
  157--169. Springer.

\bibitem[Faenza et~al., 2020]{faenza2020approximation}
Faenza, Y., Segev, D., and Zhang, L. (2020).
\newblock Approximation algorithms for the generalized incremental knapsack
  problem.
\newblock {\em arXiv preprint arXiv:2009.07248}.

\bibitem[Hartline and Sharp, 2006]{10.1007/11764298_4}
Hartline, J. and Sharp, A. (2006).
\newblock An incremental model for combinatorial maximization problems.
\newblock In {\`A}lvarez, C. and Serna, M., editors, {\em Experimental
  Algorithms}, pages 36--48, Berlin, Heidelberg. Springer Berlin Heidelberg.

\bibitem[Ibarra and Kim, 1975]{ibarra1975fast}
Ibarra, O.~H. and Kim, C.~E. (1975).
\newblock Fast approximation algorithms for the knapsack and sum of subset
  problems.
\newblock {\em Journal of the ACM (JACM)}, 22(4):463--468.

\bibitem[Jansen, 2012]{10.1007/978-3-642-27660-6_26}
Jansen, K. (2012).
\newblock A fast approximation scheme for the multiple knapsack problem.
\newblock In Bielikov{\'a}, M., Friedrich, G., Gottlob, G., Katzenbeisser, S.,
  and Tur{\'a}n, G., editors, {\em SOFSEM 2012: Theory and Practice of Computer
  Science}, pages 313--324, Berlin, Heidelberg. Springer Berlin Heidelberg.

\bibitem[Jin, 2019]{jin:LIPIcs:2019:10652}
Jin, C. (2019).
\newblock {An Improved FPTAS for 0-1 Knapsack}.
\newblock In Baier, C., Chatzigiannakis, I., Flocchini, P., and Leonardi, S.,
  editors, {\em 46th International Colloquium on Automata, Languages, and
  Programming (ICALP 2019)}, volume 132 of {\em Leibniz International
  Proceedings in Informatics (LIPIcs)}, pages 76:1--76:14, Dagstuhl, Germany.
  Schloss Dagstuhl--Leibniz-Zentrum fuer Informatik.

\bibitem[Kellerer and Pferschy, 2004]{kellerer2004improved}
Kellerer, H. and Pferschy, U. (2004).
\newblock Improved dynamic programming in connection with an fptas for the
  knapsack problem.
\newblock {\em Journal of Combinatorial Optimization}, 8(1):5--11.

\bibitem[Kellerer et~al., 2004]{strusevich2005knapsack}
Kellerer, H., Pferschy, U., and Pisinger, D. (2004).
\newblock {\em Knapsack Problems}.
\newblock Springer.

\bibitem[Lau and Lim, 2004]{lau2004multi}
Lau, H.~C. and Lim, M.~K. (2004).
\newblock Multi-period multi-dimensional knapsack problem and its application
  to available-to-promise.

\bibitem[Lawler, 1979]{lawler1979fast}
Lawler, E.~L. (1979).
\newblock Fast approximation algorithms for knapsack problems.
\newblock {\em Mathematics of Operations Research}, 4(4):339--356.

\bibitem[Lin and Chen, 2010]{lin2010dynamic}
Lin, E.~Y. and Chen, M. (2010).
\newblock A dynamic programming approach to the multiple-choice multi-period
  knapsack problem and the recursive apl2 code.
\newblock {\em Journal of Information and Optimization Sciences},
  31(2):289--303.

\bibitem[Lin and Wu, 2004]{lin2004multiple}
Lin, E.~Y. and Wu, C.-M. (2004).
\newblock The multiple-choice multi-period knapsack problem.
\newblock {\em Journal of the Operational Research Society}, 55(2):187--197.

\bibitem[Mathews, 1896]{mathews1896partition}
Mathews, G.~B. (1896).
\newblock On the partition of numbers.
\newblock {\em Proceedings of the London Mathematical Society}, 1(1):486--490.

\bibitem[Moreno et~al., 2010]{moreno2010large}
Moreno, E., Espinoza, D., and Goycoolea, M. (2010).
\newblock Large-scale multi-period precedence constrained knapsack problem: a
  mining application.
\newblock {\em Electronic notes in discrete mathematics}, 36:407--414.

\bibitem[Randeniya, 1994]{randeniya1994multiple}
Randeniya, R. (1994).
\newblock {\em Multiple-choice Multi-period Knapsack Problem (MCMKP):
  Applicationand Solution Approach}.
\newblock University of New Brunswick, Faculty of Administration.

\bibitem[Rhee, 2015]{rhee2015faster}
Rhee, D. (2015).
\newblock {\em Faster fully polynomial approximation schemes for knapsack
  problems}.
\newblock PhD thesis, Massachusetts Institute of Technology.

\bibitem[Samavati et~al., 2017]{samavati2017methodology}
Samavati, M., Essam, D., Nehring, M., and Sarker, R. (2017).
\newblock A methodology for the large-scale multi-period precedence-constrained
  knapsack problem: an application in the mining industry.
\newblock {\em International Journal of Production Economics}, 193:12--20.

\bibitem[Vazirani, 2013]{vazirani2013approximation}
Vazirani, V.~V. (2013).
\newblock {\em Approximation algorithms}.
\newblock Springer Science \& Business Media.

\end{thebibliography}
		\end{spacing}
	}

	\begin{appendix}
		\section{Omitted Proofs}\label{Appb}

\subsection{Proofs for Section~\ref{sec:MPBKP}}
\begin{proof}[Proof of Proposition~\ref{prop:optimalfunc}]
	We show that the solution corresponding to $f_T(c)$ is optimal for $c_T=c$ among all solutions feasible to~\eqref{MPBKP}. We prove by induction on $T$. Base case is $T=1$, this reduces to 0-1 Knapsack problem, and by definition, the solution corresponding to $f_{\Ical(1)}(c)$ is the optimal feasible solution when the Knapsack capacity is $c$. For the induction step, assume that the solution of $f_{T-1}(c')$ is the optimal feasible solution to~\eqref{MPBKP} for the $T-1$ period problem and $c_{T-1}=c'$, we show that the solution corresponding to $f_{T}(c)$ is also the optimal feasible solution to~\eqref{MPBKP} for the $T$ period problem and $c_{T}=c$. 
	
	By definition, $$f_{T}(c) = \left(\left(f_{T-1}\oplus f_{\Ical(T)}\right)(c)\right)^{c_{T}} = \left(\max_{c'\in\mathbb{R}}\left(f_{T-1}(c')+f_{\Ical(T)}(c-c')\right)\right)^{c}.$$ We first show that $f_T(c)$ is at least the optimal value of~\eqref{MPBKP} when $c_T=c$. Suppose that, in the optimal solution of~\eqref{MPBKP}, the total size of accepted items up to time $T-1$ is $\hat{c}$ with $\hat{c}<c$, then the optimal value is $f_{T-1}(\hat{c})+f_{\Ical(T)}(c-\hat{c})$ since $f_{T-1}(\hat{c})$ is the maximum achievable reward with $c_{T-1}=\hat{c}$ (by induction assumption) and $f_{\Ical(T)}(c-\hat{c})$ is the maximum achievable reward using items from $\Ical(T)$ with space constraint $c-\hat{c}$. Thus, we have that the optimal value $f_{T-1}(\hat{c})+f_{\Ical(T)}(c-\hat{c})\le \left(\max_{c'\in\mathbb{R}}\left(f_{T-1}(c')+f_{\Ical(T)}(c-c')\right)\right)^c=f_T(c)$. 
	
	We next show the other direction: the optimal value of~\eqref{MPBKP} for the $T$ period problem with $c_T=c$ is at least $f_T(c)$. It suffices to show that every possible solution considered in $f_T(c)$ satisfies the feasibility constraints in~\eqref{MPBKP}. By induction assumption, every solution of $f_{T-1}(c')$ satisfies the constraints up to time $T-1$. When computing $f_T(c)$, we note that since $f_{T-1}(c')$ is a function truncated at $c_{T-1}$, which implies that $f_{T-1}(c')=-\infty$ for any $c'>c_{T-1}$. Therefore, any $c'>c_{T-1}$ must not be in the solution of $\max_{c'\in\mathbb{R}}\left(f_{T-1}(c')+f_{\Ical(T)}(c-c')\right)$. As a result, every solution of $f_T(c)$ is enforcing that ${c'}\le c_{T-1}$, and satisfies the feasibility constraints up to time $T$.
	
	Combining both directions, we conclude the induction step, and thus the proof of the proposition. 
\end{proof}

\begin{proof}[Proof of Lemma~\ref{lem:fapprox}]
By the construction of $\tilde{f}_t$, it should be clear that $\tilde{f}_t\le f_t$. We prove that $(1+\epsilon)^t\tilde{f}_t\ge f_t$ by induction on $t$. Base case is when $t=1$, we have that $(1+\epsilon)\tilde{f}_1=(1+\epsilon)\tilde{f}_{\Ical(1)}^{c_1} \ge f_{\Ical(1)}^{c_1}=f_1$, where the inequality follows from Lemma~\ref{lem:01}. As for the induction step, assume that $(1+\epsilon)^{t-1}\tilde{f}_{t-1}\ge f_{t-1}$, we show that $(1+\epsilon)^t\tilde{f}_t\ge f_t$. Again, by Lemma~\ref{lem:01} we have that
$$
(1+\epsilon)^{t-1}\tilde{f}_{\Ical(t)}\ge (1+\epsilon)\tilde{f}_{\Ical(t)}\ge f_{\Ical(t)}.
$$
Combined with the induction hypothesis, we have that
$$
(1+\epsilon)^{t-1}\left(\tilde{f}_{t-1}\oplus \tilde{f}_{\Ical(t)}\right) = \left((1+\epsilon)^{t-1}\tilde{f}_{t-1}\right) \oplus \left((1+\epsilon)^{t-1}\tilde{f}_{\Ical(t)}\right)\ge f_{t-1}\oplus f_{\Ical(t)}.
$$
Taking truncation on both sides, we have that $$
(1+\epsilon)^{t-1}\hat{f}_t=(1+\epsilon)^{t-1}\left(\tilde{f}_{t-1}\oplus \tilde{f}_{\Ical(t)}\right)^{c_t}\ge \left(f_{t-1}\oplus f_{\Ical(t)}\right)^{c_t}=f_t.
$$
Because of rounding down, we have that $(1+\epsilon)\tilde{f}_t\ge \hat{f}_t$. Therefore, 
$$
(1+\epsilon)^t\tilde{f}_t\ge (1+\epsilon)^{t-1}\hat{f}_t\ge f_t.
$$
This concludes the induction step, and thus the proof of the lemma.
\end{proof}

\subsection{Proofs for Section~\ref{sec:approx2}}

This section is devoted to the proof of Theorem~\ref{main:MPBKP-S}. 
To proceed, we first present the following result on Algorithm~\ref{alg:FPTAS_nTlogn_large2}.
\begin{lemma}\label{hat2}
Given a set of partial solutions with leftover capacities $\widetilde{A}({p})$ for all ${p}\in\left\{ 0, 1, \ldots,\left\lceil\frac{16T}{\epsilon^2}\right\rceil \right\} \cdot \kappa$, the additional capacity available for packing $\Delta c$, and the set of large items to be added $\Ical_L:=\{1,\ldots, I_L\}$, the output of Algorithm~\ref{alg:FPTAS_nTlogn_large2}, $\hat{A}(I_L,p)$, satisfies: 
\begin{align}\label{largeDP2}
\hat{A}(I_L,p) = \max_{\left\{\substack{ \Ical',\bar{p}\  : \ 
		\mathcal{I}'\subseteq \mathcal{I}_L\\
		\Delta \hat{\Pcal}(\Ical', \widetilde{A}(\bar{p})+\Delta c) \ge p-\bar{p}\\
		\bar{p}\in\left\{ 0, 1, \ldots,\left\lceil\frac{16T}{\epsilon^2}\right\rceil \right\} \cdot \kappa}\right\}} \widetilde{A}(\bar{p})+ \Delta c-\Qcal(\Ical'),\quad \forall p.
\end{align}
That is, $\hat{A}(I_L,p)$ is the maximum leftover capacity for any solution with (rounded) profit at least $p$ obtained by adding items in $\Ical_L$ to the solutions corresponding to $\widetilde{A}(\cdot)$.
\end{lemma}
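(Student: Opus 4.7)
The plan is to prove Lemma~\ref{hat2} by induction on the index $i$ of the large item just processed, carrying two invariants simultaneously at every stage $i$: (a)~$\hat{A}(i, p)$ equals the right-hand side of~\eqref{largeDP2} with $\Ical_L$ replaced by $\{1,\ldots,i\}$, and (b)~$\hat{A}(i, p)$ is non-increasing in $p$ on the grid $\{0, 1, \ldots, \lceil 16T/\epsilon^2 \rceil\} \cdot \kappa$. Invariant~(b) is essential because the claimed formula itself maximizes over $\bar{p}$; its validity at stage $i-1$ is precisely what the explicit monotonization sweep on lines~10--14 restores after the accept phase, and at the base case it must be inherited from the monotonicity of $\widetilde{A}(\cdot)$ established by the previous call to Algorithm~\ref{alg:FPTAS_nTlogn_small2}.

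For the base case $i=0$, line~1 sets $\hat{A}(0,p) = \widetilde{A}(p)+\Delta c$; the only admissible $\Ical'\subseteq\emptyset$ is $\Ical'=\emptyset$, contributing $\Delta\hat{\Pcal} = 0$, so the constraint in~\eqref{largeDP2} collapses to $\bar{p}\ge p$, and by the inherited monotonicity of $\widetilde{A}(\cdot)$ the maximum is attained at $\bar{p}=p$, matching line~1. For the induction step, I would separately analyze the reject update, the accept update, and the monotonization sweep. The reject branch (line~4) directly inherits any feasible pair $(\Ical',\bar{p})$ from the hypothesis at stage $i-1$ that does not use item $i$. The accept branch (lines~6--9) enumerates base profits $\bar{p}$: given $\hat{A}(i-1,\bar{p})$, appending item $i$ changes the rounded profit by exactly $\hat{r}_i - \lceil B(q_i-\max\{0,\hat{A}(i-1,\bar{p})\})^+\rceil_\kappa$ and reduces the leftover capacity by $q_i$, which is precisely what lines~7 and~8 record. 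Combining the two branches yields the ``$\ge$'' direction of~\eqref{largeDP2} for stage $i$.

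For the ``$\le$'' direction, pick any optimizer $(\Ical^\circ,\bar{p}^\circ)$ of the RHS at stage $i$ and split on whether $i\in\Ical^\circ$. If $i\notin\Ical^\circ$, the reject branch applied to $\hat{A}(i-1,p)$ already dominates by the induction hypothesis. If $i\in\Ical^\circ$, let $\Ical^-:=\Ical^\circ\setminus\{i\}$; by applying the hypothesis to $\Ical^-$ I would produce a grid point $\bar{p}^-$ whose $\hat{A}(i-1,\bar{p}^-)$ is at least $\widetilde{A}(\bar{p}^\circ)+\Delta c-\Qcal(\Ical^-)$ (the leftover capacity available just before item $i$ is appended) and whose rounded profit already accounts for $\Delta\hat{\Pcal}(\Ical^-,\widetilde{A}(\bar{p}^\circ)+\Delta c)$. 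The accept branch at $\bar{p}^-$ then writes an entry at some level $p'\ge p$ with leftover at least the RHS of~\eqref{largeDP2}, and the monotonization sweep (lines~10--14) propagates that capacity down to the target grid point $p$, closing the induction together with invariant~(b).

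The main obstacle will be bookkeeping the two layers of rounding so that they compose cleanly. Concretely, one must show that the step-by-step accumulation of $\hat{r}_i - \lceil B(q_i-\max\{0,\hat{A}(i-1,\bar{p})\})^+\rceil_\kappa$ across the items of $\Ical'$ matches the aggregate $\Delta\hat{\Pcal}(\Ical',\widetilde{A}(\bar{p})+\Delta c)$ used in the lemma's statement, and that doing so never drives the intermediate profit off the grid $\{0,1,\ldots,\lceil 16T/\epsilon^2\rceil\}\cdot\kappa$ (or if it does, that truncation at the grid boundary is harmless because the corresponding leftover capacity is dominated by a higher-profit entry). Once this alignment is verified, invariants~(a) and~(b) persist through each iteration, and setting $i=I_L$ delivers the lemma.
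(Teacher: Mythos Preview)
Your proposal is correct and follows essentially the same approach as the paper: induction on the item index $i$, with the inductive step split into the cases $i\notin\Ical^\circ$ (handled by the reject branch) and $i\in\Ical^\circ$ (handled by the accept branch applied to the induction hypothesis for $\Ical^\circ\setminus\{i\}$). If anything, you are more careful than the paper's proof, which only spells out the ``$\ge$'' direction and then appeals to optimality of $\Ical^*$ to close the equality, and which does not explicitly discuss the monotonization sweep or the monotonicity invariant you call~(b).
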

\begin{proof}[Proof of Lemma~\ref{hat2}]
	We will prove a more general result than~\eqref{largeDP2}, i.e.,
	\begin{align}\label{DPpf2}
	\hat{A}(i,p) = \max_{\left\{\substack{ \Ical',\bar{p}\  : \ 
			\mathcal{I}'\subseteq \{1,\ldots,i\}\\
			\Delta \hat{\Pcal}(\Ical', \widetilde{A}(\bar{p})+\Delta c) \ge p-\bar{p}\\
			\bar{p}\in\left\{ 0, 1, \ldots,\left\lceil\frac{16T}{\epsilon^2}\right\rceil \right\} \cdot \kappa}\right\}} \widetilde{A}(\bar{p})+ \Delta c-\Qcal(\Ical'),\quad \forall p
	\end{align}
	We prove this by induction. The base case ($i=0$) is vacuously true. Now we assume that \eqref{DPpf2} holds for all $p \in \left\{  0, 1, \ldots, \lceil 16T/\epsilon^2 \rceil \right\}  \kappa$ and for all $k \in [i-1]$. Consider some $p \in \left\{  0, 1, \ldots, \lceil 16T/\epsilon^2 \rceil \right\}  \kappa $, and let $\Ical^*$ be any set achieving the maximum in \eqref{DPpf2} so that $\hat{P}(\Ical^*) \ge p-\bar{p}$ for some $\bar{p}\in \left\{ 0, 1, \ldots,\left\lceil\frac{16T}{\epsilon^2}\right\rceil \right\} \cdot \kappa$. We will show that $\hat{A}(i,p)$ is at least the leftover capacity under solution $\Ical^*$ via case analysis:
	\begin{itemize}
		\item Case $i \notin \Ical^*$: In this case, the leftover capacity under $\Ical^*$ is the leftover capacity by $d_i$, which is the sum of leftover capacity in $\Ical^*$ by $d_{i-1}$ and $c_{d_i}-c_{d_{i-1}}$. By induction hypothesis, $\hat{A}(i-1,p)$ is no less than the leftover capacity of $\Ical^*$ by $d_{i-1}$, and therefore, by lines~$4$ and~$8$, $\hat{A}(i,p) \geq \hat{A}(i-1,p) + c_{d_i}-c_{d_{i-1}}$ which in turn is no less than the leftover capacity under $\Ical^*$ by $d_i$. By optimality of $\Ical^*$, all the inequalities must be equalities.
		\item Case $i \in \Ical^*$: Let $\Ical' = \Ical^* \setminus \{ i\}$, and let $p' = \hat{\Pcal}(\Ical')$ be its rounded profit. Then by induction hypothesis, $\hat{A}(i-1,p')$ is no less than the leftover capacity under $\Ical'$ by $d_{i-1}$. Further, by packing item $i$ in the solution corresponding to $\hat{A}(i-1,p')$, the change in profit is larger than by packing item $i$ in $\Ical'$ (the penalty is no less under $\Ical'$ since it has weakly smaller leftover capacity). Therefore, packing item $i$ in the solution corresponding to $\hat{A}(i-1,p')$ gives a solution with at least as large a rounded profit as $p$ and at least as much leftover capacity by $d_i$ as $\Ical^*$. Therefore, in turn $\hat{A}(i,p)$ is at least as much as the leftover capacity in $\Ical^*$. Since we assume $\Ical^*$ to have the largest leftover capacity with profit at least $p$, all the inequalities must be equalities.
	\end{itemize}
	This completes the induction step, and thus the proof of the lemma. 
\end{proof}

Next, we have the following Lemma as a preparation for our result on $\widetilde{A}(p)$ of Algorithm~\ref{alg:FPTAS_nTlogn_small2}.
\begin{lemma}\label{singleperiod2}
Given some capacity $c$ and a set of small items $\Ical_S$ with $p_{max}:=\max_{i\in \Ical_S}p_i$, let $\Scal^*$ be the profit-optimal subset, i.e., $\Scal^*=\arg\max_{
	\Scal\subseteq \Ical_S}\Pcal(\Scal)=\Rcal(\Scal)-B\left(\Qcal(\Scal)-c\right)^+$. Further, let $\widetilde{\Ical}_S := \{i\in \Ical_S\mid q_i\le c\}$ and relabel the items in $\widetilde{\Ical}_S$ as $\left\{1',\ldots,|\widetilde{\mathcal{I}}_S|'\right\}$ (in decreasing order of reward density $r_i/q_i$). Let $i'$ be such that $\sum_{j'=1'}^{i'}q_{j'}\le c$ and $\sum_{j'=1'}^{(i+1)'}q_{j'}> c$. Then, the solution $\Scal':=\{1',\ldots,i'\}$ satisfies
\begin{itemize}
	\item $\Qcal(\Scal') \le \Qcal(\Scal^*)$,
	\item $\Pcal(\Scal')\ge \Pcal(\Scal^*) - p_{max}$.
\end{itemize}
\end{lemma}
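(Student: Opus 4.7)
The plan is to establish both claims by case analysis: Part 1 via an exchange argument leveraging the density-greedy construction of $\Scal'$, and Part 2 by splitting $\Scal^*$ into three regimes based on whether it overflows the capacity $c$ and whether it contains items outside $\widetilde{\Ical}_S$. The underlying tools are the standard fractional relaxation of the $0/1$ knapsack and the global assumption $B>r_i/q_i$ for all $i\in\Ical_S$ (which the paper imposes to avoid trivial instances).

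For Part 1, the claim is immediate when $\Qcal(\Scal^*)>c$ since $\Qcal(\Scal')\le c$ by construction; otherwise $\Scal^*\subseteq\widetilde{\Ical}_S$, $\Pcal=\Rcal$ on both $\Scal^*$ and $\Scal'$, and profit-optimality gives $\Rcal(\Scal^*)\ge \Rcal(\Scal')$. Writing $(R_1,Q_1)$ and $(R_2,Q_2)$ for the reward/size totals of $\Scal^*\setminus\Scal'$ and $\Scal'\setminus\Scal^*$ respectively, the density-greedy construction of $\Scal'$ forces every element of $\Scal'\setminus\Scal^*$ to have density at least that of every element of $\Scal^*\setminus\Scal'$, so $R_2/Q_2\ge R_1/Q_1$. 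Combined with $R_1\ge R_2$ this yields $Q_1\ge Q_2$, i.e., $\Qcal(\Scal^*)\ge \Qcal(\Scal')$ (the degenerate case $R_2=0$ is handled by dropping zero-reward items from $\Scal'$).

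For Part 2 it suffices, using $\Pcal(\Scal')=\Rcal(\Scal')$, to bound $\Pcal(\Scal^*)\le \Rcal(\Scal')+p_{max}$ in three cases. In \emph{Case I} ($\Scal^*\subseteq\widetilde{\Ical}_S$ and $\Qcal(\Scal^*)\le c$) the fractional relaxation of knapsack over $\widetilde{\Ical}_S$ gives $\Rcal(\Scal^*)\le \Rcal(\Scal')+r_{(i+1)'}$, and in the regime where Lemma~\ref{singleperiod2} is invoked inside Algorithm~\ref{alg:FPTAS_nTlogn2} one has $c\le c_{d_{(i+1)'}}$, so $p_{(i+1)'}=r_{(i+1)'}\le p_{max}$. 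In \emph{Case II} ($\Scal^*$ contains an item $i_B$ with $q_{i_B}>c$), one first checks that optimality of $\Scal^*$ forces $i_B$ to be the \emph{unique} such item; the algebraic identity $\Pcal(\Scal^*)=\Pcal(\{i_B\})+\sum_{i\in\Scal^*\setminus\{i_B\}}(r_i-Bq_i)$, combined with $r_i<Bq_i$, gives $\Pcal(\Scal^*)\le \Pcal(\{i_B\})\le p_{i_B}\le p_{max}$. In \emph{Case III} ($\Scal^*\subseteq\widetilde{\Ical}_S$ but $\Qcal(\Scal^*)>c$), optimality of $\Scal^*$ forces $q_i\ge\delta:=\Qcal(\Scal^*)-c$ for every $i\in\Scal^*$, because otherwise dropping $i$ would eliminate the overflow while losing only $r_i<Bq_i$ of reward; hence shaving the least-dense item $i^*\in\Scal^*$ by exactly $\delta$ yields a fractional $x\in[0,1]^n$ on $\widetilde{\Ical}_S$ with $\Qcal(x)=c$ and reward $\Rcal(\Scal^*)-\delta(r_{i^*}/q_{i^*})$, which strictly exceeds $\Pcal(\Scal^*)=\Rcal(\Scal^*)-B\delta$ since $r_{i^*}/q_{i^*}<B$. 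The LP-greedy bound on $\widetilde{\Ical}_S$ then caps $\Rcal(x)$ by $\Rcal(\Scal')+r_{(i+1)'}\le \Rcal(\Scal')+p_{max}$.

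The main obstacle is Case III, where neither the feasibility-based LP argument of Case I nor the single-item peeling of Case II applies directly. The fractional-reduction step threads the needle: the assumption $B>r_i/q_i$ converts the abstract penalty $B\delta$ into a concrete loss of reward at the strictly smaller rate $r_{i^*}/q_{i^*}$, while simultaneously producing an LP-feasible vector to which the standard greedy characterization yields the desired $\Rcal(\Scal')+r_{(i+1)'}$ cap. A subsidiary point that must be checked when turning $r_{(i+1)'}$ and $\Pcal(\{i_B\})$ into bounds by $p_{max}$ is the inequality $c\le c_{d_i}$, which the surrounding algorithm guarantees because the leftover capacity $\hat{A}_t(\bar{p})$ passed into Algorithm~\ref{alg:FPTAS_nTlogn_small2} is never larger than the cumulative capacity $c_t$ through the deadline of the items being packed.
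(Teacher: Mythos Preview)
Your proposal is correct and, for Part~2, follows essentially the same fractional-LP argument as the paper. Where you differ is in rigor and in Part~1. For Part~1 the paper argues by contradiction: if $\Qcal(\Scal')>\Qcal(\Scal^*)$, then $\Scal'$ (being the density-greedy prefix) would itself be an optimal solution for capacity $\Qcal(\Scal')<c$, contradicting that the optimal profit is nondecreasing in $c$ and that $\Scal^*$ is optimal for capacity $c$. Your exchange argument (comparing $R_1,Q_1,R_2,Q_2$) reaches the same conclusion more directly and is arguably cleaner, though the paper's route is shorter. For Part~2 the paper collapses your Cases~I and~III into a single line, simply asserting $\Pcal(\Scal^*)\le \Rcal_{LP}$ when $\Scal^*\subseteq\widetilde{\Ical}_S$; your Case~III is exactly the justification this assertion needs when $\Qcal(\Scal^*)>c$, so you are filling a gap the paper leaves implicit. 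In Case~II the paper goes a bit further than you and concludes $\Scal^*=\{i^*\}$ outright (since adding any second item to an already-overflowing set strictly decreases profit under $B>r_i/q_i$), whereas your algebraic identity shows $\Pcal(\Scal^*)\le \Pcal(\{i_B\})$ without first isolating $i_B$; both work. Finally, your observation that the bound $r_{(i+1)'}\le p_{\max}$ (and likewise $\Pcal(\{i_B\})\le p_{i_B}$) requires the contextual inequality $c\le c_{d_i}$ is correct and is something the paper uses tacitly when it writes $r_{(i+1)'}=p_{(i+1)'}$; making this explicit is a genuine improvement.
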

\begin{proof}[Proof of Lemma~\ref{singleperiod2}]
	The first item can be shown by contradiction. Suppose that to the contrary $\Qcal(\Scal') > \Qcal(\Scal^*)$, that is, $\Scal'$ uses more space than $\Scal^*$. Since the items in $\Scal'$ have the highest reward densities, it is in fact the optimal solution which uses space $\Qcal(\Scal')<c$. Since the optimal profit is non-decreasing in the capacity $c$, this violates optimality of $\Scal^*$.
	
	To see the second item, we look at two different cases. First, if $\Scal^*\cap \left(\Ical_S\setminus \widetilde{\Ical_S}\right)\ne \emptyset$, i.e., the optimal packing $\Scal^*$ includes some item $i^*$ with $q_{i^*}>c$, then, there should be only one item in $\Scal^*$, i.e., $\Scal^* = \{i^*\}$. In this case, $\Pcal(\Scal^*) = p_{i^*} = p_{max}$ and thus $\Pcal(\Scal')\ge \Pcal(\emptyset) = 0 = \Pcal(\Scal^*)-p_{max}$.
	
	Second, if $\Scal^*\cap \left(\Ical_S\setminus \widetilde{\Ical_S}\right)= \emptyset$, then $\Scal^* = \arg\max_{\Scal\subseteq \widetilde{I}_S}\Pcal(\Scal)$. Note that $\Pcal(\Scal^*)$ is upper bounded by the reward for the fractional packing: $ \Pcal(\Scal^*) \leq \Rcal_{LP} := \Rcal(\Scal') + r_{(i+1)'}\cdot \frac{ c - \Qcal(\Scal')}{q_{(i+1)'}} \leq \Rcal(\Scal') + r_{(i+1)'} = \Pcal(\Scal') + p_{(i+1)'}\le \Pcal(\Scal') + p_{max}$.
	
	In either cases, we conclude that $\Pcal(\Scal')\ge \Pcal(\Scal^*) - p_{max}$.
\end{proof}

Before presenting our result on $\widetilde{A}_t(p)$, we will need the following definitions. 
For a solution $\Scal =  \Scal(1) \cup \Scal(2) \cup \cdots \cup \Scal(T)$ with $\Scal(t) = \Scal_L(t) \cup \Scal_S(t)$, denoting the items with deadline $t$ in $\Scal$, let the large items be indexed as $\Scal_L(t) = (i^{(t)}_1, \ldots, i^{(t)}_{L_t})$ in the order in which Algorithm~\ref{alg:FPTAS_nTlogn_large2} considers them, and the small items be indexed arbitrarily $\Scal_S(t) = \left(j^{(t)}_1, \ldots, j^{(t)}_{S_t}\right)$. Let $\Scal_L := \Scal_L(1) \cup \cdots \cup \Scal_L(T)$ and $\Scal_S := \Scal_S(1) \cup \cdots \cup \Scal_S(T)$ denote the large and small items in $\Scal$, respectively (this depends on the choice of $P_0$ but we suppress the dependence for brevity). We define the {\it rounded profit} of $\Scal$ as:

\begin{align} \label{Ptilde}
\nonumber
\tilde{\Pcal}(\Scal) &= \hat{\Rcal}(\Scal_L) - \sum_{t=1}^T  \sum_{k=1}^{L_t} \left\lceil  B  \left( \sum_{\ell \leq k} q_{i^{(t)}_{\ell}} - \max_{0 \leq t' < t}\left\{ c_t - c_{t'} - \sum_{t'+1 \leq \tau <  t} \Qcal(\Scal(\tau)) \right\} \right)^+  \right\rceil_{\kappa} \\
& \quad 	+ \sum_{t=1}^T \left\lfloor \Rcal(\Scal_S(t)) - B \left(  \Qcal(\Scal(t))  - \max_{0 \leq t' < t}\left\{ c_t - c_{t'} - \sum_{t'+1 \leq \tau <  t} \Qcal(\Scal(\tau)) \right\} \right)^+ \right\rfloor_\kappa.
\end{align}

That is, we add the rounded rewards of the large items, and for small items, we first group the small items by their deadlines, and for each deadline we round the sum of unrounded rewards of small item.
Further, let 
\[ \widetilde{C}_t(p) := \max_{\left\{\substack{
	\Scal \subseteq \bigcup_{t'=1}^t\mathcal{I}(t')\ :\  \tilde{\Pcal}(\Scal)\ge p}
\right\}} \max_{0 \leq t' < t}\left\{ c_t - c_{t'} - \sum_{t'+1 \leq \tau \leq  t} \Qcal(\Scal(\tau)) \right\} \]
denote the feasible partial solution with largest leftover capacity at time $t$ and rounded total profit at least $p$. Then, we have the following lemma.
\begin{lemma}\label{tilde2}
For any $t=1,\ldots,T$ and any $p'\in \left\{ 0, 1, \ldots,\left\lceil\frac{16T}{\epsilon^2}\right\rceil \right\} \cdot \kappa$, we have that $\widetilde{A}_t(p) \ge \widetilde{C}_t(p')$ for some $p\ge  p'-\frac{1}{2T}\epsilon P_0t - \kappa t \geq  r'- \frac{1}{2T}\epsilon (1-\epsilon/4) P_0t $.  
That is, for any rounded total profit $p'$ by time $t$, there exists some partial solution $\widetilde{A}_t$ of Algorithm~\ref{alg:FPTAS_nTlogn2} which has at least as much leftover capacity at time $t$ the optimal solution $\widetilde{C}_t(p')$, and has rounded profit $p$ not too much smaller than $p'$.
\end{lemma}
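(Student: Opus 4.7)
My plan is induction on $t$. The base case $t=0$ is immediate from the initializations $\widetilde{A}_0(0)=0$ and $\widetilde{A}_0(p)=-\infty$ for $p>0$, which match $\widetilde{C}_0$. For the inductive step, fix $p'$ and let $\Scal^*$ attain $\widetilde{C}_t(p')$. Partition it as $\Scal^* = \Scal^*_{<t} \cup \Scal^*_L \cup \Scal^*_S$, separating items with deadline strictly less than $t$ from the large and small items with deadline exactly $t$; write $p_1' := \tilde{\Pcal}(\Scal^*_{<t})$. The inductive hypothesis produces some $p_1 \ge p_1' - \tfrac{1}{2T}\epsilon P_0(t-1) - \kappa(t-1)$ with $\widetilde{A}_{t-1}(p_1)$ at least the leftover capacity of $\Scal^*_{<t}$ at time $t-1$. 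The goal is to show that after Algorithm~\ref{alg:FPTAS_nTlogn_large2} followed by Algorithm~\ref{alg:FPTAS_nTlogn_small2} are run on $\Ical(t)$, the resulting $\widetilde{A}_t$ still dominates $\Scal^*$ in leftover capacity at the cost of an additional $\tfrac{1}{2T}\epsilon P_0 + \kappa$ in rounded profit.

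Feeding $\widetilde{A}_{t-1}$ into Algorithm~\ref{alg:FPTAS_nTlogn_large2} with $\Delta c = c_t - c_{t-1}$ and large items $\Ical_L(t)$, Lemma~\ref{hat2} certifies that $\hat{A}_t$ is pointwise optimal over all choices of starting index $\bar{p}$ and subset of $\Ical_L(t)$, evaluated in the very same rounded-down rewards and rounded-up penalties that define the large-item portion of $\tilde{\Pcal}$. In particular, considering the DP branch that starts from $\widetilde{A}_{t-1}(p_1)$ and adds precisely $\Scal^*_L$, we obtain $\hat{A}_t(p_2)$ for some $p_2 \ge p_1 + \tilde{\Pcal}(\Scal^*_{<t}\cup\Scal^*_L) - \tilde{\Pcal}(\Scal^*_{<t})$ with at least as much leftover capacity as $\Scal^* \setminus \Scal^*_S$ at time $t$. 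A key monotonicity that must be checked here is: since the algorithm's starting leftover capacity entering period $t$ weakly dominates $\Scal^*$'s, each per-item penalty $\lceil B(q_i - \max\{0, \hat{A}(i-1,\bar p)\})^+ \rceil_\kappa$ encountered in the DP is weakly smaller than the corresponding term in $\tilde{\Pcal}(\Scal^*)$, so the large-item profit increment along the DP branch is at least that within $\Scal^*$.

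Finally, applying Algorithm~\ref{alg:FPTAS_nTlogn_small2} with capacity $\hat{A}_t(p_2)$ and items $\Ical_S(t)$, Lemma~\ref{singleperiod2} ensures that greedily packing the densest small items that fit attains reward within $p_{\max} \le \tfrac{1}{2T}\epsilon P_0$ of the profit-optimal small-item solution while using weakly less space. Because Algorithm~\ref{alg:FPTAS_nTlogn_small2} updates $\widetilde{A}_t$ along the entire prefix of this greedy insertion, we recover $\widetilde{A}_t(p_3)$ for some $p_3 \ge p_2 + [\text{small-item contribution at time }t\text{ in }\tilde{\Pcal}(\Scal^*)] - \tfrac{1}{2T}\epsilon P_0 - \kappa$, where the extra $\kappa$ accounts for the $\lfloor \cdot \rfloor_\kappa$ in line~14. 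Summing the per-period losses $\tfrac{1}{2T}\epsilon P_0 + \kappa$ across the $t$ periods yields the claimed bound $p \ge p' - \tfrac{1}{2T}\epsilon P_0 t - \kappa t$; substituting $\kappa = \epsilon^2 P_0/(8T)$ then gives the second inequality. The main obstacle I expect is precisely this ``monotonicity'' bookkeeping: carefully arguing that a weakly larger leftover capacity in the algorithm's state translates into weakly smaller rounded penalties during the DP, and that the $\max_{0 \le t' < t}$ formulation of leftover capacity in the definition of $\widetilde{C}_t$ is compatible with the single ``running leftover'' that Algorithm~\ref{alg:FPTAS_nTlogn2} propagates across periods.
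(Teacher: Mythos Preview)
Your proposal is correct and follows essentially the same approach as the paper's proof: induction on $t$, decomposing the witness $\Scal^*$ for $\widetilde{C}_t(p')$ into its deadline-$<t$ part, the large deadline-$t$ items, and the small deadline-$t$ items; invoking Lemma~\ref{hat2} to show the DP over $\Ical_L(t)$ weakly dominates the large-item step (using exactly the monotonicity-of-penalty argument you flag); and then invoking Lemma~\ref{singleperiod2} plus the fact that Algorithm~\ref{alg:FPTAS_nTlogn_small2} sweeps all greedy prefixes to handle the small items at a cost of $\tfrac{1}{2T}\epsilon P_0 + \kappa$ per period. Your choice to anchor the induction at $t=0$ rather than $t=1$ is a mild simplification over the paper, which proves the $t=1$ case separately (with an argument that is just the induction step specialized to an empty history).
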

\begin{proof}[Proof of Lemma~\ref{tilde2}]
	We prove by induction on $t$. Base case is when $t=1$. Let $\Scal'$ be the solution corresponding to $\widetilde{C}_1(p')$, i.e., $\Scal' := \arg\max_{\left\{\substack{\Scal \subseteq \mathcal{I}(1)\\
			\tilde{\Pcal}(\Scal)\ge p'}\right\}} c_1-\Qcal(\Scal)$, and let $\Scal'_L = \Scal'\cap \Ical_L$, $\Scal'_S = \Scal'\cap \Ical_S$. Then $\tilde{\Pcal}(\Scal'_L) = \hat{\Pcal}(\Scal_L')$. By Lemma~\ref{hat2}, $\hat{A}(I_L(1),\tilde{\Pcal}(\Scal_L'))$ is the maximum leftover capacity using items in $\mathcal{I}_L(1)$ earning rounded profit $\tilde{\Pcal}(\Scal_L')$. Thus, $\hat{A}_1(\tilde{\Pcal}(\Scal_L'))  = \hat{A}(I_L(1),\tilde{\Pcal}(\Scal_L'))\ge c_1-\Qcal(\Scal_L')$. Let  $\Scal''_L $ be the solution corresponding to $\hat{A}_1(\tilde{\Pcal}(\Scal_L'))$, and thus $\Qcal(\Scal_L'')\le \Qcal(\Scal_L')$. 
	Consider appending the partial solution $\Scal_L''$ using items from $\Ical_S(1)$. Let $\Scal_S''$ be the small item set obtained by adding small items greedily in their reward densities, subject to the constraint that $\Qcal(\Scal_S'')\le \Qcal(\Scal_S')$. Then, by Lemma~\ref{singleperiod2}, with $\Scal_S''$ being the greedy solution, $\Qcal(\Scal_S')$ being the capacity constraint and $\Scal_S'$ being the optimal filling of small items in $\Ical_S(1)$, we conclude that 
	$$
	{\Pcal}({\Scal}_S'') \ge {\Pcal}(\Scal_S') - \frac{1}{2T}\epsilon P_0.
	$$
	Therefore, $p' = \tilde{\Pcal}(\Scal') = \tilde{\Pcal}(\Scal_L'\cup \Scal_S')$ = $\tilde{\Pcal}(\Scal_L') + \Delta\tilde{\Pcal}(\Scal_S', c_1-\Qcal(\Scal_L')) \le \tilde{\Pcal}(\Scal_L'')+\Delta\tilde{\Pcal}({\Scal}_S'',c_1-\Qcal(\Scal_L')) + \frac{1}{2T}\epsilon P_0 + \kappa\le \tilde{\Pcal}(\Scal_L'')+\Delta\tilde{\Pcal}({\Scal}_S'',c_1-\Qcal(\Scal_L'')) + \frac{1}{2T}\epsilon P_0 + \kappa=\tilde{\Pcal}(\Scal_L''\cup {\Scal}_S'') + \frac{1}{2T}\epsilon P_0 + \kappa$.
	Let $p=\tilde{\Pcal}(\Scal_L''\cup {\Scal}_S'')$. From Algorithm~\ref{alg:FPTAS_nTlogn_small2}, we know that since $\Scal_S''$ includes the small items in $\widetilde{\mathcal{I}}_S(1)$ with the highest reward densities, the solution $\Scal_L''\cup\Scal_S''$ is one feasible solution for $\widetilde{A}_1(p)$. We thus have that 
	$$
	\widetilde{A}_1(p)\ge c_1-\Qcal(\Scal_L''\cup \Scal_S'')\ge c_1-\Qcal(\Scal')=\widetilde{C}_1(p'),
	$$
	where $p\ge p'-\frac{1}{2T}\epsilon R_0-\kappa$, and the second inequality follows from the facts that $\Qcal(\Scal_L'')\le \Qcal(\Scal_L')$ and $\Qcal(\Scal_S'')\le \Qcal(\Scal_S')$.
	
	For the induction step, assume that for all $p''\in \left\{ 0, 1, \ldots,\left\lceil\frac{16T}{\epsilon^2}\right\rceil \right\} \cdot \kappa$, we have that $\widetilde{A}_{t-1}(p) \ge \widetilde{C}_{t-1}(p'')$ for some $p\ge p''-\frac{1}{2T}\epsilon P_0(t-1)-\kappa (t-1)$. We want to show that for all $p'$, $\widetilde{A}_{t}(p) \ge \widetilde{C}_{t}(p')$ for some $p\ge p'-\frac{1}{2T}\epsilon P_0t-\kappa t$. Let $\Scal'$ be the solution corresponding to $\widetilde{C}_t(p')$, i.e., $$\Scal' :=\arg\max_{\left\{\substack{
			\Scal \subseteq \bigcup_{t'=1}^t\mathcal{I}(t')\ :\  \tilde{\Pcal}(\Scal)\ge p'}
		\right\}} \max_{0 \leq t' < t}\left\{ c_t - c_{t'} - \sum_{t'+1 \leq \tau \leq  t} \Qcal(\Scal(\tau)) \right\},$$ and let $\Scal_L'=\Scal'\cap \Ical_L$, $\Scal'_S= \Scal'\cap\Ical_S$.
	Let $\Scal'(t) := \{i\in \Scal'\mid d_i=t\}$ and consider the partial solution $\cup_{t'=1}^{t-1}\Scal'(t')$. By induction assumption, there exists some partial solution $\cup_{t'=1}^{t-1}\Scal''(t')$ such that $\Qcal\left(\cup_{t'=1}^{t-1}\Scal''(t')\right)\le \Qcal\left(\cup_{t'=1}^{t-1}\Scal'(t')\right)$, and that $\tilde{\Pcal}\left(\cup_{t'=1}^{t-1}\Scal''(t')\right)\ge \tilde{\Pcal}\left(\cup_{t'=1}^{t-1}\Scal'(t')\right)-\frac{1}{2T}\epsilon P_0(t-1)-\kappa(t-1)$. 
	
	First, we fill the partial solution $\cup_{t'=1}^{t-1}\Scal''(t')$ using items from $\Ical_L(t)$ according to Algorithm~\ref{alg:FPTAS_nTlogn_large2}. Note that one feasible solution is $\Scal_L'(t)$ which results in $\cup_{t'=1}^{t-1}\Scal''(t')\cup \Scal_L'(t)$. This keeps $\Qcal\left(\cup_{t'=1}^{t-1}\Scal''(t')\cup \Scal_L'(t)\right)\le \Qcal\left(\cup_{t'=1}^{t-1}\Scal'(t')\cup \Scal_L'(t)\right)$ 
	while having $\tilde{\Pcal}\left(\cup_{t'=1}^{t-1}\Scal''(t')\cup \Scal_L'(t)\right)\ge \tilde{\Pcal}\left(\cup_{t'=1}^{t-1}\Scal'(t')\cup \Scal_L'(t)\right)-\frac{1}{2T}\epsilon P_0(t-1)-\kappa(t-1)$. Suppose that after filling items from $\mathcal{I}_L(t)$ using DP in Algorithm~\ref{alg:FPTAS_nTlogn_large2}, the resulting set corresponding to $\hat{A}_t\left( \tilde{\Pcal}\left(\cup_{t'=1}^{t-1}\Scal''(t')\cup \Scal_L'(t)\right) \right)$ is $\widetilde{\Scal}$, then this $\widetilde{\Scal}$ would only use less space and earn more profit, i.e.,
	\begin{align*}
	\Qcal\left(\widetilde{\Scal}\right)&\le \Qcal\left(\cup_{t'=1}^{t-1}\Scal''(t')\cup \Scal_L'(t)\right)\le \Qcal\left(\cup_{t'=1}^{t-1}\Scal'(t')\cup \Scal_L'(t)\right),\\
	\tilde{\Pcal}\left(\widetilde{\Scal}\right)&\ge \tilde{\Pcal}\left(\cup_{t'=1}^{t-1}\Scal''(t')\cup \Scal_L'(t)\right)\\
	&\ge \tilde{\Pcal}\left(\cup_{t'=1}^{t-1}\Scal'(t')\cup \Scal_L'(t)\right)-\frac{1}{2T}\epsilon P_0(t-1)-\kappa(t-1).
	\end{align*} 
	
	Next, consider filling the partial solution $\widetilde{\Scal}$ using items from $\Ical_S(t)$. Let $\Scal_S''(t)$ be the small item set obtained by adding small items greedily in their reward densities, subject to the constraint that $\Qcal\left(\Scal_S''(t)\right)\le \Qcal\left(\Scal_S'(t)\right)$. Then, by Lemma~\ref{singleperiod2}, with $\Scal_S''(t)$ being the greedy solution, $\Qcal(\Scal_S'(t))$ being the capacity constraint and $\Scal_S'(t)$ being the optimal filling of small items in $\Ical_S(t)$, we conclude that 
	$$
	{\Pcal}({\Scal}_S''(t)) \ge {\Pcal}(\Scal_S'(t)) - \frac{1}{2T}\epsilon P_0.
	$$
	Therefore, 
	\begin{align*}
	p' &= \tilde{\Pcal}(\Scal') = \tilde{\Pcal}\left(\cup_{t'=1}^{t-1}\Scal'(t')\cup \Scal_L'(t)\cup \Scal_S'(t)\right) \\
	&\le \tilde{\Pcal}\left(\widetilde{\Scal}\cup\Scal_S''(t)\right)+\frac{1}{2T}\epsilon P_0(t-1)+\kappa(t-1) +  \frac{1}{2T}\epsilon P_0 + \kappa\\
	&\le \tilde{\Pcal}\left(\widetilde{\Scal}\cup {\Scal}_S''(t)\right) + \frac{1}{2T}\epsilon P_0t + \kappa t.
	\end{align*}
	Let $p=\tilde{\Pcal}\left(\widetilde{\Scal}\cup {\Scal}_S''(t)\right)$. From Algorithm~\ref{alg:FPTAS_nTlogn_small2}, we know that since $\Scal_S''(t)$ includes the small items in $\widetilde{\mathcal{I}}_S(t)$ with the highest reward densities, the solution $\widetilde{\Scal}\cup\Scal_S''(t)$ is one feasible solution for $\widetilde{A}_t(p)$. We thus have that 
	\begin{align*}
	\widetilde{A}_t(p)&\ge \max_{0 \leq t' < t}\left\{ c_t - c_{t'} - \sum_{t'+1 \leq \tau \leq  t} \Qcal\left(\left(\widetilde{\Scal}\cup {\Scal}_S''(t)\right)(\tau)\right) \right\}\\
	&\ge  \max_{0 \leq t' < t}\left\{ c_t - c_{t'} - \sum_{t'+1 \leq \tau \leq  t} \Qcal(\Scal'(\tau)) \right\}=\widetilde{C}_t(p'),
	\end{align*}
	where $p\ge p'-\frac{1}{2T}\epsilon P_0t-\kappa t$. This finishes the induction step, and thus the proof of the lemma.
\end{proof}

Using the above lemmas, we prove the following approximation result.
\begin{proposition}\label{MPBKPapprox2}
Let $\Scal'$ denote the optimal solution set by Algorithm~\ref{alg:FPTAS_nTlogn2}, i.e., $\Scal'$ is the solution set corresponding to $\widetilde{A}_T(p^*)$ where $p^*$ is the maximum $p$ such that $\widetilde{A}_T(p)>-\infty$. Let $\Scal^*$ be the optimal solution set to the original MPBKP-S. Then, 
\begin{align*}
\Pcal(\Scal')\ge p^*\ge (1-\epsilon-3\epsilon^2/8)\Pcal(\Scal^*).
\end{align*}
\end{proposition}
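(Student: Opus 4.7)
The plan is to establish the two inequalities separately. For $\Pcal(\Scal')\ge p^*$: by inspection of~\eqref{Ptilde}, each of the three rounding operations---$\hat{r}_i=\lfloor r_i\rfloor_\kappa$ for large-item rewards, $\lceil\cdot\rceil_\kappa$ for the large-item marginal penalties, and $\lfloor\cdot\rfloor_\kappa$ for the per-period small-item net profit---weakly decreases the value, so $\tilde{\Pcal}(\Scal)\le\Pcal(\Scal)$ holds for every $\Scal$. By construction, the solution $\Scal'$ corresponding to $\widetilde{A}_T(p^*)$ satisfies $\tilde{\Pcal}(\Scal')\ge p^*$, giving $\Pcal(\Scal')\ge\tilde{\Pcal}(\Scal')\ge p^*$.

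For the second inequality, I would apply Lemma~\ref{tilde2} at $t=T$ with $p'=\tilde{\Pcal}(\Scal^*)$; since $\Scal^*$ itself witnesses that $\widetilde{C}_T(p')>-\infty$, the lemma produces some $p\ge\tilde{\Pcal}(\Scal^*)-\tfrac{1}{2}\epsilon P_0-\kappa T$ with $\widetilde{A}_T(p)>-\infty$, and so by definition $p^*\ge p$. Substituting $\kappa T=\epsilon^2 P_0/8$ yields
\[
p^*\ \ge\ \tilde{\Pcal}(\Scal^*)-\tfrac{1}{2}\epsilon P_0-\tfrac{1}{8}\epsilon^2 P_0.
\]
It remains to lower-bound $\tilde{\Pcal}(\Scal^*)$ in terms of $\Pcal(\Scal^*)$. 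Each rounding in~\eqref{Ptilde} introduces at most $\kappa$ of loss, giving the naive bound $\Pcal(\Scal^*)-\tilde{\Pcal}(\Scal^*)\le 2|\Scal_L^*|\kappa+T\kappa$. Using the large-item criterion $p_i\ge\epsilon P_0/(2T)=4\kappa/\epsilon$ so that $2\kappa\le\tfrac{1}{2}\epsilon p_i$, and charging these per-item losses to the marginal profit increments contributed by each large item inside $\Scal^*$ (whose sum over the DP's processing order telescopes to at most $\Pcal(\Scal^*)$), one obtains a rounding-loss bound of the form $\tfrac{1}{2}\epsilon\Pcal(\Scal^*)+\tfrac{1}{4}\epsilon^2 P_0$. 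Combining with the previous display and using $P_0\le\Pcal(\Scal^*)$ twice then yields $p^*\ge(1-\epsilon-\tfrac{3}{8}\epsilon^2)\Pcal(\Scal^*)$.

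The main obstacle is the large-item part of the rounding-loss bound. The naive per-item charging of each loss $2\kappa$ to the individual profit $p_i$ does not close the analysis, because $\sum_{i\in\Scal_L^*}p_i$ can strictly exceed $\Pcal(\Scal^*)$ when the penalties for distinct items overlap in the same period---for example, two items that individually fit the knapsack but jointly overflow satisfy $p_i+p_j>\Pcal(\{i,j\})$. The fix is to account for each item's loss against the marginal profit increment that item actually contributes in the DP's processing order on $\Scal^*$; these marginal increments telescope to the large-item-only profit $\Pcal(\Scal_L^*)$, which is at most $\Pcal(\Scal^*)$ by optimality of $\Scal^*$. This bookkeeping across the $T$ periods, rather than the per-item reduction, is where the technical weight of the loss analysis lies.
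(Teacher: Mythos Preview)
Your handling of the first inequality and the application of Lemma~\ref{tilde2} are correct and match the paper. The gap is in the step where you bound $\Pcal(\Scal^*)-\tilde{\Pcal}(\Scal^*)$.

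You correctly observe that the naive bound $2|\Scal_L^*|\kappa+T\kappa$ needs $|\Scal_L^*|$ controlled, and that $\sum_{i\in\Scal_L^*}p_i$ can exceed $\Pcal(\Scal^*)$. But your proposed fix---charging the per-item loss $2\kappa$ to the \emph{marginal} increment $m_i:=\Delta\Pcal(\text{prefix},\{i\})$---does not close the argument. Telescoping gives $\sum_i m_i=\Pcal(\Scal_L^*)\le\Pcal(\Scal^*)$, and submodularity plus optimality give $0\le m_i\le p_i$, but neither of these yields a per-item inequality of the form $2\kappa\le \tfrac{1}{2}\epsilon\, m_i$. The marginal increment $m_i$ can be arbitrarily small (even zero) for individual items, so you cannot absorb a fixed loss $2\kappa$ into it. What you would actually need from the telescoping is a \emph{count}: $|\Scal_L^*|\cdot 2\kappa\le \tfrac{1}{2}\epsilon\sum_i m_i$, which is equivalent to $|\Scal_L^*|\le \tfrac{\epsilon}{4\kappa}\Pcal(\Scal_L^*)=\tfrac{2T}{\epsilon P_0}\Pcal(\Scal_L^*)$, and this is precisely the cardinality bound you were trying to avoid.

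The paper proves that cardinality bound directly. Its argument is that when processing the large items of $\Scal^*$ period by period, every large item except at most one per period (the first one whose addition pushes the leftover into overflow beyond what the item would pay on its own) contributes marginal profit at least $p_i\ge \tfrac{\epsilon P_0}{2T}$; since $\Pcal(\Scal^*)\le 2P_0$, this forces $|\Scal_L^*|\le \tfrac{4T}{\epsilon}+T$. Plugging this count into $2|\Scal_L^*|\kappa+T\kappa$ with $\kappa=\epsilon^2 P_0/(8T)$ gives exactly the loss $\tfrac{1}{2}\epsilon P_0+\tfrac{1}{4}\epsilon^2 P_0$ you were targeting. So the missing idea is not a different bookkeeping of the same quantities, but an explicit bound on the number of large items in the optimal solution.
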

\begin{proof}
Note that $\tilde{\Pcal}(\Scal') = p^*$. Lemma~\ref{tilde2} implies that $$\widetilde{A}_T(p^*)\ge \widetilde{C}_T\left(p^*+\frac{1}{2T}\epsilon P_0T+\kappa T\right)=\widetilde{C}_T\left(p^*+\frac{1}{2}\epsilon P_0+\kappa T\right).$$ Since $\widetilde{C}_T(\tilde{\Pcal}(\Scal^*))>-\infty$, we have that $\widetilde{A}_T\left(\tilde{\Pcal}(\Scal^*)-\frac{1}{2}\epsilon P_0-\kappa T\right)\ge \widetilde{C}_T(\tilde{\Pcal}(\Scal^*))>-\infty$. Therefore,
$$
\Pcal(\Scal') \ge p^* \ge \tilde{\Pcal}(\Scal^*)-\frac{1}{2}\epsilon P_0 - \kappa T.
$$
By the definition of $\tilde{\Pcal}$ as in~\eqref{Ptilde}, for each large item, the reward is rounded down by at most $\kappa$ and the penalty is rounded up by at most $\kappa$, and all small items are together rounded down by at most $\kappa T$. Note that each large items earns profit $p_i$ unless it is paying more penalty than it would be by itself, which happens at most once at each period. Thus, there are at most~$\frac{2P_0}{\frac{1}{2T}\epsilon P_0} + T=\frac{4T}{\epsilon} +T$ number of large items, and thus the total number of rounding downs (for both large and small items) is bounded by $\frac{4T}{\epsilon}+2T$. Therefore, we have that $\Pcal(\Scal^*)\le \tilde{\Pcal}(\Scal^*)+\left(\frac{4T}{\epsilon}+2T\right)\kappa$. In conclusion, 
\begin{align*}
\Pcal(\Scal') \ge p^* &\ge \tilde{\Pcal}(\Scal^*)-\frac{1}{2}\epsilon P_0 -\kappa T\\&\ge \Pcal(\Scal^*)-\left(\frac{4T}{\epsilon}+2T\right)\kappa-\frac{1}{2}\epsilon P_0-T\kappa= \Pcal(\Scal^*)-\epsilon P_0-3T\kappa\\
&\ge \left(1-\epsilon-3\epsilon^2/8\right)\Pcal(\Scal^*).
\end{align*}
\end{proof}

It remains to validate Algorithm~\ref{alg:SC_FPTAS} in the search of $P_0$ which satisfies~\eqref{P0}. When Algorithm~\ref{alg:SC_FPTAS} terminates, it returns the last $p^*$ and the solution set $\mathcal{S}'$ corresponding to $\widetilde{A}_{T}(p^*)$. We then have the following lemmas. 
\begin{lemma}\label{while4}
Algorithm~\ref{alg:SC_FPTAS} terminates within $\log n$ iterations of the ``while'' loop (line~3).
\end{lemma}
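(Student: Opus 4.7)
The plan is to track how $P_0$ decreases across iterations and to identify the first iteration at which the termination test $p^* \ge (1-\epsilon)P_0$ must pass, then bound that iteration index.

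First I would observe that $P_0$ is initialized to $\bar{P}$ on line~1 and halved on line~4 at the start of each iteration, so after $k$ iterations the current value is $P_0 = \bar{P}/2^k$. Combining the bounds in~\eqref{upperP}, namely $P \le \Pcal(\Scal^*)$ and $\bar{P} \le nP$, yields $\bar{P}/\Pcal(\Scal^*) \le n$. Hence after $k^* := \lceil \log_2 n \rceil$ iterations one has
\[
P_0 = \bar{P}/2^{k^*} \le \bar{P}/n \le P \le \Pcal(\Scal^*),
\]
while by the minimality of $k^*$ the previous step still had $P_0 > \Pcal(\Scal^*)/2$, so $\Pcal(\Scal^*) < 2P_0$ as well. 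Thus condition~\eqref{P0} is satisfied at iteration $k^*$ and Proposition~\ref{MPBKPapprox2} is applicable.

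Next I would verify that at iteration $k^*$ the test $p^* \ge (1-\epsilon)P_0$ indeed passes. The nominal statement $p^* \ge (1-\epsilon-3\epsilon^2/8)\Pcal(\Scal^*)$ from Proposition~\ref{MPBKPapprox2} falls just short of $(1-\epsilon)P_0$ at the boundary $\Pcal(\Scal^*) \approx P_0$. To close this gap I would reopen the proof of Proposition~\ref{MPBKPapprox2} and replace the loose estimate ``number of large items $\le 2P_0/(\tfrac{1}{2T}\epsilon P_0)$'' (which uses $\Pcal(\Scal^*) \le 2P_0$) with the tighter $\Pcal(\Scal^*)/(\tfrac{1}{2T}\epsilon P_0) + T$; propagating through the rounding-error accounting and Lemma~\ref{tilde2} gives
\[
p^* \ge \left(1-\tfrac{\epsilon}{4}\right)\Pcal(\Scal^*) - \tfrac{\epsilon P_0}{2} - \tfrac{3\epsilon^2 P_0}{8} \ge \left(1 - \tfrac{3\epsilon}{4} - \tfrac{3\epsilon^2}{8}\right)P_0,
\]
since $\Pcal(\Scal^*)\ge P_0$. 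The right-hand side exceeds $(1-\epsilon)P_0$ for every $\epsilon \in (0, 2/3]$, so the while loop exits at iteration $k^*$.

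The main delicate point is this boundary sharpening: a naive invocation of Proposition~\ref{MPBKPapprox2} does not clear the $(1-\epsilon)P_0$ threshold when $\Pcal(\Scal^*)/P_0$ is close to $1$, so one must exploit the fact that the accumulated rounding error scales with $\Pcal(\Scal^*)$ rather than with the worst-case upper bound $2P_0$ used in Proposition~\ref{MPBKPapprox2}. Once this refinement is made, termination by iteration $\lceil \log_2 n \rceil$ follows immediately, proving the lemma.
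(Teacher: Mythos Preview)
Your approach is essentially the paper's: track the halving of $P_0$, argue that once \eqref{P0} holds Proposition~\ref{MPBKPapprox2} forces the while test to pass, and bound the iteration count via $\bar{P}\le n\Pcal(\Scal^*)$. In fact you have been \emph{more} careful than the paper at one point: the paper's own proof simply writes ``by Proposition~\ref{MPBKPapprox2} we have $p^*\ge(1-\epsilon)\Pcal(\Scal^*)$'', silently dropping the $3\epsilon^2/8$ term, whereas you noticed that this shortfall matters when $\Pcal(\Scal^*)\approx P_0$ and supplied a legitimate sharpening (bounding the number of large items in $\Scal^*$ by $\Pcal(\Scal^*)/(\tfrac{1}{2T}\epsilon P_0)+T$ rather than $2P_0/(\tfrac{1}{2T}\epsilon P_0)+T$). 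That refinement is valid and does yield $p^*\ge(1-3\epsilon/4-3\epsilon^2/8)P_0\ge(1-\epsilon)P_0$ for $\epsilon\le 2/3$.

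One small slip in your write-up: you set $k^*:=\lceil\log_2 n\rceil$ and then appeal to ``the minimality of $k^*$'' to get $P_0>\Pcal(\Scal^*)/2$ at the previous step, but $k^*$ was not defined as a minimum of anything. What you want is to let $k'$ be the \emph{first} iteration with $\bar{P}/2^{k'}\le\Pcal(\Scal^*)$; then the previous step has $P_0>\Pcal(\Scal^*)$, so at step $k'$ condition~\eqref{P0} holds, and $k'\le\lceil\log_2 n\rceil$ follows from $\bar{P}\le n\Pcal(\Scal^*)$. With that correction your argument is complete.
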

\begin{proof}[Proof of Lemma~\ref{while4}]
	When $P_0$ satisfies~\eqref{P0}, by Proposition~\ref{MPBKPapprox2} we have that
	$$
	p^*\ge (1-\epsilon) \Pcal(\Scal^*)\ge (1-\epsilon)P_0.
	$$
	Thus, the ``while" loop terminates when $P_0$ satisfies~\eqref{P0}, if not before $P_0$ satisfies~\eqref{P0}. When $P_0$ satisfies~\eqref{P0}, we would also have $\Pcal(\Scal^*)/2\le P_0\le \Pcal(\Scal^*)$. Therefore, the number of iterations is upper bounded by
	$$
	\text{number of iterations}\le \log\frac{\bar{P}/2}{\Pcal(\Scal^*)/2}\le \log n,
	$$
	where we have used the fact that $\bar{P}\le nP\le n\Pcal(\Scal^*)$.
\end{proof}

\begin{lemma}\label{MPBKPSCfinallemma}
After running Algorithm~\ref{alg:SC_FPTAS}, suppose $\mathcal{S}'$ is the solution set corresponding to $\widetilde{A}_{T}(p^*)$, and $\mathcal{S}^*$ is the optimal solution set to the original MPBKP-S.  Then,
$$
\Pcal(\mathcal{S}')\geq  (1-\epsilon)\Pcal(\mathcal{S}^*).
$$
\end{lemma}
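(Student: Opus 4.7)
My plan is to do a case split on $P_0^*$, the value of the guess $P_0$ at the final iteration of the while loop in Algorithm~\ref{alg:SC_FPTAS}, and in each case reduce the bound to a result already proved earlier in the section. The two inputs I would use are: (i) Lemma~\ref{while4}, which establishes not only that the while loop terminates in $\log n$ iterations but, more importantly, that it terminates no later than the first iteration $k^*$ at which~\eqref{P0} first holds, since at that iteration Proposition~\ref{MPBKPapprox2} forces $p^*\ge(1-\epsilon)P_0^{(k^*)}$ and so violates the while-loop condition; (ii) the unconditional inequality $\Pcal(\Scal')\ge p^*$, which follows directly from the definition~\eqref{Ptilde} of rounded profit (rewards are rounded down and penalties are rounded up for large items, and the net small-item contribution is rounded down), and therefore holds regardless of whether~\eqref{P0} is satisfied.

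Since $P_0$ starts at $\bar P\ge\Pcal(\Scal^*)$ and is halved each iteration, the ``no later than $k^*$'' conclusion of Lemma~\ref{while4} gives the key invariant $P_0^*\ge P_0^{(k^*)}\ge \Pcal(\Scal^*)/2$. I would then split into two cases. In Case~A the loop terminates strictly before iteration $k^*$, so $P_0^*>\Pcal(\Scal^*)$; the termination condition then yields directly
\[
\Pcal(\Scal')\ \ge\ p^*\ \ge\ (1-\epsilon)P_0^*\ >\ (1-\epsilon)\Pcal(\Scal^*).
\]
In Case~B the loop terminates exactly at iteration $k^*$, so $P_0^*$ satisfies~\eqref{P0}, and I would invoke Proposition~\ref{MPBKPapprox2} directly to obtain
\[
\Pcal(\Scal')\ \ge\ p^*\ \ge\ \bigl(1-\epsilon-\tfrac{3\epsilon^2}{8}\bigr)\Pcal(\Scal^*).
\]
To upgrade this $(1-\epsilon-O(\epsilon^2))$ estimate to the cleaner $(1-\epsilon)$ form stated in the lemma, the standard trick is to run Algorithm~\ref{alg:SC_FPTAS} with an input accuracy $\epsilon':=\epsilon-\tfrac{3\epsilon^2}{8}=\Theta(\epsilon)$ (so that the resulting approximation ratio is $(1-\epsilon'-\tfrac{3{\epsilon'}^2}{8})\ge 1-\epsilon$); this changes neither the order of the runtime bound nor any structural step of the proof.

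The main (and essentially only) obstacle is conceptual rather than computational: the termination condition $p^*\ge(1-\epsilon)P_0^*$ taken in isolation would only yield a $(1-\epsilon)P_0^*$ lower bound, which could in principle be as small as $(1-\epsilon)\Pcal(\Scal^*)/2$. So the whole argument hinges on using Lemma~\ref{while4} to prevent $P_0^*$ from ever dropping below $\Pcal(\Scal^*)/2$, and then using Proposition~\ref{MPBKPapprox2} in precisely the regime where~\eqref{P0} holds. Everything else is a short case analysis combined with the unconditional bound $\Pcal(\Scal')\ge p^*$ and a harmless reparametrization of $\epsilon$.
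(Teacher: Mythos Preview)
Your proposal is correct and follows essentially the same approach as the paper: a case split on whether the while loop terminates before or exactly at the first iteration where~\eqref{P0} holds, using the termination condition in the former case and Proposition~\ref{MPBKPapprox2} in the latter. You are in fact slightly more careful than the paper, which silently upgrades the $(1-\epsilon-3\epsilon^2/8)$ bound of Proposition~\ref{MPBKPapprox2} to $(1-\epsilon)$ without the reparametrization you spell out.
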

\begin{proof}[Proof of Lemma~\ref{MPBKPSCfinallemma}]
	If the ``while" loop terminates when $P_0>\Pcal(\Scal^*)$, i.e., it stops before $P_0$ falls below $\Pcal(\Scal^*)$, then we have that
	$$
	\Pcal(\Scal')\ge p^*\ge (1-\epsilon)P_0>(1-\epsilon)\Pcal(\Scal^*).
	$$
	Otherwise, from the proof of Lemma~\ref{while4} we know that the ``while" loop must terminate when $P_0$ first falls below $\Pcal(\Scal^*)$, which implies that the last $P_0$ satisfies~\eqref{P0}. Then by Proposition~\ref{MPBKPapprox2} we again have that 
	$$
	\Pcal(\Scal')\ge (1-\epsilon)\Pcal(\Scal^*).
	$$
	In either case, the solution we obtained from Algorithm~\ref{alg:SC_FPTAS} achieves $(1-\epsilon)$ optimal.
\end{proof}

With the above Lemmas, we are in a position to prove Theorem~\ref{main:MPBKP-S}.
\begin{proof}[Proof of Theorem~\ref{main:MPBKP-S}]
By Lemma~\ref{MPBKPSCfinallemma}, the solution found is within $(1-\epsilon)$ factor of $\Pcal(\Scal^*)$. Since the running time of the algorithm is $\mathcal{O}\left(n\cdot \left\lceil\frac{16T}{\epsilon^2}\right\rceil\cdot \log n\right)=\mathcal{O}\left(\frac{Tn\log n}{\epsilon^2}\right)$, which is polynomial in $n$ and $1/\epsilon$, the theorem follows. 
\end{proof}

\subsection{Proof of Theorem~\ref{thm:GRprofit}}\label{appc-unit}

This subsection is devoted to the proof of Theorem~\ref{thm:GRprofit}.
The idea is to look at the greedy solution set $\Scal_p$ and the optimal solution set $\Scal^*$, and by swapping each item in $\Scal_p$ to $\Scal^*$ in replacement of the same item or two other items, we construct a sequence of partial solutions of the greedy algorithm as well as modified optimal solution set, while maintaining the invariant that the profit of $\Scal^*$ is bounded by the sum of two times the profit of items in $\Scal_p$ swapped into $\Scal^*$ so far and the additional profit of remaining items in the modified optimal solution set. We will make this clear in the following. 

To proceed, we first introduce some notations. Let $\Scal_p = \{g_1,\ldots,g_l\}$ and $\Scal^*=\{o_1,\ldots,o_m\}$, i.e., the items in greedy solution is denoted by $g_i$'s and the items in the optimal solution is denoted by $o_i$'s. Further, for any two sets of items $\Scal_1$ and $\Scal_2$, we define the incremental profit of adding $\Scal_2$ to the set $\Scal_1$ as
\begin{align}
\Delta \Pcal(\Scal_1,\Scal_2) = \Pcal(\Scal_1\cup\Scal_2) - \Pcal(\Scal_1).
\end{align}
Recall that $\Phi(\Scal)$ is the expected number of units of overflows that penalties are paid, which will be referred as \emph{overflow units} in the following. The incremental expected overflow units of adding $\Scal_2$ to the set $\Scal_1$ is defined as
\begin{align}
\Delta\Phi(\Scal_1,\Scal_2) = \Phi(\Scal_1\cup\Scal_2) - \Phi(\Scal_1).
\end{align}
On a sample path of incremental capacities $\omega = \{c_t\}_{t=1}^T$, let $a_t := c_t - c_{t-1}$. Let $\Pcal_\omega$ and $\Phi_\omega$ be the profit and overflow units function, respectively, and the incremental profit of adding $\Scal_2$ to the set $\Scal_1$ is $$\Delta\Pcal_\omega(\Scal_1,\Scal_2) = \Pcal_\omega(\Scal_1\cup\Scal_2) - \Pcal_\omega(\Scal_1).$$
Similarly, on sample path $\omega$, the incremental penalty of adding $\Scal_2$ to the set $\Scal_1$ is $$\Delta\Phi_\omega(\Scal_1,\Scal_2) = \Phi_\omega(\Scal_1\cup\Scal_2) - \Phi_\omega(\Scal_1).$$
Then, the relationship of $\Delta\Pcal$ and $\Delta\Phi$ is:
\begin{align*}
\Delta\Pcal(\Scal_1,\Scal_2) &= \Pcal(\Scal_1\cup\Scal_2) - \Pcal(\Scal_1) = \Rcal(\Scal_1\cup\Scal_2) - \Rcal(\Scal_1) - B\cdot\Phi(\Scal_1\cup\Scal_2) + B\cdot\Phi(\Scal_1)\\
&= \Rcal(\Scal_2) - B\cdot \Delta\Phi(\Scal_1,\Scal_2).
\end{align*}
Similarly, on a sample path, we have that $\Delta\Pcal_\omega(\Scal_1,\Scal_2) = \Rcal(\Scal_2) - B\cdot \Delta\Phi_\omega(\Scal_1,\Scal_2)$.

Let $\Scal(t):=\{j\in\Scal\mid d_j=t\}$. Given a (partial) solution $\Scal$ and a sample path of capacities $\omega = \{c_t\}_{t=1}^T \in \Omega$. We let $a_t:= c_t-c_{t-1}$, and the available leftover capacity at time $t$ (after including items in $\Scal(t)$) is 
$$
\max\left\{\sup_{t' \le t} \sum_{\tau = t'}^t  a_\tau - \Qcal(S(\tau)), 0\right\} := \Ccal^\Scal_\omega(t).
$$ 
Then,  overflow units at time $t$ is 
$$
\max\left\{\sup_{t' \le t}\Qcal(S(\tau))- \sum_{\tau = t'}^t  a_\tau, 0\right\} := \Phi_\omega^\Scal(t),
$$
and the total overflow units is $\Phi_\omega(\Scal) = \sum_{t=1}^T\Phi_\omega^\Scal(t)$.

With the above definitions,
we first consider the calculation of overflows on a set $\Scal$ of items for a given sample path $\omega$. This is done in Algorithm~\ref{alg:unit-penalty}.

\begin{algorithm}[h]
	\footnotesize
	\caption{\sc{Overflow Assignment}}
	\label{alg:unit-penalty}
	\algsetblock[Name]{Parameters}{}{0}{}
	\algsetblock[Name]{Initialize}{}{0}{}
	\algsetblock[Name]{Define}{}{0}{}
	\begin{algorithmic}[1]
		\Parameters: Sample path of capacities $(c_1, \ldots, c_T) \in \mathbb{N}^T$, an arbitrary ordered list of requests $\Lcal = ( d_1, d_2, \ldots , d_n )$  
		\Initialize: Remaining capacity $\mathbf{a}^r = (a_1^r, \ldots, a_T^r) \gets (a_1, \ldots, a_T)$ \Comment{$a_t=c_t-c_{t-1}$}
		\Initialize: Units of overflow needing to pay penalty $\Phi \gets 0$
		\State $i \gets 1$
		\While {$i \leq n$}
		\State $q^r \gets q_i$
		\State $t_i = \max \{ t \leq d_i : a_t^r > 0 \}$  
		\While {$q^r > 0$}
		\If {$t_i < \infty$ and $t_i > 0$ }
		\State $a_{t_i}^r \gets a_{t_i}^r - \min\left\{a^r_{t_i}, q^r\right\}$
		\State $q^r \gets q^r - \min\left\{a^r_{t_i}, q^r\right\}$
		\State $t_i\gets t_i-1$
		\Else 
		\State $\Phi \gets \Phi+q^r$
		\State $q^r \gets 0$
		\EndIf
		\EndWhile
		\State $i \gets i + 1$
		\EndWhile
		\State {\bf Return} $(\mathbf{a}^r, \Phi)$
	\end{algorithmic}
\end{algorithm}

Algorithm~\ref{alg:unit-penalty} serves dual purpose -- while calculating the overflow, it also implicitly finds an assignment of the items which do not suffer a penalty to supply units. The assignment of items to supply units can be non-unique, while Algorithm~\ref{alg:unit-penalty} identifies one way of matching. Intuitively, the algorithm assigns items to the latest available units, saving the earlier capacity for items with shorter deadlines. This allows us to find the total overflows by considering the items in an arbitrary order (instead of in increasing order of deadlines), which is in turn useful for finding incremental profit~$\Delta\Pcal$ when we add a set of requests to an existing set of accepted requests.
We begin with the following lemma which proves that Algorithm~\ref{alg:unit-penalty} indeed finds the minimum overflow.

\begin{lemma} 
	\label{lem:unit_penalty}
	Given a sample path $\omega \in \NN^{T}$ of supply, and a set $\Scal $ of items with general integer demands, let $\Lcal = (d_1, \ldots, d_n)$ be an arbitrary ordering of the items in $\Scal$ ($d_i$ denoting the deadlines). Then the overflow units $\Phi$ returned when executing Algorithm~\ref{alg:unit-penalty} ({\sc Overflow Assignment}) on $(\omega, \Lcal)$ satisfies $\Phi = \Phi_{\omega}({\Scal})$.
\end{lemma}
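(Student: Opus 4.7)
The plan is to establish the lemma in two main steps: (i) show that $\Phi_\omega(\Scal)$ as defined in the paper equals the minimum possible overflow over any feasible assignment of item demand to supply slots respecting deadlines, and (ii) show that Algorithm~\ref{alg:unit-penalty} achieves this minimum for every ordering $\Lcal$. The key insight is that the algorithm's rule of assigning each item to the latest available slot before its deadline is a ``rightward-greedy'' EDF-type heuristic whose output is order-invariant.

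For step (i), I would formulate the underlying assignment problem as a bipartite transportation LP: each unit of item $i$'s demand either goes to a supply slot at time $t \le d_i$ (no cost) or to ``overflow'' at unit cost; supply slot $t$ has capacity $a_t$. The constraint matrix is totally unimodular, so the LP relaxation admits an integer optimum, and by max-flow/min-cut applied to this network the minimum overflow has the closed form $\max_{t'} \left\{\sum_{i \in \Scal : d_i \le t'} q_i - c_{t'}\right\}^+$, which one can verify agrees with the paper's definition of $\Phi_\omega(\Scal)$. Crucially, this value is a set function of $\Scal$ and $\omega$ alone, independent of any processing order.

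For step (ii), I would proceed by induction on the number of items processed, maintaining the invariant that after processing the first $k$ items of $\Lcal$, the residual supply vector $\mathbf{a}^r$ is \emph{rightward-extremal}: among all assignments of these $k$ items that minimize partial overflow, the algorithm's $\mathbf{a}^r$ consumes the earliest available capacity most aggressively (equivalently, leaves the latest possible capacity intact, so that $\sum_{t \le \tau} a_t^r$ is minimized for every $\tau$). The inductive step rests on a standard exchange argument: when the algorithm processes item $i_{k+1}$, it fills from the latest available slot $\le d_{i_{k+1}}$ backwards; any alternative assignment placing a unit earlier can be swapped into the later slot without increasing the partial overflow, because the earlier slot is usable by any item with deadline $\ge t_1$, so the swap only enlarges the residual feasibility for unprocessed items. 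When the algorithm declares a unit as overflow, the invariant implies all slots $\le d_{i_{k+1}}$ are exhausted by any minimum-overflow assignment of the prior items, so that unit of overflow is unavoidable. Taking $k = n$ yields $\Phi = \Phi_\omega(\Scal)$ independently of $\Lcal$.

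The main obstacle will be cleanly formalizing the rightward-extremal invariant and verifying that the exchange step composes correctly — in particular, swapping a unit may displace a previously processed item, triggering a chain of reassignments, and one must argue that such a chain always terminates without adding to the overflow (essentially, because each step moves a unit strictly to the right in time, so the potential $\sum_i (\text{slot time assigned})$ strictly increases). Once this is established, combining steps (i) and (ii) yields $\Phi = \Phi_\omega(\Scal)$ as required.
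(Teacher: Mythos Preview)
Your outline is correct, but it takes a considerably longer route than the paper. The paper proves the lemma in one stroke via LP duality: it writes the overflow-minimization problem as a primal LP (variables $x_i$ for served demand and $y_i$ for overflow, with cumulative capacity constraints $\sum_{i:d_i\le t} x_i \le c_t$), and then, \emph{after} running the algorithm, reads off a dual certificate. Specifically, letting $\tau=\min\{t: a_t^r>0\}$ be the first time any residual capacity survives, the paper observes that (i) all capacity at times $<\tau$ is exhausted and is used only by items with $d_i<\tau$, and (ii) no item with $d_i\ge\tau$ pays a penalty. Hence the algorithm's overflow equals $\sum_{i:d_i<\tau} q_i - c_{\tau-1}$, which is exactly the dual objective under $\lambda_{\tau-1}=1$ and $\gamma_i=\mathbf{1}[d_i\le\tau-1]$. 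Weak duality then forces the algorithm's overflow to be optimal, and since the dual certificate depends only on $\Scal$ and $\omega$, the value is order-independent.

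Your plan splits this into two pieces: a min-cut derivation of the closed form $\max_{t'}\bigl(\sum_{i:d_i\le t'} q_i - c_{t'}\bigr)^+$ (this is essentially the same duality the paper uses, just stated differently), followed by a separate inductive exchange argument with a ``rightward-extremal'' invariant to certify the algorithm attains it. The second step is where the approaches diverge: your chain-of-reassignment worry is real but resolvable, and the potential-function termination argument you sketch would work. However, this machinery is unnecessary here---once you have the closed form, you can verify directly (as the paper does) that the algorithm's output matches it at the single bottleneck $t'=\tau-1$, bypassing the invariant and the exchange entirely. Your approach buys a purely combinatorial proof that avoids invoking LP duality in step~(ii), at the cost of more bookkeeping; the paper's buys brevity by letting weak duality do the heavy lifting.
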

\begin{proof}[Proof of Lemma~\ref{lem:unit_penalty}]
    We will use LP duality to prove the Lemma. In a nutshell, we will use the the assignment created by Algorithm~\ref{alg:unit-penalty} to create a feasible solution to the dual LP such that the objective function of the dual matches the objective function  penalty of the assignment. Since any feasible solution of the dual lower bounds the optimal, we would have thus demonstrated the optimality of the assignment and hence of the overflow units $\Phi$. 

\begin{align*}
\begin{array}{rl}
& \mbox{(PRIMAL)} \\
\min & \sum_{i=1}^n y_i \\
\mbox{s.t.} &  \\
\forall t \in [T] : & -\sum_{i: d_i \leq t} x_i  \geq - c_t\\
\forall i \in [n]: & x_i + y_i = q_i \\
& x_i, y_i \geq 0
\end{array}
\hspace{0.2in}
\left|
\hspace{0.2in}
\begin{array}{rl}
& \mbox{(DUAL)} \\
\max & \sum_{i=1}^n q_i \gamma_i - \sum_{t} \lambda_t c_t  \\
\mbox{s.t.} & \\
\forall i \in [n]: & \gamma_i \leq 1 \\
\forall i \in [n]: & \gamma_i \leq  \sum_{t \geq d_i} \lambda_t \\
& \lambda_t \geq 0
\end{array}
\right.
\end{align*}

To construct the dual solution, let $\tau = \min \{ t : a_t^r > 0\}$. That is, $\tau$ is the first time at which there is some capacity remaining after the assignment of {\sc Overflow Assignment}. By the nature of the algorithm, there are no items with $d_i \geq \tau$ for which penalty is paid, and in fact all items with $d_i \geq \tau$ are served with capacity that arrives at time $\tau$ or later. Therefore, the overflow units under the assignment is the total size of items with $d_i < \tau$ minus the capacity $c_{\tau-1}$ (since this capacity is only used by requests with $d_i < \tau$).

Now construct a dual solution as follows:
\begin{align*}
\lambda_t = \begin{cases}
1 & t = \tau-1, \\
0 & t \neq \tau-1;
\end{cases} 
\qquad 
\gamma_i = \begin{cases}
1 & d_{i} \leq \tau-1 , \\
0 & d_{i} \geq \tau.
\end{cases}
\end{align*}
It is easy to verify that this is a feasible dual solution. Further, the objective function value under this feasible dual is
\[ \sum_{ i : d_{i} \leq \tau -1} q_i  - c_{\tau-1} \]
which is exactly the overflow units of the primal assignment. Therefore, the primal solution in fact attains the optimal objective. 
\end{proof}

As a result of Algorithm~\ref{alg:unit-penalty} and Lemma~\ref{lem:unit_penalty}, we have the following lemma.
\begin{lemma}\label{lem:morecap}
	Let ${\Scal}$ be a set of items disjoint with $\Scal_1$ and $\Scal_2$. If for some $\omega=\{c_t\mid t\in[T]\}\in\Omega$, we have $\Ccal_\omega^{\Scal_1}(t)\ge \Ccal_\omega^{\Scal_2}(t), \forall t\in[T]$, then, $\Delta\Pcal_\omega(\Scal_1,{\Scal})\ge \Delta\Pcal_\omega(\Scal_2,{\Scal})$. If this is true for all $\omega\in \Omega$, we further have that $\Delta\Pcal(\Scal_1,{\Scal})\ge \Delta\Pcal(\Scal_2,{\Scal})$.		
\end{lemma}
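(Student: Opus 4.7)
The plan is to first reduce the profit comparison to an incremental-overflow comparison, and then to prove the identity
\[
\Delta\Phi_\omega(\Scal_j,\Scal)=\max_{0\le t\le T}\bigl(Y(t)-\Ccal_\omega^{\Scal_j}(t)\bigr),
\]
where $Y(t):=\Qcal(\Scal_{\le t})$, after which the hypothesis gives the conclusion. For the expected version, the pointwise-in-$\omega$ inequality on $\Delta\Pcal_\omega$ passes through linearity of expectation. Since $\Delta\Pcal_\omega(\Scal_j,\Scal)=\Rcal(\Scal)-B\cdot\Delta\Phi_\omega(\Scal_j,\Scal)$ and $\Rcal(\Scal)$ does not depend on $j$, it suffices to establish $\Delta\Phi_\omega(\Scal_1,\Scal)\le\Delta\Phi_\omega(\Scal_2,\Scal)$ on any fixed sample path $\omega$.

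To set up the identity, for any subset $\Scal'$, define $X_{\Scal'}(t):=\Qcal(\Scal'_{\le t})-c_t$ for $t\in\{0,\ldots,T\}$ (so $X_{\Scal'}(0)=0$), and $M_{\Scal'}^t:=\max_{0\le\tau\le t}X_{\Scal'}(\tau)$. The per-period overflow in the definition of $\Phi_\omega$ telescopes: each term equals $M_{\Scal'}^t-M_{\Scal'}^{t-1}$, which gives $\Phi_\omega(\Scal')=M_{\Scal'}^T$. Unrolling $\sum_{\tau=t'}^t(a_\tau-\Qcal(\Scal'(\tau)))=X_{\Scal'}(t'-1)-X_{\Scal'}(t)$ and taking the outer max with $0$ gives $\Ccal_\omega^{\Scal'}(t)=M_{\Scal'}^t-X_{\Scal'}(t)$. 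Since $\Scal$ is disjoint from $\Scal_j$, additivity of $\Qcal$ yields $X_{\Scal_j\cup\Scal}(t)=X_{\Scal_j}(t)+Y(t)$, and hence
\[
\Delta\Phi_\omega(\Scal_j,\Scal)=\max_{0\le t\le T}\bigl(X_{\Scal_j}(t)+Y(t)\bigr)-M_{\Scal_j}^T.
\]

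To convert this into the target form, let $t_j^*\in\arg\max_tX_{\Scal_j}(t)$, so $\Ccal_\omega^{\Scal_j}(t_j^*)=0$. For $t\ge t_j^*$ the running max stabilises ($M_{\Scal_j}^t=M_{\Scal_j}^T$), giving $X_{\Scal_j}(t)+Y(t)-M_{\Scal_j}^T=Y(t)-\Ccal_\omega^{\Scal_j}(t)$; and for $t<t_j^*$, monotonicity of $Y$ together with $X_{\Scal_j}(t)\le M_{\Scal_j}^T$ and $\Ccal_\omega^{\Scal_j}(t)\ge 0$ shows that the $t=t_j^*$ value, which is $Y(t_j^*)$, dominates the $t$-value in both expressions $X_{\Scal_j}(t)+Y(t)-M_{\Scal_j}^T$ and $Y(t)-\Ccal_\omega^{\Scal_j}(t)$. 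The two maxima therefore coincide and equal $\Delta\Phi_\omega(\Scal_j,\Scal)$, proving the identity. Applying the hypothesis $\Ccal_\omega^{\Scal_1}(t)\ge\Ccal_\omega^{\Scal_2}(t)$ term-wise and taking $\max_t$ yields $\Delta\Phi_\omega(\Scal_1,\Scal)\le\Delta\Phi_\omega(\Scal_2,\Scal)$, hence $\Delta\Pcal_\omega(\Scal_1,\Scal)\ge\Delta\Pcal_\omega(\Scal_2,\Scal)$, as required.

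\textbf{Main obstacle.} The nontrivial step is converting the expression involving the \emph{global} quantity $M_{\Scal_j}^T$ into one involving only the \emph{local} leftover capacity $\Ccal_\omega^{\Scal_j}(\cdot)$; the case split around the argmax $t_j^*$ crucially exploits both the vanishing of $\Ccal_\omega^{\Scal_j}$ at $t_j^*$ and the monotonicity of $Y$ (which holds because $\Scal_{\le t}$ is a growing set of items as $t$ increases). Once this identity is in place, the dependence on $\Ccal_\omega^{\Scal_j}$ is monotone and the remainder is immediate.
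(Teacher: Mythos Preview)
Your proof is correct and takes a genuinely different route from the paper's. The paper proves this lemma by appealing to the {\sc Overflow Assignment} algorithm (Algorithm~\ref{alg:unit-penalty}) and its LP-duality correctness proof (Lemma~\ref{lem:unit_penalty}): it asserts that $\Delta\Phi_\omega(\Scal_j,\Scal)=\Phi_{\omega'}(\Scal)$ where $\omega'=\{\Ccal_\omega^{\Scal_j}(t)\}_t$ is the leftover-capacity profile, and then argues that any unit of capacity used by the greedy assignment under $\omega''$ is also available under $\omega'$. You instead work purely algebraically: you observe that both $\Phi_\omega(\cdot)$ and $\Ccal_\omega^{\cdot}(t)$ admit running-max representations $\Phi_\omega(\Scal')=M_{\Scal'}^T$ and $\Ccal_\omega^{\Scal'}(t)=M_{\Scal'}^t-X_{\Scal'}(t)$, and from these derive the closed-form identity $\Delta\Phi_\omega(\Scal_j,\Scal)=\max_{0\le t\le T}\bigl(Y(t)-\Ccal_\omega^{\Scal_j}(t)\bigr)$, after which monotone dependence on $\Ccal_\omega^{\Scal_j}$ is manifest. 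Your argument is more elementary and self-contained (no auxiliary algorithm, no LP duality), and the identity you prove is in fact exactly the paper's claim $\Delta\Phi_\omega(\Scal_j,\Scal)=\Phi_{\omega'}(\Scal)$ once one unpacks $\Phi_{\omega'}(\Scal)$ via the same running-max formula. The paper's approach has the advantage that the Algorithm~\ref{alg:unit-penalty}/Lemma~\ref{lem:unit_penalty} machinery is reused later (in Lemma~\ref{lem:incre_2}), so developing it here is not wasted effort; but for the present lemma taken in isolation, your argument is cleaner.
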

\begin{proof}[Proof of Lemma~\ref{lem:morecap}]
It suffices to show that $\Delta\Phi_\omega(\Scal_1,\Scal) \le \Delta\Phi(\Scal_2,\Scal)$. Note that 
\begin{align*}
    \Delta\Phi_\omega(\Scal_1,\Scal) &= \Phi_{\omega'}(\Scal), \text{ where }\omega' =\left\{\Ccal^{\Scal_1}_\omega(t)\mid t\in[T]\right\},\\
    \Delta\Phi_\omega(\Scal_2,\Scal) &= \Phi_{\omega''}(\Scal), \text{ where }\omega'' =\left\{\Ccal^{\Scal_2}_\omega(t)\mid t\in[T]\right\}.
\end{align*}
By Lemma~\ref{lem:unit_penalty}, the ordering of items in $\Scal$ does not matter when computing the total overflow units, and we may apply Algorithm~\ref{alg:unit-penalty} to compute $\Phi_{\omega'}(\Scal)$ and $\Phi_{\omega''}(\Scal)$. Since $\Ccal_\omega^{\Scal_1}(t)\ge \Ccal_\omega^{\Scal_2}(t), \forall t\in[T]$, as we apply Algorithm~\ref{alg:unit-penalty}, for any capacity in $\omega''$ that is used to serve a unit of demand in $\Scal$, we have the same capacity in $\omega'$ that can be used to serve the same unit of demand in $\Scal$. It then follows that $\Phi_{\omega'}(\Scal) \le \Phi_{\omega''}(\Scal)$, which implies that $\Delta\Phi_\omega(\Scal_1,\Scal)\le \Delta\Phi_\omega(\Scal_2,\Scal)$.
\end{proof}

\endproof

We next show the submodularity of $\Pcal$.
\begin{lemma}\label{lem:submod}
	For any $\Scal_1\subseteq\Scal_2$, we have that $\Delta\Pcal(\Scal_1,\Scal_3)\ge \Delta\Pcal(\Scal_2,\Scal_3)$.
\end{lemma}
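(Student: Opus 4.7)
The plan is to reduce the claim to the per-sample-path monotonicity result in Lemma~\ref{lem:morecap}. Since $\Pcal(\Scal) = \Rcal(\Scal) - B\cdot \Phi(\Scal)$ and $\Rcal$ is additive over items, I may assume without loss of generality that $\Scal_3$ is disjoint from $\Scal_2$ (otherwise replace $\Scal_3$ by $\Scal_3\setminus \Scal_2$, which changes neither side of the target inequality once it is rewritten in terms of $\Phi$). Under this assumption,
\[
\Delta\Pcal(\Scal_i,\Scal_3) \;=\; \Rcal(\Scal_3) \;-\; B\cdot \Delta\Phi(\Scal_i,\Scal_3),\qquad i=1,2,
\]
so the claim is equivalent to $\Delta\Phi(\Scal_1,\Scal_3) \le \Delta\Phi(\Scal_2,\Scal_3)$. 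By linearity of expectation across the sample paths defining $\Phi$, it suffices to establish the pointwise bound $\Delta\Phi_\omega(\Scal_1,\Scal_3) \le \Delta\Phi_\omega(\Scal_2,\Scal_3)$ for every $\omega\in\Omega$.

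Next, I fix $\omega\in\Omega$ and observe that the leftover-capacity profile is monotone non-increasing in the accepted set: from the formula
\[
\Ccal_\omega^{\Scal}(t)\;=\;\max\left\{\sup_{t'\le t}\sum_{\tau=t'}^t\bigl(a_\tau - \Qcal(\Scal(\tau))\bigr),\ 0\right\},
\]
enlarging $\Scal$ only grows each $\Qcal(\Scal(\tau))$, so every window sum weakly decreases and hence so does the outer maximum. Applied to $\Scal_1\subseteq \Scal_2$ this yields $\Ccal_\omega^{\Scal_1}(t) \ge \Ccal_\omega^{\Scal_2}(t)$ for every $t$. Lemma~\ref{lem:morecap} (invoked with its ``$\Scal$'' played by $\Scal_3$, which is disjoint from $\Scal_1$ and $\Scal_2$) then gives $\Delta\Pcal_\omega(\Scal_1,\Scal_3) \ge \Delta\Pcal_\omega(\Scal_2,\Scal_3)$ on the sample path $\omega$. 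Since $\Rcal(\Scal_3)$ appears on both sides, this is exactly the per-sample-path overflow inequality we need.

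Taking expectation over $\omega$ completes the argument. The only mildly nontrivial ingredient beyond Lemma~\ref{lem:morecap} is the monotonicity of $\Ccal_\omega^{\Scal}(t)$ in $\Scal$, which I expect to be immediate from the explicit definition; the submodularity of $\Pcal$ is thus essentially a clean corollary of the capacity-comparison principle already proved.
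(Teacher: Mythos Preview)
Your approach matches the paper's: show that $\Scal_1\subseteq\Scal_2$ forces $\Ccal_\omega^{\Scal_1}(t)\ge \Ccal_\omega^{\Scal_2}(t)$ for every $t$ and every sample path $\omega$, then invoke Lemma~\ref{lem:morecap} and take expectations. The paper's proof is essentially your middle paragraph, stated in two sentences.

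One small gap worth flagging: your disjointness reduction is not valid as written. Replacing $\Scal_3$ by $\Scal_3\setminus\Scal_2$ does leave $\Delta\Pcal(\Scal_2,\Scal_3)$ unchanged (since $\Scal_2\cup\Scal_3=\Scal_2\cup(\Scal_3\setminus\Scal_2)$), but it can change $\Delta\Pcal(\Scal_1,\Scal_3)$, because items in $\Scal_3\cap(\Scal_2\setminus\Scal_1)$ are removed from $\Scal_1\cup\Scal_3$ and this alters both the reward and the overflow terms on that side. So the claim that ``neither side changes'' is incorrect. The paper's proof sidesteps this entirely --- it simply applies Lemma~\ref{lem:morecap} without mentioning the disjointness hypothesis --- and since the only downstream use of this lemma (in the proof of Theorem~\ref{thm:GRprofit}) always has $\Scal_3$ disjoint from $\Scal_2$, the issue is moot in context. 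If you want the statement in full generality, the clean fix is the standard reduction of submodularity to the single-element marginal-decrease condition, which your argument already establishes via Lemma~\ref{lem:morecap}.
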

\begin{proof}[Proof of Lemma~\ref{lem:submod}]
Since $\Scal_1\subseteq\Scal_2$, in each realized sample path of capacities $\omega = \{c_t\}_{t=1}^T$, it should be clear that $\Ccal_\omega^{\Scal_1}(t)\ge \Ccal_\omega^{\Scal_2}(t), \forall t$, i.e., at each time period, the available remaining capacity on $\Scal_1$ is no less than the available remaining capacity on $\Scal_2$. Thus, by Lemma~\ref{lem:morecap}, the result follows.
\end{proof}

Lemma~\ref{lem:morecap} and Lemma~\ref{lem:submod} showed the relationship of incremental profit change of adding a set of items on top of two other sets of items. Specifically, if one set always has more remaining capacity than the other set, then adding a third set to one generates more incremental profit than adding the same set to the other.

For the rest of this section, we impose the assumption that $q_i = q, \forall i\in[N]$. To simplify the presentation, we may without loss of generality assume that $q=1$ by allowing $\{c_t\}$ to be nonintegers.
We next have the following result which will serve as a key to prove Theorem~\ref{thm:GRprofit}.

\begin{lemma}\label{lem:incre_2}
	Let $\Scal_1$ and $\Scal_2 = {\Scal}_2^-\sqcup {\Scal}_2^+$ be two disjoint set of items. Let $i, j, k$ be three items not in either set such that:
	\begin{enumerate}
		\item $d_m \leq d_j \leq d_i$, for all items $m \in \Scal_2^-$,
		\item $d_i \leq d_k \leq d_m$, for all items $m \in \Scal_2^+$.
	\end{enumerate}
	Then, we have that
	\begin{align}
	\Delta\Pcal\left(\Scal_1\cup \{j,k\}, \Scal_2\right)\le \Delta\Pcal\left(\Scal_1\cup\{i\}, \Scal_2 \right).
	\end{align}
\end{lemma}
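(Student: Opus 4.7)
The plan is to prove the inequality sample-path by sample-path and then average. Since $\Delta\Pcal_\omega(S, \Scal_2) = \Rcal(\Scal_2) - B\cdot \Delta\Phi_\omega(S, \Scal_2)$ and $\Pcal = \mathbb{E}_\omega[\Pcal_\omega]$, it suffices to show that for every realization $\omega$,
\[
\Delta\Phi_\omega(\Scal_1\cup\{j,k\}, \Scal_2) \;\ge\; \Delta\Phi_\omega(\Scal_1\cup\{i\}, \Scal_2),
\]
after which I would take expectation weighted by $p(\omega)$. Fix such an $\omega$. Exploiting the unit-size assumption, I would invoke the Hall/LP-duality argument underlying Lemma~\ref{lem:unit_penalty} to obtain the closed form $\Phi_\omega(S) = \max_t (n_S(t) - c_t)_+$, where $n_S(t) := |\{\ell\in S : d_\ell\le t\}|$.

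With the shorthand $S_A = \Scal_1\cup\{i\}$, $S_B = \Scal_1\cup\{j,k\}$, $g_A(t) = n_{S_A}(t) - c_t$, $g_B(t) = n_{S_B}(t) - c_t$, $n_2(t) = n_{\Scal_2}(t)$, a direct count gives $g_B(t) - g_A(t) = h(t)$, where
\[
h(t) \;=\; \mathbf{1}\{t\in [d_j, d_i)\} + \mathbf{1}\{t\ge d_k\} \;\in\;\{0,1\}.
\]
The partition $\Scal_2 = \Scal_2^-\sqcup\Scal_2^+$ with $d_m\le d_j$ on $\Scal_2^-$ and $d_m\ge d_k$ on $\Scal_2^+$ has the key consequence that $n_2$ is monotone nondecreasing in $t$ and constant on $[d_j,d_k)$ (equal to $|\Scal_2^-|$). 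Writing $F(g) := \max_t g(t)_+$, the target is
\[
F(g_B + n_2) - F(g_B) \;\ge\; F(g_A + n_2) - F(g_A).
\]

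Since $0\le g_B - g_A \le 1$ pointwise, $F(g_B) - F(g_A) \in \{0,1\}$. When $F(g_B) = F(g_A)$, the pointwise inequality $g_B + n_2 \ge g_A + n_2$ together with monotonicity of $F$ gives the conclusion. The substantive case is $F(g_B) = F(g_A) + 1$: any maximizer $t_B^*$ of $g_B$ must satisfy $h(t_B^*) = 1$ (else $g_A(t_B^*) = F(g_A) + 1$, contradicting the definition of $F(g_A)$), so $t_B^* \in [d_j, d_i)\cup[d_k, T]$ and in particular $t_B^* \ge d_j$. We must show $F(g_B + n_2) \ge F(g_A + n_2) + 1$. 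Pick a maximizer $s^*$ of $g_A + n_2$; if $h(s^*) = 1$, then $F(g_B + n_2) \ge g_B(s^*) + n_2(s^*) = g_A(s^*) + n_2(s^*) + 1 = F(g_A + n_2) + 1$ and we are done. Otherwise $s^*\in [0,d_j)\cup[d_i,d_k)$, and the structural properties of $n_2$ yield $n_2(t_B^*) \ge n_2(s^*)$: for $s^* < d_j$ by monotonicity together with $n_2(t_B^*) \ge n_2(d_j) = |\Scal_2^-| \ge n_{\Scal_2^-}(s^*) = n_2(s^*)$, and for $s^*\in[d_i,d_k)$ by the constancy $n_2(s^*) = |\Scal_2^-| \le n_2(t_B^*)$ (since $t_B^* \ge d_j$). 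Evaluating at $t_B^*$ and using $F(g_A) \ge g_A(s^*)$ (trivial when $g_A(s^*)\le 0$, else by definition of $F$), I would conclude
\[
F(g_B + n_2) \;\ge\; g_B(t_B^*) + n_2(t_B^*) \;\ge\; F(g_A) + 1 + n_2(s^*) \;\ge\; g_A(s^*) + n_2(s^*) + 1 \;=\; F(g_A + n_2) + 1,
\]
with the degenerate case $F(g_A + n_2) = 0$ handled directly by $F(g_B + n_2) \ge F(g_B) \ge 1$.

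The main obstacle is precisely this last subcase where $h(s^*) = 0$; the structural restriction that $\Scal_2$ contains no deadlines strictly between $d_j$ and $d_k$ is indispensable, since without it $n_2$ need not be constant on $[d_j,d_k)$ and one can construct explicit counterexamples (e.g.\ taking $g_A\equiv 0$, $g_B = \mathbf{1}\{t=1\}$, $n_2 = \mathbf{1}\{t=2\}$). The ordering $d_j\le d_i\le d_k$ together with this gap is exactly what forces the shift in $g$ at $t_B^*$ to not be "absorbed" by a compensating shift in $n_2$, allowing the $+1$ to survive.
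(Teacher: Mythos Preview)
Your proof is correct, and the route is genuinely different from the paper's. The paper first absorbs $\Scal_1$ into a residual capacity vector via the {\sc Overflow Assignment} algorithm (Observation~1), then argues that it suffices to take $|\Scal_2|=1$ (Observation~2), and finishes with a two-case analysis on whether the single item $m$ lies in $\Scal_2^-$ or $\Scal_2^+$. You instead extract the closed form $\Phi_\omega(S)=\max_t\bigl(n_S(t)-c_t\bigr)_+$ directly from the LP in Lemma~\ref{lem:unit_penalty} and compare the max-plus functionals $F(g_A+n_2)-F(g_A)$ and $F(g_B+n_2)-F(g_B)$ head-on. Your argument handles all of $\Scal_2$ at once, so it bypasses the (not entirely trivial) reduction to $|\Scal_2|=1$; it also makes explicit the single structural fact that drives the lemma, namely that $n_2$ is constant on $[d_j,d_k)$. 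The paper's approach, on the other hand, is more algorithmic and ties the result visibly to the assignment procedure used elsewhere in Section~\ref{appc-unit}.

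One small inaccuracy: you assert $F(g_B)-F(g_A)\in\{0,1\}$, but since after normalizing $q=1$ the capacities $c_t$ need not be integers, this difference can be any value in $[0,1]$. Your argument, however, does not actually need the integrality: in the ``substantive'' case one simply replaces ``$F(g_A)+1$'' by ``$F(g_B)$'' in the final chain, obtaining
\[
F(g_B+n_2)\;\ge\; g_B(t_B^*)+n_2(t_B^*)\;=\;F(g_B)+n_2(t_B^*)\;\ge\;F(g_B)+n_2(s^*)\;\ge\;g_A(s^*)+n_2(s^*)+\bigl(F(g_B)-F(g_A)\bigr),
\]
which is exactly $F(g_A+n_2)+\bigl(F(g_B)-F(g_A)\bigr)$. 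So the case split should be $F(g_B)=F(g_A)$ versus $F(g_B)>F(g_A)$, and everything goes through verbatim.
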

\begin{proof}[Proof of Lemma~\ref{lem:incre_2}]
    We begin with two observations.\\
{\bf Observation 1:} Using Lemma~\ref{lem:unit_penalty}, we can determine $\Delta \Pcal\left(  \Scal_1 \cup \{j,k\} , \Scal_2 \right)$ as follows: We first fix an ordering of $\Scal_1$ and assign them using Algorithm~\ref{alg:unit-penalty}. This gives some residual capacity vector $\mathbf{c}^r$. The problem of finding $\Delta \Pcal(\Scal_1 \cup \{j,k\}, \Scal_2)$ under capacity vector $\omega$ now reduces to finding $\Delta \Pcal(\{j,k\}, \Scal_2)$ under capacity vector $\mathbf{c}^r$. Similarly,  finding $\Delta \Pcal(\Scal_1 \cup \{i\}, \Scal_2)$ under capacity vector $\omega$ reduces to finding $\Delta \Pcal(\{i\}, \Scal_2)$ under capacity vector $\mathbf{c}^r$.\\ 

{\bf Observation 2:} It suffices to prove the Lemma for $|\Scal_{2}|=1$.

We therefore consider two cases, based on whether the item $m$ in $S_2$ has $d_m \leq d_j \leq d_i$ or $d_m \geq d_k \geq d_i$. Note that we have reduced to a case where we only need to worry about items $i,j,k,m$ and capacity availability $\mathbf{c}^r$. 

{\bf Case : $d_m \leq d_j \leq d_i$}\\
To find incremental penalty:
\[ \Phi^{\{i,m\}}_{\mathbf{c}^r} - \Phi^{\{i\}}_{\mathbf{c}^r} \]
we will first add item $i$ and then $m$ according to Algorithm~\ref{alg:unit-penalty}. Similarly, for 
\[ \Phi^{\{j,k,m\}}_{\mathbf{c}^r} - \Phi^{\{j,k\}}_{\mathbf{c}^r} \]
we first add item $j$, then $k$ and then $m$. 
We claim that if item $m$ does not pay a penalty in the latter case (when added to $\{j,k\}$), then it does not pay a penalty when added to $\{i\}$. To see why, if $m$ does not pay a penalty when added to $\{j,k\}$, then it must be that 
\[  \sum_{t \leq d_j} c_t^r \geq 2, \quad \sum_{t \leq d_m} c_t^r \geq 1. \]
In this case, when adding item $i$, there is still residual capacity left for matching $m$.

{\bf Case : $d_i \leq d_k \leq d_m$ }\\ 
In this we argue that if $m$ pays a penalty when added to $i$, then it must pay a penalty when added to $\{j,k \}$. If $m$ pays penalty for $i$, then:
\[  \sum_{t \leq d_i} c_t^r \leq 1 , \quad \sum_{ d_i < t \leq d_m } c_t^r = 0. \]
In this case when we first add $k$, it uses up any capacity $c_t^r \leq d_i$, leaving $m$ to pay a penalty.

Therefore, in either case, the incremental overflow units when adding item $m$ to item $i$ is at most the incremental overflow units when adding $m$ to $\{j,k\}$. 
\end{proof}

With the above lemmas, we are in a position to prove Theorem~\ref{thm:GRprofit}. 
\begin{proof}[Proof of Theorem~\ref{thm:GRprofit}]
    First, suppose that without loss of generality, the items in $\Scal_p$ are added exactly in the order of $g_1,\ldots,g_l$. Our proof is done by defining $G_i$ and $\Scal^*_i$ inductively, and show that 
\begin{align*}
\Pcal(\Scal^*)\le 2\Pcal(G_i) + \Delta\Pcal(G_i,\Scal^*_i),\quad \forall i\le \min\{l, m\} \text{ s.t. }\Scal_i^* \text{ is well-defined}.
\end{align*}
\emph{Base Case.}
Let $G_1 = \{g_1\}$ and let $\Scal^* = {\Scal^*}^-\sqcup {\Scal^*}^+$ where ${\Scal^*}^-:=\{j\in \Scal^*\mid d_j<d_{g_1}\}$ and ${\Scal^*}^+:=\{j\in \Scal^*\mid d_j\ge d_{g_1}\}$. Define
\begin{align*}
\Scal^*_1 = \begin{cases}
\Scal^*\setminus \{g_1\},\quad \text{if }g_1\in \Scal^*\\
\Scal^*\setminus \{o',o''\},\quad \text{if }g_1\notin \Scal^*
\end{cases}
\end{align*}
where $o'\in\Scal^*: d_{i'}\le d_{o'}\le d_{g_1},\forall i'\in {\Scal^*}^-$, and $o''\in\Scal^*: d_{g_1}\le d_{o'}\le d_{j'}, \forall j'\in{\Scal^*}^+, $ i.e., $o'$ is an item in $\Scal^*$ with deadline no later than $g_1$ but no earlier than the deadlines of items in ${\Scal^*}^-$, and $o''$ is an item in $\Scal^*$ with deadline no earlier than $g_{1}$ but no later than the deadlines of items in ${\Scal^*}^+_{i}$ (if such $o'$ or $o''$ does not exist, then simply ignore it). Then, we have the two cases:
\begin{itemize}
	\item $g_1\in\Scal^*$.
	\begin{align*}
	\Pcal(\Scal^*) = \Delta\Pcal\left(\emptyset, \Scal^*\right) &= \Delta\Pcal\left(\emptyset, \{g_1\}\right) + \Delta\Pcal\left(\{g_1\},\Scal_1^*\right)\\
	&\le 2\Pcal(G_1) + \Delta\Pcal(G_1,\Scal^*_1)
	\end{align*}
	where the inequality follows directly from the fact that $\Pcal(G_1) = \Delta\Pcal\left(\emptyset, \{g_1\}\right)$ is nonnegative.
	\item $g_1\notin \Scal^*$. First note that 
	\begin{align*}
	\Delta\Pcal\left(\emptyset, \left\{o',o''\right\}\right)&= \Delta\Pcal\left(\emptyset, \{o'\}\right) + \Delta\Pcal\left(\{o'\},\{o''\}\right)\\
	&\le \Delta\Pcal\left(\emptyset, \{o'\}\right) + \Delta\Pcal\left(\emptyset,\{o''\}\right)\\
	&\le \Delta\Pcal(\emptyset,\{g_1\}) + \Delta\Pcal(\emptyset,\{g_1\}) = 2\Delta\Pcal(\emptyset, \{g_1\}) = 2\Pcal(G_1),
	\end{align*}
	where the first inequality follows from Lemma~\ref{lem:submod} and the second inequality follows from the greedy algorithm that $g_1$ gives the greatest incremental profit.
	
	On the other hand, by Lemma~\ref{lem:incre_2}, we also have that 
	\begin{align*}
	\Delta\Pcal\left(\left\{o',o''\right\},\Scal_1^*\right)\le \Delta\Pcal\left(G_1,\Scal^*_1\right).
	\end{align*}
	
	Combining the above two inequalities, we conclude that
	\begin{align*}
	\Pcal(\Scal^*) = \Delta\Pcal\left(\emptyset, \Scal^*\right)&= \Delta\Pcal\left(\emptyset, \left\{o',o''\right\}\right) + \Delta\Pcal\left(\left\{o',o''\right\}, \Scal^*_1\right) \\
	&\le 2\Pcal\left(G_1\right) + \Delta\Pcal\left(G_1,\Scal^*_1\right)
	\end{align*}	
\end{itemize}

\emph{Induction Step.} Assume that $\Pcal(\Scal^*)\le 2\Pcal(G_i) + \Delta\Pcal(G_i,\Scal^*_i)$, we define $G_{i+1} = G_i \cup \{g_{i+1}\}$ and let $\Scal^*_i = {\Scal^*_i}^-\sqcup {\Scal^*_i}^+$ where ${\Scal^*_i}^-:=\{j\in \Scal^*_i\mid d_j<d_{g_i}\}$ and ${\Scal^*_i}^+:=\{j\in \Scal^*_i\mid d_j\ge d_{g_i}\}$. Define
\begin{align*}
\Scal^*_{i+1} = \begin{cases}
\Scal^*_i\setminus \{g_{i+1}\},\quad \text{if }g_{i+1}\in \Scal^*_i\\
\Scal^*_i\setminus \{o',o''\},\quad \text{if }g_{i+1}\notin \Scal^*_i
\end{cases}
\end{align*}
where where $o'\in\Scal^*_i: d_{i'}\le d_{o'}\le d_{g_i},\forall i'\in {\Scal^*_i}^-$, and $o''\in\Scal^*_i: d_{g_i}\le d_{o'}\le d_{j'}, \forall j'\in{\Scal^*_i}^+, $ i.e., $o'$ is an item in $\Scal^*_i$ with deadline no later than $g_i$ but no earlier than the deadlines of items in ${\Scal^*_i}^-$, and $o''$ is an item in $\Scal^*_i$ with deadline no earlier than $g_{i}$ but no later than the deadlines of items in ${\Scal^*}^+_{i}$ (if such $o'$ or $o''$ does not exist, then simply ignore it). Then, we have in the two cases:
\begin{itemize}
	\item $g_{i+1}\in\Scal^*_i$.
	\begin{align*}
	\Pcal(\Scal^*) &\le 2\Pcal(G_i) + \Delta\Pcal(G_i,\Scal^*_i)= 2\Pcal(G_i) + \Delta\Pcal\left(G_i, \left\{g_{i+1}\right\}\right) + \Delta\Pcal\left(G_{i+1},\Scal^*_{i+1}\right)\\
	&\le 2\Pcal(G_i) + 2\Delta\Pcal\left(G_i, \left\{g_{i+1}\right\}\right) + \Delta\Pcal\left(G_{i+1},\Scal^*_{i+1}\right)= 2\Pcal(G_{i+1}) + \Delta\Pcal\left(G_{i+1},\Scal^*_{i+1}\right)
	\end{align*}
	where the first inequality follows from the induction assumption and the second inequality follows directly from the fact that $\Delta\Pcal\left(G_i, \{g_{i+1}\}\right)$ is nonnegative.
	\item $g_{i+1}\notin \Scal^*_i$. First note that 
	\begin{align*}
	\Delta\Pcal\left(G_i, \left\{o',o''\right\}\right)&= \Delta\Pcal\left(G_i, \{o'\}\right) + \Delta\Pcal\left(G_i\cup\{o'\},\{o''\}\right)\\
	&\le \Delta\Pcal\left(G_i, \{o'\}\right) + \Delta\Pcal\left(G_i,\{o''\}\right)\\
	&\le \Delta\Pcal(G_i,\{g_{i+1}\}) + \Delta\Pcal(G_i,\{g_{i+1}\}) = 2\Delta\Pcal(G_i, \{g_{i+1}\}),
	\end{align*}
	where the first inequality follows from Lemma~\ref{lem:submod} and the second inequality follows from the greedy algorithm that $g_{i+1}$ adds the greatest incremental profit to $G_{i}$.
	
	On the other hand, by Lemma~\ref{lem:incre_2}, we also have that 
	\begin{align*}
	\Delta\Pcal\left(G_i\cup\left\{o',o''\right\},\Scal_{i+1}^*\right)\le \Delta\Pcal\left(G_{i+1},\Scal^*_{i+1}\right).
	\end{align*}
	
	Combining the above two inequalities, we conclude that
	\begin{align*}
	\Pcal(\Scal^*) &\le 2\Pcal(G_i) + \Delta\Pcal(G_i,\Scal^*_i)=  2\Pcal(G_i) + \Delta\Pcal\left(G_i,\left\{o',o''\right\}\right) + \Delta\Pcal\left(G_i\cup \left\{o',o''\right\},\Scal^*_{i+1}\right)\\
	&\le 2\Pcal(G_i) + 2\Delta\Pcal\left(G_i,\left\{g_{i+1}\right\}\right) + \Delta\Pcal\left(G_{i+1},\Scal^*_{i+1}\right)\\
	&\le 2\Pcal\left(G_{i+1}\right) + \Delta\Pcal\left(G_{i+1},\Scal^*_{i+1}\right)
	\end{align*}	
\end{itemize}
This completes the induction step. Note that at each step, $\Scal^*_{i+1}\subsetneq \Scal^*_i$ and $G_i\subsetneq G_{i+1}$. In the end, we will reach some $i'$ such that either $\Scal^*_{i'}=\emptyset$ or $G_{i'} = \Scal_p$ and $\Scal^*_{i'}\ne \emptyset$. In the first case, we have that
\begin{align*}
\Pcal(\Scal^*)&\le 2\Pcal(G_{i'}) + \Delta\Pcal(G_{i'},\Scal^*_{i'}) = 2\Pcal(G_{i'}) + 0\le 2\Pcal(\Scal_p).
\end{align*}
In the second case, i.e., $G_{i'} = \Scal_p$ and $\Scal^*_{i'}\ne \emptyset$, we again have that $\Pcal(\Scal^*)\le 2\Pcal(G_{i'}) + \Delta\Pcal(G_{i'},\Scal^*_{i'})$. Now if $ \Delta\Pcal(G_{i'},\Scal^*_{i'})>0$, then we can add the items in $\Scal^*_{i'}$ to $\Scal_p$ and still increase the profit, which violates the greedy algorithm. Thus, it must be that $\Delta\Pcal(G_{i'},\Scal^*_{i'})\le 0$. Then we would have
\begin{align*}
\Pcal(\Scal^*)&\le 2\Pcal(G_{i'}) + \Delta\Pcal(G_{i'},\Scal^*_{i'}) \le 2\Pcal(\Scal_p).
\end{align*}
In conclusion, we have that $\Pcal(\Scal^*)\le 2\Pcal(\Scal_p)$, or equivalently $\Pcal(\Scal_p)\ge \frac{1}{2}\Pcal(\Scal^*)$. This completes the proof of Theorem~\ref{thm:GRprofit}.    
\end{proof}

\section{Alternative FPTAS for MPBKP}\label{appT2}
In this section, we introduce another FPTAS for MPBKP, which has time complexity $\tilde{\Ocal}\left(n+\frac{T^2}{\epsilon^3}\right)$. To roughly describe the main idea, we will again adopt the functional approach to approximate~\eqref{eqn:ft}. Instead of having an approximation of $f_{\Ical(t)}$ for each $t$ directly from Lemma~\ref{lem:01}, we further partition $\Ical(t)$ into $m+1$ subsets ($m$ being specified later), i.e., $\Ical(t):=\Ical(t)_0\sqcup \Ical(t)_1\sqcup\cdots\sqcup \Ical(t)_m$, where items in each subset have approximately the same reward. Then, we have that $f_{\Ical(t)} = f_{\Ical(t)_0}\oplus f_{\Ical(t)_1}\oplus\cdots \oplus f_{\Ical(t)_m}:=\oplus_{j=0}^mf_{\Ical(t)_j}$, and by noting that the $(\max,+)$-convolution $\oplus$ is commutative, the function $f_t$ as defined in~\eqref{eqn:ft} can be computed as 
\begin{align}\label{eq:ftex}
f_t := \begin{cases}
f_{\Ical(1)}^{c_1} & t=1,\\
\left(f_{t-1}\oplus f_{\Ical(t)}\right)^{c_t} = \left(f_{t-1}\oplus f_{\Ical(t)_0}\oplus f_{\Ical(t)_1}\oplus\cdots \oplus f_{\Ical(t)_m}\right)^{c_t} & t\ge 2,
\end{cases}
\end{align}
and~\eqref{eq:ftex} can be computed more efficiently due to some special properties of $f_{\Ical(t)_j}$.

Before proceeding to the actual algorithm, we first have some preliminaries. A monotone step function $f_{\Ical}(c)$ with steps at $c_{1},c_{2},\ldots,c_l$ is called \emph{$r$-uniform} if it satisfies both of the following conditions:
\begin{enumerate}
	\item $\forall c\in\mathbb{R}^+$, $f_{\Ical}(c) = kr$ for some nonnegative integer $k$,
	\item $\exists c_{j} \text{ s.t. } f_{\Ical}(c_j)=kr \Longrightarrow \exists c_{j'} \text{ s.t. }f_{\Ical}(c_{j'}) = k'r, \forall k'\le k$ nonnegative integers.
\end{enumerate}
The monotone step function $f_{\Ical}(c)$ with steps at $c_{1},c_{2},\ldots,c_l$ is called \emph{pseudo-concave} if $c_{j+2} - c_{j+1}\ge c_{j+1}-c_{j}, \forall j=1,\ldots, l-2$. The \emph{range} of a function $f$ is the set of all possible function values. We then introduce the following lemma from~\cite{chan:OASIcs:2018:8299} for approximating $f\oplus g$ when $g$ is $r$-uniform and pseudo-concave. 

\begin{lemma}[\cite{chan:OASIcs:2018:8299}]\label{lem:02}
	Let $f$ and $g$ be monotone step functions with total complexity $l$ and ranges contained in $\{-\infty,0\}\cup\{A,B\}$. Then we can compute a monotone step function that approximates $f\oplus g$ with factor $1+\Ocal(\epsilon')$ and complexity $\tilde{\Ocal}\left(\frac{1}{\epsilon'}\right)$ in $\Ocal(l) + \tilde{\Ocal}\left(\frac{1}{\epsilon'}\right)$ time if $g$ is $r$-uniform and pseudo-concave.
\end{lemma}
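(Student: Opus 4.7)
The plan is to combine two ingredients: (i) coarse quantization of the function values using the hypothesis that all non-trivial values lie in $[A,B]$, and (ii) exploitation of the pseudo-concave structure of $g$ to perform the $(\max,+)$-convolution in near-linear time in the quantized complexities. Implicitly I will assume $B=\Ocal(A)$, which is the regime in which this building block is useful (partitioning items into reward buckets of bounded ratio before invoking the lemma); a general $B/A$ only inflates the bound by a $\log(B/A)$ factor hidden inside the $\tilde{\Ocal}$.

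First I would round the values of $f$ and $g$ down to the nearest power of $(1+\epsilon')$. Since non-trivial values lie in $[A,B]$, the number of distinct rounded values is at most $\lceil \log_{1+\epsilon'}(B/A)\rceil = \tilde{\Ocal}(1/\epsilon')$, so after merging consecutive steps sharing a rounded value the rounded versions $\tilde f,\tilde g$ have complexity $\tilde{\Ocal}(1/\epsilon')$; this costs a $(1+\epsilon')$ factor per function and can be produced in one $\Ocal(l)$ scan of the input breakpoint lists. The $r$-uniformity and pseudo-concavity of $g$ transfer to $\tilde g$ (up to taking the upper concave envelope), so $\tilde g$ remains pseudo-concave. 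Next, I would compute $\tilde f \oplus \tilde g$ by exploiting the fact that pseudo-concavity of $\tilde g$ makes the matrix $M[i,j]:=\tilde f(c_i)+\tilde g(c_j - c_i)$, indexed by the breakpoints of $\tilde f$ in rows and of the output in columns, satisfy the inverse Monge property. Its row maximizers are then monotone in the column index and can be recovered by a SMAWK-style sweep in time linear in the matrix dimensions, i.e. $\tilde{\Ocal}(1/\epsilon')$. Rounding the output values back to powers of $(1+\epsilon')$ recompresses the result to complexity $\tilde{\Ocal}(1/\epsilon')$, for a final approximation factor of $(1+\epsilon')^3 = 1+\Ocal(\epsilon')$ and total running time $\Ocal(l) + \tilde{\Ocal}(1/\epsilon')$, matching the claim.

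The hard part will be verifying that the quantized $\tilde g$ retains enough pseudo-concave structure for the SMAWK monotonicity argument to apply, and handling the $\{-\infty,0\}$ portions of the domain so that ``virtual'' entries do not spuriously violate the Monge inequality. A standard remedy is to restrict the sweep to the finite-valued subdomain of $\tilde g$ and to deal with the $0$-valued and $-\infty$-valued regions separately, since they contribute only trivially to the convolution. The $r$-uniformity hypothesis is precisely what makes the step-width definition of pseudo-concavity line up with ordinary concavity of the upper envelope of $\tilde g$, which is exactly the property the Monge/SMAWK argument needs; without it, the equal-spacing of value levels required by the matrix-search speedup could fail.
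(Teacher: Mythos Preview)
The paper does not prove this lemma at all: it is quoted verbatim as a black-box result from \cite{chan:OASIcs:2018:8299} and is used only as a subroutine inside Algorithm~\ref{alg:MPBKP'}. There is therefore nothing in the paper to compare your argument against.

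As a standalone sketch your outline is in the right spirit (value-rounding to $\tilde{\Ocal}(1/\epsilon')$ levels, then a Monge/SMAWK-type sweep exploiting the concave upper envelope of $g$), and this is indeed close to how Chan proceeds. Two remarks are worth making. First, you should not round $g$ before the convolution: rounding $f$ and the output suffices, and keeping $g$ intact preserves its $r$-uniform pseudo-concave structure exactly, so the monotone-argmax property you need is immediate rather than something you must salvage after quantization. Second, the claim that ``$r$-uniformity and pseudo-concavity of $g$ transfer to $\tilde g$ (up to taking the upper concave envelope)'' is doing real work that you have not justified; taking an upper concave envelope of a rounded step function can overshoot the true function by more than a $(1+\epsilon')$ factor, so if you insist on rounding $g$ you would need a separate argument bounding that overshoot. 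Avoiding the rounding of $g$ sidesteps the issue entirely and is how the cited proof is organized.
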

With the above lemma, we present Algorithm~\ref{alg:MPBKP'} for MPBKP. 

\begin{algorithm}[ht]
	\footnotesize
	\caption{FPTAS for MPBKP in $\tilde{\Ocal}\left(n+T^2/\epsilon^3\right)$}
	\label{alg:MPBKP'}
	\algsetblock[Name]{Parameters}{}{0}{}
	\algsetblock[Name]{Initialize}{}{0}{}
	\algsetblock[Name]{Define}{}{0}{}
	\begin{algorithmic}[1]
		\Statex \textbf{Input:} $[n], c_1,\ldots, c_T$  \Comment Set of items to be packed, cumulative capacities up to each time $t$
		\Statex \textbf{Output:} $\tilde{f}_t$ \Comment Approximation of function $f_t$
		\State Discard all items with $r_i\le \frac{\epsilon}{n}\max_jr_j$ and relabel the items 
		\State $r_0\gets \min_ir_i$ \Comment Lower bound of solution value
		\State $\hat{r}_i\gets r_0\cdot (1+\epsilon)^{\left\lfloor\log_{1+\epsilon}\left(\frac{r_i}{r_0}\right)\right\rfloor}$
		\Comment Round down the reward of each item
		\State $m\gets \left\lceil\log_{1+\epsilon}\frac{n^2}{\epsilon}\right\rceil$ \Comment Number of distinct rewards to be considered, each in the form $r_0\cdot(1+\epsilon)^k$
		\State $\tilde{f}_0\gets -\infty$
		\For {$t=1,\ldots, T$}
		\State $\hat{f}_t\gets \tilde{f}_{t-1}$
		\For {$j=0,\ldots,m$}
		\State $\Ical(t)_j = \left\{i\in \Ical(t)\mid \hat{r}_i = r_0\cdot (1+\epsilon)^j\right\}$
		\Comment Items in each $\Ical(t)_j$ has the same rounded reward
		\State $\hat{\Ical}(t)_j = \left\{(\hat{r}_i,q_i)\mid i\in \Ical(t)_j\right\}$ and obtain $f_{\hat{\Ical}(t)_j}$
		\Comment Using items with rounded rewards, build the function $f_{\hat{\Ical}(t)_j}$
		\State Approximately compute $\hat{f}_t = \hat{f}_t\oplus f_{\hat{\Ical}(t)_j}$ using Lemma~\ref{lem:02}
		\EndFor
		\State $\tilde{f}_t = \hat{f}_t^{c_t}$
		\Comment $\tilde{f}_t$ is an approximation of $f_t$
		\EndFor
	\end{algorithmic}
\end{algorithm}

In Algorithm~\ref{alg:MPBKP'}, we first discard all items with reward $r_i\le \frac{\epsilon}{n}\max_jr_j$. The maximum we could lose is $n\cdot \frac{\epsilon}{n}\max_jr_j = \epsilon\max_jr_j$, which is at most $\epsilon$ fraction of the optimal value. We next round down the rewards of all remaining items  to the nearest $r_0\cdot (1+\epsilon)^k$, where $r_0:=\min_jr_j$ and $k$ is some nonnegative integer, so we lose at most a fraction of $(1+\epsilon)$ in the rounding, and the number of distinct rounded rewards is bounded by $m = \left\lceil\log_{1+\epsilon}\frac{n^2}{\epsilon}\right\rceil = \tilde{\Ocal}\left(\frac{1}{\epsilon}\right)$. We begin with initializing $\tilde{f}_0=-\infty$. Then, for period $t=1$, we partition $\Ical(1)=\sqcup_{j=0}^m \Ical(1)_j$ where all items in $\Ical(1)_j$ have rounded reward $r_0\cdot (1+\epsilon)^j$. Denote by $\hat{\Ical}(1)_j$ these items with rounded rewards, and by adding these items greedily in nonincreasing order of their sizes, we obtain $f_{\hat{\Ical}(1)_j}$, which is a $(1+\epsilon)$ approximation of $f_{\Ical(1)_j}$, and is $r_0\cdot (1+\epsilon)^j$-uniform and pseudo-concave. 
By applying Lemma~\ref{lem:02} for $m+1$ times (with $\epsilon'$ to be specified later), we obtain $\tilde{f}_0\oplus f_{\hat{\Ical}(1)_0}\oplus f_{\hat{\Ical}(1)_1}\oplus\cdots \oplus f_{\Ical(1)_m} = \tilde{f}_0\oplus f_{\hat{\Ical}(1)}$, which approximates $f_0\oplus f_{\Ical(1)}$ with an accumulative approximation factor $(1+\epsilon)(1+\epsilon')^{m+1}$, and is computed in total time $\Ocal(n_1) + \tilde{\Ocal}\left(\frac{m+1}{\epsilon'}\right)$. Then, to ensure feasibility, $\tilde{f}_1$ is obtained by taking truncation $c_1$ on $\tilde{f}_0\oplus f_{\hat{\Ical}(1)}$, which becomes a $(1+\epsilon)(1+\epsilon')^{m+1}$ approximation of $f_1$. We then move to period~$2$ and continue this pattern of partition, convolutions, and truncation. In the end as we reach period $T$, $\tilde{f}_T$ would only contain feasible solutions to~\eqref{MPBKP}, and approximates $f_T$ with accumulated approximation factor $(1+\epsilon)(1+\epsilon')^{(m+1)T}\approx (1+\epsilon)(1+(m+1)T\epsilon')$. Formally, we have the following lemma which shows the approximation factor of $\tilde{f}_t$ to $f_t$.
\begin{lemma}\label{lem:fapprox2}
	Let $\tilde{f}_t$ be the functions obtained from Algorithm~\ref{alg:MPBKP'}, and let $f_t$ be defined as in~\eqref{eqn:ft}. Then, $\tilde{f}_t$ approximates $f_t$ with factor $(1+\epsilon)(1+\epsilon')^{(m+1)t}$, i.e., $\tilde{f}_t(c)\le f_t(c)\le (1+\epsilon)(1+\epsilon')^{(m+1)t}\tilde{f}_t(c)$ for all $0\le c\le c_t$.
\end{lemma}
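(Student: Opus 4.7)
I would mimic the inductive structure of Lemma~\ref{lem:fapprox}, but carefully decompose the overall approximation factor into two independent sources: a one-time loss of $(1+\epsilon)$ incurred when rewards are rounded down in lines~1 and~3 of Algorithm~\ref{alg:MPBKP'}, and an $(m+1)t$-fold accumulation of $(1+\epsilon')$ losses from invoking Lemma~\ref{lem:02} in the inner loop. To separate these cleanly, I would introduce an auxiliary ``ideal'' function $g_t$ defined by the same recursion as in~\eqref{eqn:ft} but with each $f_{\Ical(t')}$ replaced by the exact $(\max,+)$-convolution $f_{\hat{\Ical}(t')_0}\oplus\cdots\oplus f_{\hat{\Ical}(t')_m}$ using rounded rewards. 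Since $\hat{r}_i\ge r_i/(1+\epsilon)$ for every surviving item, and the items deleted in line~1 together contribute at most an $\epsilon$ fraction of any optimum, a short bookkeeping argument yields the two-sided bound $g_t\le f_t\le(1+\epsilon)g_t$ on $[0,c_t]$.

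The main induction then proves $\tilde{f}_t\le g_t\le (1+\epsilon')^{(m+1)t}\tilde{f}_t$. The lower direction $\tilde{f}_t\le g_t$ is routine: Lemma~\ref{lem:02} always returns an under-approximation of the true $(\max,+)$-convolution, truncation only decreases values, and the induction hypothesis gives $\tilde{f}_{t-1}\le g_{t-1}$. For the upper direction, assume $(1+\epsilon')^{(m+1)(t-1)}\tilde{f}_{t-1}\ge g_{t-1}$ and trace the inner loop at period~$t$: we chain $m+1$ approximate convolutions, one per reward class. Each $f_{\hat{\Ical}(t)_j}$ is $r_0(1+\epsilon)^j$-uniform and pseudo-concave because all items in $\Ical(t)_j$ share the common rounded reward, so inserting them in non-decreasing order of size produces a step function with constant step heights and non-decreasing step widths. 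Hence Lemma~\ref{lem:02} applies and each of the $m+1$ stages inflates the approximation factor by at most $(1+\epsilon')$. The final truncation at $c_t$ preserves this factor, yielding $(1+\epsilon')^{(m+1)t}\tilde{f}_t\ge g_t$ on $[0,c_t]$, and composing with $f_t\le(1+\epsilon)g_t$ gives the claimed bound $(1+\epsilon)(1+\epsilon')^{(m+1)t}$.

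The hard part I anticipate is the bookkeeping around Lemma~\ref{lem:02} across the chain of convolutions. One has to verify that the running function $\hat{f}_t$ at each intermediate stage of the inner loop still meets the hypotheses under which Lemma~\ref{lem:02} is invoked (in particular, the range condition on the ``other'' argument), and argue that the $(1+\epsilon')$ factors compose multiplicatively rather than additively across the $m+1$ stages within a period and across the $T$ periods. Once these are in hand, the induction step is direct and the lemma follows by combining the two halves of the decomposition.
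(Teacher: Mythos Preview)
Your plan is essentially the paper's proof, just reorganized: the paper isolates the inner-loop chaining as a separate auxiliary lemma (showing $(1+\epsilon')^{m+1}\hat f_t\ge \tilde f_{t-1}\oplus f_{\hat\Ical(t)_0}\oplus\cdots\oplus f_{\hat\Ical(t)_m}$ by an inner induction on $j$), then runs the outer induction on $t$ directly on $f_t$ without introducing your intermediate $g_t$. Your decomposition via $g_t$ achieves the same thing and is arguably cleaner in separating the rounding loss from the convolution losses; the paper instead carries the single $(1+\epsilon)$ factor through the outer induction by observing $(1+\epsilon)f_{\hat\Ical(t)_j}\ge f_{\Ical(t)_j}$ at every step. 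One caveat: your claim that the discarded items from line~1 are absorbed into the multiplicative bound $f_t\le(1+\epsilon)g_t$ on all of $[0,c_t]$ is not quite right---the loss from discarding is additive ($\le\epsilon\max_j r_j$), not multiplicative pointwise. The paper sidesteps this by treating the discarding as a separate additive error outside the lemma and implicitly taking $f_t$ over the surviving items only; you should do the same rather than try to fold it into the $(1+\epsilon)$ factor.
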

The proof of Lemma~\ref{lem:fapprox2} relies on the following fact.
\begin{lemma}\label{lem:fapprox3}
	At any period $t$, after running the inner ``for" loop of Algorithm~\ref{alg:MPBKP'}, we have that $(1+\epsilon')^{m+1}\hat{f}_t\ge \tilde{f}_{t-1}\oplus f_{\hat{\Ical}(t)_0}\oplus f_{\hat{\Ical}(t)_1}\oplus\cdots \oplus f_{\hat{\Ical}(t)_m}$. 
\end{lemma}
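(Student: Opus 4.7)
The plan is to prove Lemma~\ref{lem:fapprox3} by a straightforward induction on the inner loop index $j = 0, 1, \ldots, m$. Let $g_{-1} := \tilde{f}_{t-1}$ denote the initial value of $\hat{f}_t$ on line~7, and for $j \ge 0$ let $g_j$ denote the value of $\hat{f}_t$ at the end of the $j$-th iteration of the inner loop (line~11). The invariant I want to carry through the induction is
\[
(1+\epsilon')^{j+1}\, g_j \;\ge\; \tilde{f}_{t-1} \oplus f_{\hat{\Ical}(t)_0} \oplus f_{\hat{\Ical}(t)_1} \oplus \cdots \oplus f_{\hat{\Ical}(t)_j},
\]
from which the claimed inequality follows by setting $j = m$.

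The base case $j = -1$ is immediate since $g_{-1} = \tilde{f}_{t-1}$. For the inductive step, suppose the invariant holds for $j-1$, namely $(1+\epsilon')^{j} g_{j-1} \ge \tilde{f}_{t-1} \oplus f_{\hat{\Ical}(t)_0} \oplus \cdots \oplus f_{\hat{\Ical}(t)_{j-1}}$. Since $f_{\hat{\Ical}(t)_j}$ is $r_0(1+\epsilon)^j$-uniform and pseudo-concave, Lemma~\ref{lem:02} applied on line~11 guarantees that the computed $g_j$ is a $(1+\epsilon')$-approximation of $g_{j-1} \oplus f_{\hat{\Ical}(t)_j}$, so in particular $(1+\epsilon') g_j \ge g_{j-1} \oplus f_{\hat{\Ical}(t)_j}$. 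Multiplying both sides by $(1+\epsilon')^j$ and using two elementary facts about $(\max,+)$-convolution, namely that for any positive scalar $\alpha$ one has $\alpha(A\oplus B) = (\alpha A)\oplus(\alpha B)$, and that $\oplus$ is monotone in each argument, I obtain
\[
(1+\epsilon')^{j+1} g_j \;\ge\; (1+\epsilon')^j\bigl(g_{j-1}\oplus f_{\hat{\Ical}(t)_j}\bigr) \;=\; \bigl((1+\epsilon')^j g_{j-1}\bigr) \oplus \bigl((1+\epsilon')^j f_{\hat{\Ical}(t)_j}\bigr).
\]
Because each $f_{\hat{\Ical}(t)_j}$ is nonnegative, $(1+\epsilon')^j f_{\hat{\Ical}(t)_j} \ge f_{\hat{\Ical}(t)_j}$, so by monotonicity of $\oplus$ the right-hand side is at least $\bigl((1+\epsilon')^j g_{j-1}\bigr) \oplus f_{\hat{\Ical}(t)_j}$. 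Applying the induction hypothesis to the first operand and again invoking monotonicity of $\oplus$ gives the desired invariant at level $j$.

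There is no real obstacle here: the only subtlety is keeping the algebra honest when one wants to ``push'' the factor $(1+\epsilon')^j$ from one operand of the convolution across to the whole convolution, which is handled cleanly by the scalar-distributivity identity above together with nonnegativity of the step functions $f_{\hat{\Ical}(t)_j}$ (which holds since the rewards are rounded down to positive powers of $(1+\epsilon)$ and $f_{\hat{\Ical}(t)_j}(c) \ge 0$ for $c \ge 0$). Setting $j=m$ at the end of the induction yields exactly the stated inequality $(1+\epsilon')^{m+1}\hat{f}_t \ge \tilde{f}_{t-1}\oplus f_{\hat{\Ical}(t)_0}\oplus\cdots\oplus f_{\hat{\Ical}(t)_m}$, completing the proof.
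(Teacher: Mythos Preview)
Your proof is correct and follows essentially the same approach as the paper's: both argue by induction on the inner loop index $j$, invoke Lemma~\ref{lem:02} to obtain $(1+\epsilon')g_j \ge g_{j-1}\oplus f_{\hat{\Ical}(t)_j}$, and then combine with the induction hypothesis. Your write-up is in fact slightly more careful than the paper's, since you explicitly justify the scalar-distributivity $\alpha(A\oplus B) = (\alpha A)\oplus(\alpha B)$ and the nonnegativity of $f_{\hat{\Ical}(t)_j}$ needed to push the factor $(1+\epsilon')^j$ through, whereas the paper compresses these steps into a single line.
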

\begin{proof}[Proof of Lemma~\ref{lem:fapprox3}]
	We prove by induction on $j=0,1,\ldots,m$. Base case is when $j=0$, i.e., after the first round of the inner ``for" loop, by Lemma~\ref{lem:02}, we have that $(1+\epsilon')\hat{f}_{t}\ge \tilde{f}_{t-1}\oplus  f_{\hat{\Ical}(t)_0}$. For the induction step, assume that after $j$ rounds of the inner ``for" loop, $(1+\epsilon')^j \hat{f}_{t}\ge \tilde{f}_{t-1}\oplus f_{\hat{\Ical}(t)_0}\oplus\cdots\oplus f_{\hat{\Ical}(t)_{j-1}}$, we show that after $j+1$ rounds, $(1+\epsilon')^{j+1} \hat{f}_{t}\ge \tilde{f}_{t-1}\oplus f_{\hat{\Ical}(t)_0}\oplus\cdots\oplus f_{\hat{\Ical}(t)_j}$. As a notation, we denote by $\hat{f}_t^{\rm old}$ the $\hat{f}_t$ right before the $(j+1)$th round of the inner ``for" loop, and by $\hat{f}_t^{\rm new}$ the $\hat{f}_t$ right after the $(j+1)$th round of the inner ``for" loop. Then, from Lemma~\ref{lem:02} we have that
	$
	(1+\epsilon')\hat{f}_t^{\rm new}\ge \hat{f}_t^{\rm old}\oplus f_{\hat{\Ical}(t)_j},
	$
	which implies that
	$$
	(1+\epsilon')^{j+1}\hat{f}_t^{\rm new}\ge (1+\epsilon')^j\hat{f}_t^{\rm old}\oplus f_{\hat{\Ical}(t)_j}\ge \tilde{f}_{t-1}\oplus f_{\hat{\Ical}(t)_0}\oplus\cdots\oplus f_{\hat{\Ical}(t)_{j-1}}\oplus f_{\hat{\Ical}(t)_j},
	$$
	where the second inequality follows from the induction assumption. This finishes the induction step, and thus the proof of the lemma.
\end{proof}

With Lemma~\ref{lem:fapprox3} at hand, we now prove Lemma~\ref{lem:fapprox2}.

\begin{proof}[Proof of Lemma~\ref{lem:fapprox2}]
	By the construction of $\tilde{f}_t$, it should be clear that $\tilde{f}_t\le f_t$. We prove that $(1+\epsilon)(1+\epsilon')^{(m+1)t}\tilde{f}_t\ge f_t$ by induction on $t$. Base case is when $t=1$, we have that $(1+\epsilon)(1+\epsilon')^{m+1}\tilde{f}_1 = (1+\epsilon)(1+\epsilon')^{m+1}\hat{f}_1^{c_1} \ge (1+\epsilon)\left(\tilde{f}_0\oplus f_{\hat{\Ical}(1)}\right)^{c_1} = (1+\epsilon)f_{\hat{\Ical}(1)}^{c_1}\ge  f_{\Ical(1)}^{c_1} = f_1$, where the first inequality follows from Lemma~\ref{lem:02}, and the second inequality follows from the rounding of the rewards. For the induction step, assume that $(1+\epsilon)(1+\epsilon')^{(m+1)t}\tilde{f}_t\ge f_t$, we show that $(1+\epsilon)(1+\epsilon')^{(m+1)(t+1)}\tilde{f}_{t+1}\ge f_{t+1}$.
	
	After partitioning $\Ical(t+1) = \Ical(t+1)_0\sqcup \Ical(t+1)_1\sqcup\cdots\sqcup \Ical(t+1)_m$, for any item $i\in \Ical(t+1)$, by the rounding down, we have that $(1+\epsilon)\hat{r}_i\ge r_i\ge \hat{r}_i$, which further implies that $(1+\epsilon)f_{\hat{\Ical}(t+1)_j}\ge f_{{\Ical}(t+1)_j}\ge f_{\hat{\Ical}(t+1)_j}, \forall j=0,1,\ldots,m$. Thus, 
	\begin{align*}
	    &\quad (1+\epsilon)\left(f_{\hat{\Ical}(t+1)_0}\oplus f_{\hat{\Ical}(t+1)_1}\oplus\cdots \oplus f_{\hat{\Ical}(t+1)_m}\right)\ge f_{\Ical(t+1)}
	    \ge f_{\hat{\Ical}(t+1)_0}\oplus f_{\hat{\Ical}(t+1)_1}\oplus\cdots \oplus f_{\hat{\Ical}(t+1)_m},
	\end{align*}
	which, together with the induction assumption, implies that 
	$$
	(1+\epsilon)(1+\epsilon')^{(m+1)t}\left(\tilde{f}_t\oplus f_{\hat{\Ical}(t+1)_0}\oplus f_{\hat{\Ical}(t+1)_1}\oplus\cdots \oplus f_{\hat{\Ical}(t+1)_m}\right)\ge f_t \oplus f_{\Ical(t+1)}.
	$$
	By Lemma~\ref{lem:fapprox3}, after the inner ``for" loop in Algorithm~\ref{alg:MPBKP'}, we have that $(1+\epsilon')^{m+1}\hat{f}_{t+1}\ge \tilde{f}_t\oplus f_{\hat{\Ical}(t+1)_0}\oplus f_{\hat{\Ical}(t+1)_1}\oplus\cdots \oplus f_{\hat{\Ical}(t+1)_m}$, which implies that 
	\begin{align*}
	 (1+\epsilon)(1+\epsilon')^{(m+1)(t+1)}\hat{f}_{t+1}&\ge (1+\epsilon)(1+\epsilon')^{(m+1)(t+1)}\left(\tilde{f}_t\oplus f_{\hat{\Ical}(t+1)_0}\oplus f_{\hat{\Ical}(t+1)_1}\oplus\cdots \oplus f_{\hat{\Ical}(t+1)_m}\right)\\&\ge f_t \oplus f_{\Ical(t+1)}.
	\end{align*}
	Taking truncation on both sides, we conclude that
	\begin{align*}
	    (1+\epsilon)(1+\epsilon')^{(m+1)(t+1)}\tilde{f}_{t+1} &= (1+\epsilon)(1+\epsilon')^{(m+1)(t+1)}\hat{f}_{t+1}^{c_{t+1}}\ge \left(f_t \oplus f_{\Ical(t+1)}\right)^{c_{t+1}}= f_{t+1}.
	\end{align*}
	This finishes the induction step, and thus the proof of the lemma.
\end{proof}

Lemma~\ref{lem:fapprox2} and Proposition~\ref{prop:optimalfunc} together imply that $\tilde{f}_T(c_T)$, obtained from Algorithm~\ref{alg:MPBKP'}, approximates the optimal value of MPBKP~\eqref{MPBKP} by a factor of $(1+\epsilon)(1+\epsilon')^{(m+1)T} \approx (1+\epsilon+mT\epsilon')$. In Algorithm~\ref{alg:MPBKP'}, during each of the periods $t=1,\ldots,T$, approximately computing the $(\max,+)$-convolutions on $\hat{f}_{t}\oplus {f}_{\hat{\Ical}(t)_j}$ for all $j=0,1,\ldots,m$ take total time $\tilde{O}\left(n_t+ (m+1)/\epsilon'\right)$. Therefore, Algorithm~\ref{alg:MPBKP'} has total runtime $\tilde{O}\left(n+(m+1)T/\epsilon'\right)$. As a result, we have the following proposition. 
\begin{proposition}
	Taking $\epsilon = mT\epsilon'$ and $m=\tilde{\Ocal}(1/\epsilon)$, Algorithm~\ref{alg:MPBKP'} achieves $(1+\epsilon)$-approximation for MPBKP in $\tilde{O}\left(n+\frac{T^2}{{\epsilon}^3}\right)$.
\end{proposition}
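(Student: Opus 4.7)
The plan is to combine Lemma~\ref{lem:fapprox2} with Proposition~\ref{prop:optimalfunc} to obtain the approximation guarantee, and then to choose the free parameter $\epsilon'$ so that the total multiplicative error is $1+O(\epsilon)$ while the runtime stays $\tilde{\Ocal}(n+T^2/\epsilon^3)$. Concretely, by Proposition~\ref{prop:optimalfunc} the optimum of MPBKP equals $f_T(c_T)$, and by Lemma~\ref{lem:fapprox2} the output $\tilde{f}_T(c_T)$ of Algorithm~\ref{alg:MPBKP'} satisfies
\[
\tilde{f}_T(c_T) \le f_T(c_T) \le (1+\epsilon)(1+\epsilon')^{(m+1)T}\,\tilde{f}_T(c_T).
\]
So the first step is to bound $(1+\epsilon)(1+\epsilon')^{(m+1)T}$ by $1+O(\epsilon)$.

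Next, I would fix $m=\lceil\log_{1+\epsilon}(n^2/\epsilon)\rceil = \tilde{\Ocal}(1/\epsilon)$, and choose $\epsilon' = \Theta(\epsilon/((m+1)T))$, so that $(1+\epsilon')^{(m+1)T} \le e^{(m+1)T\epsilon'} = 1+\Ocal(\epsilon)$ for $\epsilon$ small; combining with the $(1+\epsilon)$ factor from initial reward rounding gives total multiplicative error $1+\Ocal(\epsilon)$. Rescaling $\epsilon$ by an appropriate constant then delivers the desired $(1+\epsilon)$-approximation.

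For the runtime, observe that the preprocessing in lines 1--4 (discarding tiny items, rounding rewards, grouping into the $m+1$ geometric classes $\Ical(t)_j$) takes $\tilde{\Ocal}(n)$ in total across all periods, since each item is handled once. Within the double loop, for each period $t$ and each class $j$, line 10 constructs the $r_0(1+\epsilon)^j$-uniform pseudo-concave step function $f_{\hat{\Ical}(t)_j}$ in $\Ocal(|\Ical(t)_j|)$ time (by a sort once done at preprocessing and taking prefix sums of sorted sizes), and line 11 invokes Lemma~\ref{lem:02} costing $\Ocal(\text{complexity in})+\tilde{\Ocal}(1/\epsilon')$. Since Lemma~\ref{lem:02} guarantees the output complexity is $\tilde{\Ocal}(1/\epsilon')$, the input complexity of $\hat{f}_t$ at every invocation is also $\tilde{\Ocal}(1/\epsilon')$. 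Summing over the $T(m+1)$ invocations yields $\tilde{\Ocal}(n + (m+1)T/\epsilon')$. Substituting $m=\tilde{\Ocal}(1/\epsilon)$ and $\epsilon' = \Theta(\epsilon/(mT))=\tilde{\Ocal}(\epsilon^2/T)$ gives $\tilde{\Ocal}(n + T^2/\epsilon^3)$ as claimed.

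The main obstacle I expect is the bookkeeping around the approximation factor: each of the $(m+1)T$ applications of Lemma~\ref{lem:02} multiplies the error by $(1+\epsilon')$, and since $m$ itself depends on $\epsilon$, one must choose $\epsilon'$ as a function of both $\epsilon$ and $T$ (and implicitly of $\log n$) so that the product telescopes to $1+\Ocal(\epsilon)$. A minor subtlety is confirming that truncation at $c_t$ preserves the approximation factor (it does, pointwise) and that the input $\hat{f}_t$ remains a monotone step function with range in $\{-\infty,0\}\cup[A,B]$ as required by Lemma~\ref{lem:02}; both follow immediately from the construction, but should be explicitly noted.
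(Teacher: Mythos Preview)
Your proposal is correct and follows essentially the same approach as the paper: combine Proposition~\ref{prop:optimalfunc} with Lemma~\ref{lem:fapprox2} to get the approximation factor $(1+\epsilon)(1+\epsilon')^{(m+1)T}$, set $\epsilon'=\Theta(\epsilon/(mT))$ with $m=\tilde{\Ocal}(1/\epsilon)$ to make this $1+\Ocal(\epsilon)$, and sum the $\tilde{\Ocal}(1/\epsilon')$ cost of Lemma~\ref{lem:02} over the $(m+1)T$ invocations to obtain $\tilde{\Ocal}(n+(m+1)T/\epsilon')=\tilde{\Ocal}(n+T^2/\epsilon^3)$. Your write-up is in fact more careful than the paper's own sketch, explicitly noting that the output complexity of each convolution stays $\tilde{\Ocal}(1/\epsilon')$ and that truncation preserves the pointwise approximation.
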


\section{Alternative FPTAS for MPBKP-S}\label{simple-MPBKP-S}

In this section, we provide an FPTAS for the MPBKP-S with time complexity $\mathcal{O}\left(\frac{n^2\log n}{\epsilon}\right)$. Following the classical approach for ``0-1'' knapsack problems (see, e.g., \cite{vazirani2013approximation}), we round down the reward of each item so that the optimal solution for the MPBKP under the new rounded rewards is upper bounded by some polynomial of $n$ and $1/\epsilon$, and thus the naive pseudo-polynomial dynamic program becomes a polynomial time algorithm.

We assume that the items are initially sorted and relabeled in the increasing order of their deadlines, i.e., $d_1\le d_2\le \cdots\le d_n$. Further, assume that we have a guess $P_0$ that satisfies~\eqref{P0}. Then, we choose a discretization quantum $\kappa:=\epsilon P_0/2n$ and define rounded rewards $\hat{r}_i:=\left\lfloor\frac{r_i}{\kappa}\right\rfloor_{\kappa}$. We then have $\Pcal(\mathcal{S}^*)\le \frac{4n}{\epsilon}\kappa$.

For a solution $\Scal =  \Scal(1) \cup \Scal(2) \cup \cdots \cup \Scal(T)$ where $\Scal(t)$ is the set of items with deadline $t$. Let the items in $\Scal(t)$ be indexed as $\Scal(t) = \left(i^{(t)}_1, \ldots, i^{(t)}_{S_t}\right)$ in the order in which Algorithm~\ref{alg:FPTAS_SC_1} considers them, we define the {\it rounded profit} of $\Scal$ as:
\begin{align} \label{Phat}
\hat{\Pcal}(\Scal) &= \hat{\Rcal}(\Scal) -
\sum_{t=1}^T  \sum_{k=1}^{S_t} \left \lceil B  \left( \sum_{\ell \leq k} q_{i^{(t)}_{\ell}} - \max_{0 \leq t' < t}\left\{ c_t - c_{t'} - \sum_{t'+1 \leq \tau <  t} \Qcal(\Scal(\tau)) \right\} \right)^+ \right \rceil_{\kappa} .
\end{align}
Let us also define a single period change in rounded profit for a set of items $\Scal = (i_1, \ldots, i_S)$ with knapsack capacity $c$ as:
\begin{align} \label{Phatc}
\Delta \hat{\Pcal}(\Scal,c) &= \hat{\Rcal}(\Scal) -
\sum_{k=1}^{S} \left \lceil B  \left( \sum_{\ell \leq k} q_{i_{\ell}} - c  \right)^+ \right \rceil_{\kappa} .
\end{align}

Let $\hat{A}(i,p)$ be the maximum capacity left at time $d_i$ when earning rounded profit at least $p$ using items $\{1,\ldots,i\}$ with rounded down rewards $\hat{r}$, equivalently,
\begin{align}\label{Ahat2}
\hat{A}(i,p) := \max_{\left\{\substack{\mathcal{S}\subseteq\{1,\ldots,i\}\\
		\hat{\Pcal}(\mathcal{S})\ge p}\right\}} \max_{0 \leq t' < d_i}\left\{ c_{d_i} - c_{t'} - \sum_{t'+1 \leq \tau \leq  d_i-1} \Qcal(\Scal(\tau)) \right\}.
\end{align}
If it is not possible to earn profit $p$ at time $d_i$ using items $\{1,\ldots,i\}$ with rounded down rewards, i.e., no $\mathcal{S}\subseteq\{1,\ldots,i\}$ exists such that $\hat{\Pcal}(\mathcal{S})\ge p$, then $\hat{A}(i,p)$ is labeled $-\infty$. The DP table runs for $i=1,\ldots,n$ and $p=0,\kappa, \ldots,\left\lceil\frac{4n}{\epsilon}\right\rceil\kappa$.
We then have Algorithm~\ref{alg:FPTAS_SC_1}, which returns an exact optimal solution of $\hat{\Pcal}(\Scal)$ under the rounded rewards and rounded penalties.
\begin{algorithm}[h]
	\footnotesize
	\caption{DP with rounded down rewards for MPBKP-S}
	\label{alg:FPTAS_SC_1}
	\algsetblock[Name]{Parameters}{}{0}{}
	\algsetblock[Name]{Initialize}{}{0}{}
	\algsetblock[Name]{Define}{}{0}{}
	\begin{algorithmic}[1]
		\Define \ $\kappa = \frac{\epsilon P_0 }{2n}$
		\Define \ $\hat{r}_i = \kappa \floor{ \frac{r_i}{\kappa} }$ \Comment Round down reward
		\Statex  \texttt{// $\hat{A}(i,p)=$ max capacity left at time $d_i$ when earning (rounded) profit at least $p$ by selecting items in $\{1,\ldots,i\}$ with rounded down rewards $\hat{r}$}
		\State Initialize $\hat{A}(0,p) = \begin{cases}
		0 & p = 0,\\
		-\infty & p > 0.
		\end{cases}$
		\For {$t=1,\ldots,T$}
		\State $i=I(t-1)+1$
		\For {$p = \left\{ 0, 1, \ldots,\left\lceil\frac{4n}{\epsilon}\right\rceil \right\} \cdot \kappa $}
		\State $\hat{A}(i,p) := \hat{A}(i-1,p)+c_{t}-c_{t-1}$ 
		\Comment If reject request $i$
		\EndFor
		\For {$\bar{p} = \left\{ 0, 1, \ldots,\left\lceil\frac{4n}{\epsilon}\right\rceil \right\} \cdot \kappa $}
		\State $p = \bar{p} + \hat{r}_i - \left\lceil B(q_i -{\color{black}\max\{0,\hat{A}(i-1,\bar{p})+(c_{t}-c_{t-1})\}})^+\right\rceil_{\kappa}$
		\State $\hat{A}(i, p) = \max\{ \hat{A}(i,p ), \hat{A}(i-1,\bar{p}) + (c_{t}-c_{t-1}) - q_i \}$ \Comment Accept $i$
		\EndFor
		\For{$p = \left\{\left\lceil\frac{4n}{\epsilon}\right\rceil,\left\lceil\frac{4n}{\epsilon}\right\rceil-1,\ldots,1  \right\}\cdot \kappa$}
		\vspace{0.1cm}
		\If {$\hat{A}(i,p-\kappa)<\hat{A}(i,p)$}
		\vspace{0.1cm}
		\State $\hat{A}(i,p-\kappa) = \hat{A}(i,p)$
		\EndIf
		\EndFor
		\For {$i=I(t-1)+2,\ldots, I(t)$}
		\For {$p = \left\{ 0, 1, \ldots,\left\lceil\frac{4n}{\epsilon}\right\rceil \right\} \cdot \kappa $}
		\State $\hat{A}(i,p) := \hat{A}(i-1, p)$ 
		\Comment If reject request $i$
		\EndFor
		\For {$\bar{p} = \left\{ 0, 1, \ldots,\left\lceil\frac{4n}{\epsilon}\right\rceil \right\} \cdot \kappa $}
		\State ${p} = \bar{p} + \hat{r}_i - \left\lceil B(q_i - {\color{black}\max\{0, \hat{A}(i-1,\bar{p})\}})^+\right\rceil_{\kappa}$
		\State $\hat{A}(i, p) = \max\{ \hat{A}(i,p ), \hat{A}(i-1,\bar{p}) - q_i \}$ \Comment Accept $i$
		\EndFor
		\For{$p = \left\{\left\lceil\frac{4n}{\epsilon}\right\rceil,\left\lceil\frac{4n}{\epsilon}\right\rceil-1,\ldots,1  \right\}\cdot \kappa$}
		\vspace{0.1cm}
		\If {$\hat{A}(i,p-\kappa)<\hat{A}(i,p)$}
		\vspace{0.1cm}
		\State $\hat{A}(i,p-\kappa) = \hat{A}(i,p)$
		\EndIf
		\EndFor
		\EndFor
		\EndFor
	\end{algorithmic}
\end{algorithm}
\begin{proof}[Proof of Correctness of Algorithm~\ref{alg:FPTAS_SC_1}]
	We show that $\hat{A}(i,p)$ returned by the algorithm satisfies \eqref{Ahat2} by induction on $i$. The base case ($i=0$) is vacuously true. Now we assume that \eqref{Ahat2} holds for all $p \in \left\{  0, 1, \ldots, \lceil 4n/\epsilon \rceil \right\}  \kappa$ and for all $k \in [i-1]$. Consider some $p \in \left\{  0, 1, \ldots, \lceil 4n/\epsilon \rceil \right\}  \kappa $, and let $\Scal^*$ be any set achieving the maximum in \eqref{Ahat2} so that $\hat{\Pcal}(\Scal) \ge p$. We will show that $\hat{A}(i,p)$ is at least the leftover capacity under solution $\Scal^*$ via case analysis:
	\begin{itemize}
		\item Case $i \notin \Scal^*$: In this case, the leftover capacity under $\Scal^*$ is the leftover capacity by $d_i$, which is the sum of leftover capacity in $\Scal^*$ by $d_{i-1}$ and $c_{d_i}-c_{d_{i-1}}$. By induction hypothesis, $\hat{A}(i-1,p)$ is no less than the leftover capacity of $\Scal^*$ by $d_{i-1}$, and therefore, by lines (7,11) and (20,24), $\hat{A}(i,p) \geq \hat{A}(i-1,p) + c_{d_i}-c_{d_{i-1}}$ which in turn is no less than the leftover capacity under $\Scal^*$ by $d_i$. By optimality of $\Scal^*$, all the inequalities must be equalities.
		\item Case $i \in \Scal^*$: Let $\Scal' = \Scal^* \setminus \{ i\}$, and let $p' = \hat{\Pcal}(\Scal')$ be its rounded profit. Then by induction hypothesis, $\hat{A}(i-1,p')$ is no less than the leftover capacity under $\Scal'$ by $d_{i-1}$. Further, by packing item $i$ in the solution corresponding to $\hat{A}(i-1,p')$, the change in profit is larger than by packing item $i$ in $\Scal'$ (the penalty is no less under $\Scal'$ since it has weakly smaller leftover capacity). Therefore, packing item $i$ in the solution corresponding to $\hat{A}(i-1,p')$ gives a solution with at least as large a rounded profit as $p$ and at least as much leftover capacity by $d_i$ as $\Scal^*$. Therefore, in turn $\hat{A}(i,p)$ is at least as much as the leftover capacity in $\Scal^*$. Since we assume $\Scal^*$ to have the largest leftover capacity with profit at least $p$, all the inequalities must be equalities.
	\end{itemize}
\end{proof}
Our next result gives the approximation guarantee for Algorithm~\ref{alg:FPTAS_SC_1}.
\begin{lemma}\label{lem:FPTAS-SC}
	Let $\mathcal{S}^*$ be the optimal solution set to the original MPBKP-S, and $P_0$ satisfy \eqref{P0}. Let $\mathcal{S}'$ denote the optimal solution set by Algorithm~\ref{alg:FPTAS_SC_1}, i.e., $\mathcal{S}'$ is the solution set corresponding to $\hat{A}(n,p^*)$ where $p^*$ is the maximum $p$ such that $\hat{A}(n,p)>-\infty$.   Then,
	$$
	\Pcal(\mathcal{S}')\geq p^* \ge (1-\epsilon)\Pcal(\mathcal{S}^*).
	$$
\end{lemma}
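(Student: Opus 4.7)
The plan is to prove the two inequalities $\Pcal(\Scal') \geq p^*$ and $p^* \geq (1-\epsilon)\Pcal(\Scal^*)$ separately. Both rely on the key monotonicity fact that the rounding defining $\hat{\Pcal}$ satisfies $\hat{\Pcal}(\Scal) \leq \Pcal(\Scal)$ for every subset $\Scal$, together with a tight per-set rounding-loss bound $\Pcal(\Scal) - \hat{\Pcal}(\Scal) \leq 2n\kappa$.

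For the first inequality, I would note that each rounded reward satisfies $\hat{r}_i = \kappa \lfloor r_i/\kappa \rfloor \leq r_i$, while each penalty ceiling in~\eqref{Phat} satisfies $\lceil B(\cdot)^+\rceil_\kappa \geq B(\cdot)^+$. Consequently $\hat{\Pcal}(\Scal) \leq \Pcal(\Scal)$ for every $\Scal$, and applying this to the set $\Scal'$ returned by Algorithm~\ref{alg:FPTAS_SC_1} gives $\Pcal(\Scal') \geq \hat{\Pcal}(\Scal') = p^*$.

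For the second inequality, I first invoke the correctness statement for Algorithm~\ref{alg:FPTAS_SC_1} established above: for every $\Scal \subseteq [n]$ the table entry $\hat{A}(n, \hat{\Pcal}(\Scal))$ is finite, so in particular $p^* \geq \hat{\Pcal}(\Scal^*)$. Next I bound the rounding gap. For $\Scal^*$, each of the at most $n$ rewards loses at most $\kappa$ from rounding down, and each item of $\Scal^*$ contributes exactly one ceiling in the penalty sum of~\eqref{Phat} at its unique deadline, each inflating by at most $\kappa$. This yields $\Pcal(\Scal^*) - \hat{\Pcal}(\Scal^*) \leq 2n\kappa = \epsilon P_0$ using $\kappa = \epsilon P_0/(2n)$. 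Assumption~\eqref{P0} then gives $\epsilon P_0 \leq \epsilon \Pcal(\Scal^*)$, so
\begin{align*}
p^* \;\geq\; \hat{\Pcal}(\Scal^*) \;\geq\; \Pcal(\Scal^*) - \epsilon \Pcal(\Scal^*) \;=\; (1-\epsilon)\Pcal(\Scal^*).
\end{align*}

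The principal obstacle is the careful per-item bookkeeping of the penalty rounding: I need the inner ceiling operations in~\eqref{Phat} to admit a one-to-one correspondence with items of $\Scal^*$ (rather than accumulating super-linearly through the cumulative-size argument $\sum_{\ell \leq k} q_{i^{(t)}_\ell}$). Each $(t,k)$ pair indexes a distinct item, so the total number of ceiling operations applied to $\Scal^*$ is $|\Scal^*| \leq n$; combined with the $\kappa$-bounded error introduced by each ceiling, the $2n\kappa$ bound follows, and the remainder of the estimate is routine arithmetic chained with~\eqref{P0}.
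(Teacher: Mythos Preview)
Your proposal is correct and follows essentially the same approach as the paper's own proof: both establish $\hat{\Pcal}(\Scal)\le\Pcal(\Scal)$ from the directions of rounding, use DP optimality to get $p^*\ge\hat{\Pcal}(\Scal^*)$, bound $\Pcal(\Scal^*)-\hat{\Pcal}(\Scal^*)\le 2n\kappa=\epsilon P_0$ by counting at most $n$ reward roundings and at most $n$ penalty ceilings, and finish with~\eqref{P0}. Your explicit justification that the ceiling operations in~\eqref{Phat} are in one-to-one correspondence with items of $\Scal^*$ is a bit more careful than the paper's terse ``at most $n$ rounding ups on the penalties,'' but the argument is the same.
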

\begin{proof}[Proof of Lemma~\ref{lem:FPTAS-SC}]
	For any item $i$, because of rounding down, $\hat{r}_i$ is smaller than $r_i$. Also there are at most $n$ rounding ups on the penalties in $\Scal^*$, each by not more than $\kappa$. Then,
	$$
	\Pcal(\Scal^*)-\hat{\Pcal}(\Scal^*)\leq  2n\kappa.
	$$
	The dynamic programming step must return a set, $\Scal'$, at least as good as $\Scal^*$ under the new profit. Therefore,
	\begin{align*}
	\Pcal(\Scal') &\geq \hat{\Pcal}(\Scal')=p^* \geq \hat{\Pcal}(\Scal^*)\geq \Pcal(\Scal^*)-2n\kappa=\Pcal(\Scal^*)-\epsilon P_0\geq (1-\epsilon)\Pcal(\Scal^*),
	\end{align*}
	where first inequality follows because the rewards are rounded down and the penalties are rounded up in calculation of $\hat{\Pcal}$, second inequality follows because $\Scal'$ is the optimal set for objective $\hat{\Pcal}$, the third inequality follows because $|\Scal^*|\leq n$ and $T\le n$, and the last inequality follows from~\eqref{P0} that $\Pcal(\Scal^*)\geq P_0$.
\end{proof}

It remains to find $P_0$ which satisfies~\eqref{P0}. Since $\Pcal(\Scal^*)\le \bar{P}$, we can enumerate $P_0$ from $\bar{P}/2, \bar{P}/4, \bar{P}/8,\ldots$, and one of them must satisfy~\eqref{P0}. The FPTAS is presented as Algorithm~\ref{alg:FPTAS2}.

\begin{algorithm}[H]
	\footnotesize
	\caption{FPTAS for MPBKP-S in $\mathcal{O}(n^2\log n/\epsilon)$}
	\label{alg:FPTAS2}
	\algsetblock[Name]{Parameters}{}{0}{}
	\algsetblock[Name]{Initialize}{}{0}{}
	\algsetblock[Name]{Define}{}{0}{}
	\begin{algorithmic}[1]
		\State $P_0\gets {\bar{P}}$
		\State $p^*\gets 0$
		\While {$p^*<(1-\epsilon)P_0$}
		\vspace{0.1cm}
		\State $P_0\gets \frac{P_0}{2}$
		\vspace{0.1cm}
		\State	Run Algorithm~\ref{alg:FPTAS_SC_1} with the current $P_0$.
		\State $p^*\gets \max_{\left\{\substack{p\in \left\{ 0,\ldots,\left\lceil\frac{4n}{\epsilon}\right\rceil \right\} \cdot \kappa\\ \hat{A}(n,p)>-\infty}\right\}}p$
		\EndWhile
	\end{algorithmic}
\end{algorithm}

\begin{theorem}\label{thm:FPTAS2}
	Algorithm~\ref{alg:FPTAS2} is a fully polynomial approximation scheme for the MPBKP-S, which achieves $(1-\epsilon)$ factor of optimal with running time $\mathcal{O}\left(\frac{n^2\log n}{\epsilon}\right)$.
\end{theorem}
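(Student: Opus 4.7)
\textbf{Proof proposal for Theorem~\ref{thm:FPTAS2}.} The plan mirrors the argument for Theorem~\ref{main:MPBKP-S} (Lemmas~\ref{while4} and~\ref{MPBKPSCfinallemma}) since Algorithm~\ref{alg:FPTAS2} uses the same doubling-trick structure as Algorithm~\ref{alg:SC_FPTAS}, just with Algorithm~\ref{alg:FPTAS_SC_1} as the subroutine instead of Algorithm~\ref{alg:FPTAS_nTlogn2}. First I would establish termination of the while loop in at most $\lceil \log n \rceil$ iterations: once $P_0$ falls into the range $[\Pcal(\Scal^*)/2, \Pcal(\Scal^*)]$, condition \eqref{P0} is met, so Lemma~\ref{lem:FPTAS-SC} yields $p^* \geq (1-\epsilon)\Pcal(\Scal^*) \geq (1-\epsilon)P_0$ and the loop exits. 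Since the initial $P_0 = \bar{P} \leq nP \leq n\,\Pcal(\Scal^*)$ and $P_0$ halves each iteration, the number of iterations is at most $\log_2(\bar{P}/(\Pcal(\Scal^*)/2)) \leq \log_2(2n) = \mathcal{O}(\log n)$.

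Next I would verify the approximation guarantee. There are two cases depending on when the loop halts. If it halts with some $P_0 > \Pcal(\Scal^*)$ (i.e., earlier than \eqref{P0} becomes active), then the termination condition $p^* \geq (1-\epsilon)P_0 > (1-\epsilon)\Pcal(\Scal^*)$ together with $\Pcal(\Scal') \geq p^*$ directly gives the bound. Otherwise, by the argument above the loop must exit on the first $P_0$ falling below $\Pcal(\Scal^*)$, and that $P_0$ satisfies \eqref{P0}; applying Lemma~\ref{lem:FPTAS-SC} gives $\Pcal(\Scal') \geq (1-\epsilon)\Pcal(\Scal^*)$ as desired.

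For the runtime, a single call to Algorithm~\ref{alg:FPTAS_SC_1} processes $n$ items and for each item performs $\mathcal{O}(\lceil 4n/\epsilon \rceil)$ updates over the profit grid (the accept/reject/monotonization loops on lines 4--13 and 15--24 all run in time $\mathcal{O}(n/\epsilon)$ per item). Thus one call costs $\mathcal{O}(n^2/\epsilon)$. Multiplying by the $\mathcal{O}(\log n)$ outer iterations yields the claimed $\mathcal{O}(n^2 \log n / \epsilon)$ running time, which is polynomial in $n$ and $1/\epsilon$, completing the FPTAS claim.

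The main obstacle is essentially bookkeeping rather than conceptual: one must carefully verify that the analysis of Lemma~\ref{lem:FPTAS-SC} extends to the case $P_0 > \Pcal(\Scal^*)$ (where the rounding quantum $\kappa$ is larger than needed but the algorithm still produces a feasible solution whose rounded profit is at most $p^*$), and to confirm that when $P_0$ is too large, the while-loop's exit test $p^* \geq (1-\epsilon)P_0$ provides the correct bookkeeping. Since $\Pcal(\Scal') \geq p^*$ always holds (the algorithm's rounded profit lower-bounds the true profit), both cases collapse cleanly and no additional machinery is required beyond what is already used for Theorem~\ref{main:MPBKP-S}.
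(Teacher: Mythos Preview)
Your proposal is correct and follows essentially the same approach as the paper's own proof: both establish termination of the while loop in $\mathcal{O}(\log n)$ iterations via the doubling argument and Lemma~\ref{lem:FPTAS-SC}, handle the approximation ratio by the same two-case analysis (loop exits with $P_0>\Pcal(\Scal^*)$ versus $P_0$ satisfying~\eqref{P0}), and compute the runtime as $\mathcal{O}(n^2/\epsilon)$ per call times $\mathcal{O}(\log n)$ iterations.
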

\begin{proof}[Proof of Theorem~\ref{thm:FPTAS2}]
	{\bf Time complexity: }
	When $P_0$ satisfies~\eqref{P0}, by Lemma~\ref{lem:FPTAS-SC} we have that
	$$
	p^*\ge (1-\epsilon) \Pcal(\Scal^*)\ge (1-\epsilon)P_0.
	$$
	Thus, the ``while" loop terminates when $P_0$ satisfies~\eqref{P0}, if not before $P_0$ satisfies~\eqref{P0}. When $P_0$ satisfies~\eqref{P0}, we would also have $\Pcal(\Scal^*)/2\le P_0\le \Pcal(\Scal^*)$. Therefore, the number of iterations is upper bounded by
	$$
	\text{number of iterations}\le \log\frac{\bar{P}/2}{\Pcal(\Scal^*)/2}\le \log n,
	$$
	where we have used the fact that $\bar{P}\le nP\le n\Pcal(\Scal^*)$. Since each iteration takes time $\mathcal{O}\left(n\cdot \left\lceil\frac{4n}{\epsilon}\right\rceil \right)$ we get a total time complexity of $\mathcal{O}\left(\frac{n^2\log n}{\epsilon}\right)$.
	
	{\bf Approximation ratio:} 
	When Algorithm~\ref{alg:FPTAS2} terminates, it returns the last $p^*$ and the solution set $\mathcal{S}'$ corresponding to $\hat{A}(n,p^*)$.
	If the ``while" loop terminates when $P_0>\Pcal(\Scal^*)$, i.e., it stops before $P_0$ falls below $\Pcal(\Scal^*)$, then we have that
	$$
	\Pcal(\Scal')\ge p^*\ge (1-\epsilon)P_0>(1-\epsilon)\Pcal(\Scal^*).
	$$
	Otherwise, from the time complexity analysis, we know that the ``while" loop must terminate when $P_0$ first falls below $\Pcal(\Scal^*)$, which implies that the last $P_0$ satisfies~\eqref{P0}. Then by Lemma~\ref{lem:FPTAS-SC} we again have that 
	$$
	\Pcal(\Scal')\ge (1-\epsilon)\Pcal(\Scal^*).
	$$
	In either case, the solution we obtained from Algorithm~\ref{alg:FPTAS2} achieves $(1-\epsilon)$ optimal.
	$(1-\epsilon)$ factor of $\Pcal(\Scal^*)$. 
\end{proof} 
		
	\end{appendix}


	
	%
	%
	
\end{document}